\newtheorem{thm}{Theorem}
\newtheorem{lem}{Lemma}
\newtheorem{prop}{Proposition}
\newtheorem{dfn}{Definition}
\newtheorem{ex}{Example}
\title{Identification of Essential Proteins Using Induced Stars in Protein-Protein Interaction Networks}
\author{Chrysafis Vogiatzis%
  \thanks{E-mail: \texttt{cvogiatzis@ncat.edu}; Corresponding author}}
\affil{Department of Industrial \& Systems Engineering,\\ North Carolina A\&T State University,\\ Greensboro, NC, USA}
\author{Mustafa Can Camur%
  \thanks{E-mail: \texttt{camurm@rpi.edu}}}
\affil{Department of Industrial \& Systems Engineering,\\ Rensselaer Polytechnic Institute,\\ Troy, NY, USA}
\date{}
\begin{document}
\maketitle
\begin{abstract}%
In this work, we propose a novel centrality metric, referred to as \emph{star centrality}, which incorporates information from the closed neighborhood of a node, rather than solely from the node itself, when calculating its topological importance. More specifically, we focus on \emph{degree} centrality and show that in the complex protein-protein interaction networks it is a naive metric that can lead to misclassifying protein importance. For our extension of degree centrality when considering stars, we derive its computational complexity, provide a mathematical formulation, and propose two approximation algorithms that are shown to be efficient in practice. We portray the success of this new metric in protein-protein interaction networks when predicting protein essentiality in several organisms, including the well-studied \emph{Saccharomyces cerevisiae}, \emph{Helicobacter pylori}, and \emph{Caenorhabditis elegans}, where star centrality is shown to significantly outperform other nodal centrality metrics at detecting essential proteins. We also analyze the average and worst case performance of the two approximation algorithms in practice, and show that they are viable options for computing star centrality in very large-scale protein-protein interaction networks, such as the human proteome, where exact methodologies are bound to be time and memory intensive.
\smallskip

\noindent \textbf{Keywords.} centrality; protein-protein interaction networks; complex network analysis

\end{abstract}%

\section{Introduction}

Protein-protein interaction networks are mathematical constructs where every protein is represented by a vertex, with two vertices connected by an edge whenever the corresponding proteins interact. Typically, with every edge we associate a {\em weight}, that captures the strength of the interaction. These constructs have enabled complex network analysis and graph theoretic tools in purely biological problems. For example, we now possess novel computational tools to detect protein complexes \citep{li2010computational,mitra2013integrative}, predict protein roles and essentiality \citep{typas2015bacterial,ren2011prediction,li2010network}, among others. 

Another advancement that has led to an increasing interest in such biological problems is the availability of large-scale biological data. Nowadays, there are multiple databases containing information based on years of biological experimentation on protein interactions. Indicatively, we mention the curated collections of proteomic data made readily available by \cite{franceschini2013string,szklarczyk2014string}, \cite{pagel2005mips}, and \cite{salwinski2004database}. In essence, complex network theory and tools, coupled with an unprecedented growth in the data available for analysis has led to significant scientific interest in this field of computational biology focused on the study of protein-protein interaction networks.

% Complex network theory has been a driving force for numerous applications in recent years. Many disciplines have observed a paradigm shift after incorporating complex network analysis and graph theoretic tools to interpret their results. One such field is computational biology; the use of complex network analysis has recently enabled researchers with novel tools to detect protein complexes \citep{li2010computational,mitra2013integrative}, analyze protein essentiality \citep{ren2011prediction}, and predict protein functionality \citep{typas2015bacterial}, among others. Moreover, the fact that large-scale databases are readily available (we indicatively mention the curated collections of proteomic data by \cite{franceschini2013string,szklarczyk2014string}, \cite{pagel2005mips}, and \cite{salwinski2004database}) has led to significant scientific interest in the field in an attempt to tackle the computational and biological challenges inherent to the study of protein-protein interaction networks. 

The challenge we aim to tackle in this work can be summarized as follows: {\em does there exist a network topology metric that captures the importance of a single protein in the grand scheme of the proteome}? We use the term {\em proteome} to describe the complete universe of proteins and their interactions in an organism. The challenge we are focusing on is not new, as it has attracted numerous researchers and has led to the investigation of various metrics, ranging from graph modularity \citep{narayanan2011modularity} to centrality \citep{hahn2005comparative}. Being able to use such objective metrics for studying the proteome is of importance, as it can lead us to the detection of informal groups in the interaction network \citep{pereira2004detection}. 

With the term ``{\em detection of informal groups}" we mean the detection of sets or clusters of proteins, based only on their interactions and the topological structure, and no other externally available information. Such detection techniques would enable us with objective methods of measuring protein importance in the proteome independently of other biological experiments and could guide future experimentation. In general, topological importance (also broadly referred to as \emph{centrality}) is a well-studied topic in complex networks, including protein-protein interaction networks. In our work, though, we propose a novel centrality metric for each protein in the network. This metric aims to capture both the individual interactions of every protein, as well as the interactions of its open neighborhood, when disregarding neighboring nodes that are connected to one another, hence forming an induced star. We refer to this centrality as \emph{star centrality}. 

\subsection{Outline}

We first provide a review of protein essentiality, along with the definition of ``party" and ``date" hubs. In the same part, we also discuss previous computational tools, both centrality-based and others, in detecting essential proteins. In Section \ref{notation}, we present the basic notation we will be using throughout the paper, define the problem, and provide its computational complexity. Then, Section \ref{model} focuses on our mathematical programming framework; in the same section, we propose greedy heuristic approaches for tackling the problem faster and provide their approximation guarantees. Section \ref{results} (supplemented by the Appendix of the paper) presents our computational study on five protein-protein interaction networks, namely \emph{Saccharomyces cerevisiae} (yeast), \emph{Helicobacter pylori}, \emph{Staphylococcus aureus}, \emph{Salmonella enterica CT18}, and \emph{Caenorhabditis elegans}. The performance of the approximation algorithms is also contrasted to the exact solution. We conclude this work with our observations and our insights in Section \ref{conclusions}. 

\subsection{Protein-protein interaction networks}

Protein-protein interaction networks (PPINs) have become, mostly over the last decade, an important point of discussion in our quest to better understand and analyze how and why proteins interact with one another. As proteins are fundamental entities that control numerous biological activities, information on how they bind and interact to perform said activities is an important scientific endeavor that can bring to light insight into cell mechanisms.

Before proceeding to the main body of the related literature, we briefly discuss how and where PPINs are made available. The first step towards creating a PPIN is to experimentally discover and validate pairs of proteins that interact. Even though there exists a wide range of genetic and biochemical tools to detect such interactions \citep{peng2016protein}, two of the most common systems are yeast two hybrid (Y2H) \citep{sardiu2011building} and coaffinity purification and mass spectrometry (AP/MS) systems \citep{teng2014network}. After further analysis, a collection of the identified interactions comprises the overall network that can be used. PPINs are now readily available from many different databases, such as the ones by \cite{xenarios2000dip,zanzoni2002mint,pagel2005mips,franceschini2013string,chatr2013biogrid}, among others. However, it has been observed that such networks are unfortunately not without errors \citep{legrain2000genome,sprinzak2003reliable,hart2006complete}.

A fundamental question in the analysis of PPINs (as well as in general biological and other networks) is whether there exist proteins (entities) that can significantly alter cell functionality (or, can even cause cell death). A protein is said to be \emph{essential} or \emph{lethal} when, if absent, it causes the biological cell to die \citep{kamath2003systematic} or prevents it from properly reproducing. Such proteins are indispensable for growth and development and identifying them is important for better understanding the minimal requirements for cell life \citep{Acencio2009}. Moreover, essential proteins provide insight in human gene morbidity: \cite{wilson1977biochemical} showed that proteins encoded by essential genes evolve slower than their non-essential counterparts, while \cite{doi:10.1093/nar/gkh330} proved the existence of similarities between human morbid genes and essentiality in Drosophila melanogaster (fruit fly). The study of essential proteins was and still is typically performed experimentally; however, those experiments tend to be expensive, both resource- and time-wise \citep{tang2014predicting}. Examples of such experimental techniques include conditional gene knockout \citep{skarnes2011conditional} and RNA interference \citep{cullen2005genome}. Nowadays, with the availability of vast amounts of proteomic data, information on essentiality of proteins is also increasing: for instance, we refer the reader to the curated Database of Essential Genes, or DEG \citep{zhang2004deg,zhang2009deg,luo2013deg}.

It has been observed that the study of protein essentiality can be targeted to only a select number of proteins (or, equivalently, proteins can be discarded from contention) using quantitative techniques. In 2001, \cite{jeong2001lethality} introduced the {\em centrality-lethality} rule, where lethality can be used as proxy for essentiality. This was the first work to make the observation that network topology (and specifically, centrality) can suggest essentiality; therein, the authors focus on degree centrality and show that it is a good indicator of protein essentiality. \cite{hahn2005comparative} were able to show that there is a relationship between the position of a protein in the network and its evolution rate. Seeing as essential genes do tend to evolve slower \citep{wilson1977biochemical} and essential proteins are products of essential genes, proteins that are centralized, regardless of degree, are more prone to being essential. The interested reader is also referred to the work by \cite{zotenko2008hubs} which introduces a new explanation on why hubs tend to be essential: their essentiality stems from their participation in {\em groups} of highly connected proteins that are also enriched in essential proteins. A \emph{hub} is defined as a protein with many interactions. Seeing as this definition is very open-ended, some researchers use different threshold values for the number of interactions. For instance, in the work of \cite{han2004evidence}, a hub is defined as a protein with more than 5 interactions. The question of why hubs appear more prone to being essential is also investigated by \cite{he2006hubs}.

In general, {\em degree centrality} (or the number of interactions of a protein) has been investigated in a series of works \citep{jeong2001lethality,han2004evidence,yu2004genomic}. Other centrality metrics that have been investigated over the years include {\em betweenness centrality} \citep{joy2005high}, {\em closeness centrality} \citep{wuchty2003centers}, {\em bipartivity} \citep{estrada2006protein}. In a comprehensive computational study by \cite{estrada2006virtual}, it was shown that selecting the top ranked proteins according to different centrality metrics always results in a better predictor of essentiality than the random selection. 

On top of the above centrality-based approaches, a number of techniques have recently been proposed that aim to incorporate different metrics and network characteristics. As an example, ``bottleneck'' proteins and their relationship to essential proteins is investigated by \cite{yu2007importance}. In their contribution, \cite{ren2011prediction} predict essential proteins by incorporating information from the subgraphs and protein complexes each protein belongs to. In the spirit of weighing information from different sources and metrics, \cite{chua2008unified} propose an integrated, unified weighing scheme to detect protein essentiality. Other weighing schemes that balance information from both centrality and other network topology metrics are due to \cite{li2010essential}, \cite{li2014effective}, \cite{jiang2015essential}. Last, balancing information from a variety of sources in order to cancel out the effects of erroneous data present is proposed by \cite{tang2014predicting}.

%The PPIN that has been most well-studied is the one belonging to \emph{C. Elegans}. The reason behind it is mostly the fact that there exist a great deal of similarity between human and \emph{C. Elegans}; more than 50\% of the genes present in \emph{C. Elegans} are also present in their homologue form in the human genome, as observed by \cite{kamath2003systematic}. Another well-studied PPIN belongs to the \emph{S. Cerevisiae} organism; interestingly, this dataset has also been employed to showcase the effectiveness of node criticality \citep{veremyev2015critical}.

%%%%%%%%%%%%%%%%%
%% PARTY AND DATE HUBS 
%%%%%%%%%%%%%%%%%

\cite{han2004evidence} also investigate another, very important protein characterization, one between proteins that interact with all their neighbors simultaneously and ones that interact with their partners in different times and/or locations, also referred to informally as ``party" and ``date" hubs, respectively. More formally stated, ``party" hubs show high co-expression with their partners, while ``date" hubs exhibit the opposite. An example of how the definition of ``party" and ``date" hubs would look like in a toy network is shown in Figure \ref{partyVSdate}. This computational discovery has been met with scrutiny by the scientific community and has led to a general debate on whether this classification of proteins actually helps us decode the proteome \citep{mirzarezaee2010features}. In general, though, this hypothesis has led to significant interest in connecting graph theoretic notions to PPIN analysis (see, e.g., the works by \cite{agarwal2010revisiting} and \cite{gursoy2008topological}, among others). 

\begin{figure}[htbp]
\centering
{\includegraphics{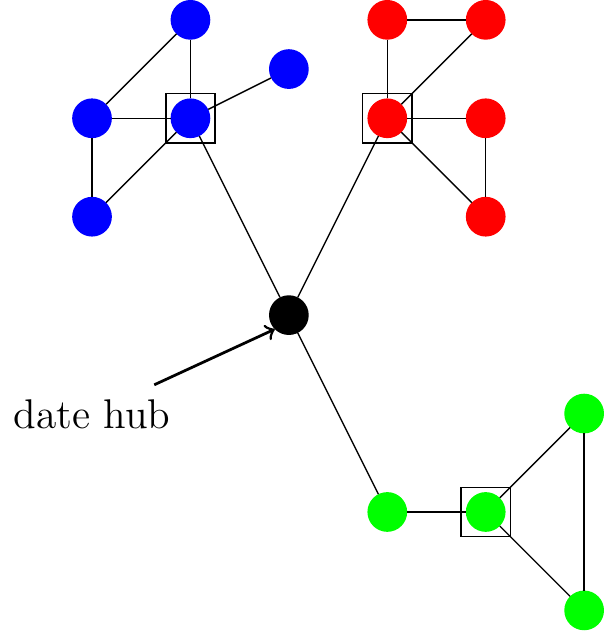}}
\caption{An example of how ``party" and ``date" hubs would appear as in a PPIN. The nodes of blue, red, and green color represent three different structures/complexes in a generated PPIN. The ``party" hubs are marked with a square, while the ``date" hub is annotated on the Figure. Observe that the ``date" hub possesses a smaller number of interactions (smaller degree) than some of the ``party" hubs in this example. \label{partyVSdate}}
\end{figure}

%%%%%%%%%%%%%%%%%
%% PARTY AND DATE HUBS 
%%%%%%%%%%%%%%%%%

A specific extension that is of interest to us has to do with \emph{group} centrality. Recently, we have seen more work that focuses on extending centrality notions to a group of nodes in the network \citep{everett1999centrality,everett2005extending,borgatti2006identifying}. This extension enables us with notions of endogenous and exogenous centrality \citep{everett2010induced}, where a network property is taken and measured after node/edge deletion, and also provides us with a tool to consider clusters of nodes and figure out their topological importance. An integer programming formulation for detecting informal, cohesive groups with high and low centrality was presented by \cite{vogiatzis2015integer}. 

Seeing as centrality has been a recurring theme in the study of biological networks, and more specifically, PPINs, we propose to investigate {\em group centrality} in this context. Centrality has indeed proven an important characteristic of PPINs, despite the existing caveats with nodal metrics. First, assigning importance to a single protein (resp. interaction), instead of a set of proteins (resp. interactions) tends to favor those proteins that participate in large, dense complexes. Secondly, the datasets of PPINs are still not error-free \citep{hart2006complete}; assuming complete information can lead to significant misattributions of importance. Last, some proteins that present low co-expression with their interacting partners would be disregarded by such metrics even though they might have a significant role in coordinating different complexes (e.g., ``date" hubs). We will then contrast the performance of our proposed metric to nodal centrality metrics (degree, betweenness, closeness, eigenvector), while at the same time, showing that it alleviates all the above issues. We can now proceed to formally state the notation and the definition of the problem in the next section.

\section{Fundamentals}
\label{notation}

Let $G(V, E)$ represent a simple, undirected graph with a vertex set $V$ of size $|V|=n$ nodes and an edge set $E\subset V\times V$ of size $|E|=m$. We say that two nodes $i,j\in V$ are connected by an edge if the adjacency matrix entry $a_{ij}=1$; otherwise we have that $a_{ij}=0$. Seeing as the graphs considered here are undirected, the adjacency matrix is symmetric. We further consider a positive weight parameter on the edges of the graph, $w_e: E\mapsto \mathbb{R}, \forall e=(i,j)\in E$. Furthermore, the open neighborhood of a node $i\in V$ is defined as $N(i)=\{j\in V: (i,j)\in E\}$; similarly, the closed neighborhood of a node $i$ is defined as $N[i]=N(i)\cup \{i\}$. The definition can be extended to apply for sets of nodes $S\subseteq V$, as $N(S)=\{j\in V\setminus S: (i,j)\in E \textit{ for some } i\in S\}$. The notion of (open) neighborhood is sometimes generalized to include nodes that are reachable within at most $k$ hops. This neighborhood is represented here by $N^k(i)$: for example, the complete set of nodes reachable by $i\in V$ within at most 2 hops would be denoted as $N^2(i)$. Using the above definitions, node degree centrality can be easily represented as $$\mathcal{C}^d(i)=|N(i)|.$$

We also define the subgraph induced by a set of nodes $S$, $G[S]$ as the subgraph of $G$ with a vertex set $V[G[S]]=S$ and an edge set $E[G[S]]=\{(i,j)\in E: i, j\in S\}$. We further say that a set of nodes $S$ forms an \emph{induced star} if the induced subgraph of $S$ has exactly one node of degree $|S|-1$ and $|S|-1$ nodes of degree 1. 

\subsection{Problem definition}

In this work, we define a centrality measure that incorporates information from the centrality of the open neighborhood, instead of relying solely on the considered node. More specifically, we focus on degree centrality:

\begin{dfn}
The star degree centrality of a node $i$ is the degree centrality of the induced star $S$ centered at $i$ that produces the maximum open neighborhood size of $S$.
\end{dfn}

Formally, this can be expressed as in \eqref{starcentrality}.

\begin{align}
	\label{starcentrality}
	\mathcal{C}^s (i) = \max\{|N(S)|: S\subseteq V \textit{ forms an induced star centered at } i\in V\}
\end{align}

We proceed to provide two examples for better problem representation. They are described in Examples \ref{partyVSdateExample} and \ref{partyVSdateExample2}.

\begin{ex}
\label{partyVSdateExample}
As an example, let us return to the graph of Figure \ref{partyVSdate}. Consider first the portrayed date hub in the middle. There are eight possible induced star configurations with the date hub at their center (including the configuration consisting of the center itself with no other nodes as ``leaves''). As an example, consider the configuration consisting of the center itself, the neighboring blue node, and the neighboring green node. The open neighborhood size of this star configuration is equal to 6 (the four blue neighbors, the red neighboring node, and the green neighbor). Doing this for all possible configurations reveals that the induced star centered at the date hub, that produces the maximum open neighborhood size would be either the set $S$ consisting of the date hub, and the blue and red party hubs, or it could also include the green node in the lower right connection of the date hub (both have an open neighborhood size of 9). Hence, the star centrality of the date hub is equal to 9. 

\begin{comment}They are: 

\begin{enumerate}
	\item the center itself, with an open neighborhood size ($|N(S)|$) of 3 (the blue, green, and red nodes adjacent to the center);
	\item the center and the neighboring blue node, with $|N(S)|=6$ (the four blue neighbors, and the red and green neighboring nodes to the center);
	\item the center and the neighboring red node, with $|N(S)|=6$;
	\item the center and the neighboring green node, with $|N(S)|=3$;
	\item the center and the neighboring blue and green nodes, with $|N(S)|=6$;
	\item the center and the neighboring blue and red nodes, with $|N(S)|=9$;
	\item the center and the neighboring red and green nodes, with $|N(S)|=6$; 
	\item the center and the neighboring blue, red, and green nodes, with $|N(S)|=9$.
\end{enumerate}
\end{comment}

%Hence, the induced star centered at the date hub, that produces the maximum open neighborhood size would be either the set $S$ consisting of the date hub, and the blue and red party hubs (configuration 6), or it could also include the green node in the lower right connection of the date hub (configuration 8 also has a star centrality value of 9). 

Now, consider the blue party hub, which originally has the biggest degree (along with the red party hub). Their star centrality values can be found by considering the induced star centered at the blue (red) party hub and the date hub itself (having a value of 6): adding any of the other neighbors only serves to decrease that value. Last, let us consider the green party hub. For that node, we can easily verify that its degree and star centrality match (and are both equal to 3). 
\end{ex}

\begin{figure}[htbp]
\centering
{\includegraphics[scale=0.75]{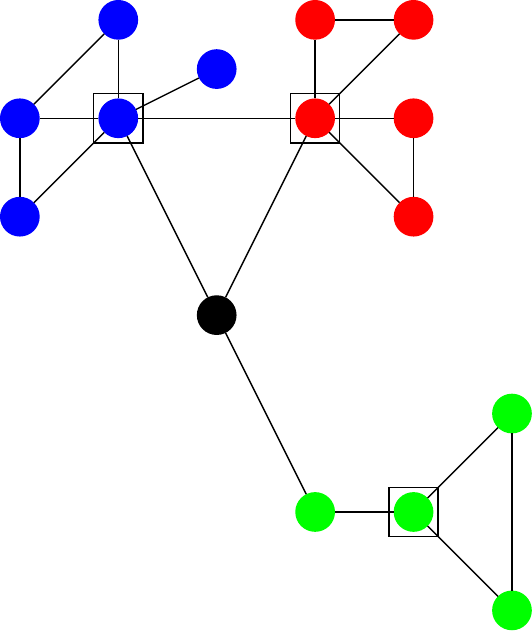}}
\caption{The PPIN of Figure \ref{partyVSdate} with the addition of an edge between the red and blue ``hub'' nodes. \label{partyVSdate2}}
\end{figure}

\begin{ex}
\label{partyVSdateExample2}
Consider again Figure \ref{partyVSdate}, however now assume that the red and blue ``hub'' nodes are connected with an edge (see Figure \ref{partyVSdate2} for a pictorial representation). Note that for the date hub we can no longer include both the neighboring blue and red nodes, as this does not lead to a feasible configuration. We also observe that this extra addition renders the node in black unnecessary for the connection of the red and blue complexes, and this will also become obvious when looking at the change in the star centrality metric. Seeing as now the blue and red hub nodes cannot both belong to the induced star centered at the center node, we will have to choose one or the other. This means that the star centrality metric for the center node in black is now equal to 6 (compared to the earlier value of 9). The blue and red hub nodes now both have a star centrality of 8. Finally, the green party hub is unaffected by the change. 
\end{ex}

From a biological perspective, we can use as an example the {\em YHL011C} protein from {\em Saccharomyces cerevisiae}. This is an essential protein used to synthesize PRPP, which is required for a mutlitude of important cell activities, including the 5-phosphoribose 1-diphosphate synthase necessary for tryptophan biosynthesis. This particular protein has a relatively small degree of 92 neighbors, when only considering protein interactions with a threshold equal to 60\% or more. Its degree alone would be enough to discard it from consideration as an essential protein, as it places it within the latter half of the degree ranking of all $6,418$ proteins. The same observation can be made for betweenness (equal to $0.00021$) and its eigenvector centrality ($0.00528$), with only closeness centrality being big enough to label it ``of importance'' (its closeness is equal to $0.404$). That said, its star centrality is equal to $2,749$, placing it in the top 10 of proteins when ranked for the size of the open neighborhood of the induced stars centered at them. 

A second example can be protein {\em YDL098C}, again from the {\em Saccharomyces cerevisiae} proteome, which also happens to be essential. This protein, also referred to as SNU23 is an important ingredient of the spliceosome: this would also be functionally predicted using network-based approaches in the past by \cite{li2010network}. Once more, though, when considering a PPIN generated by discarding all interactions below 60\%, its degree would be among the smallest in the proteome (it is equal to 31), ranking it in the bottom quarter. The same is also true here for its betweenness ($8.59\cdot10^{-7}$), eigenvector ($0.0008$) centralities, while its closeness centrality (0.305) places it in the bottom half of the rankings. On the contrary, its star centrality is equal to $461$, and this is enough to locate it in the top 1000 of most important proteins as far as this ranking is concerned.

\subsection{Complexity} 

\subsection{Complexity} 

In this subsection, we provide the computational complexity of the problem of detecting the node of maximum star degree centrality. We first give the decision version of the problem at hand in Definition \ref{SCDef}. 

\begin{dfn}[\textsc{Star Degree Centrality}]
\label{SCDef}
Given a graph $G(V, E)$ and an integer $k$, does there exist an induced star $S$ centered at any node $i\in V$ such that $|N(S)|\geq k$?
\end{dfn}

We proceed to derive the complexity of the problem using the well-known $\mathcal{NP}$-complete problem, \textsc{Independent Set}, whose decision version is provided in Definition \ref{ISDef}. 

\begin{dfn}[\textsc{Independent Set}]
\label{ISDef}
Given a graph $G(V, E)$ and an integer $k$, does there exist a set $S\subseteq V$ such that $|S|\geq k$ and for any two nodes $i,j\in S$, $(i,j)\notin E$?
\end{dfn}

We are now ready to prove that the problem we are tackling cannot admit a polynomial-time algorithm, under the assumption that $\mathcal{P}\neq\mathcal{NP}$. This is given in Theorem \ref{SCNPC}. 

\begin{thm}
\label{SCNPC}
	\textsc{Star Degree Centrality} is $\mathcal{NP}$-complete. 
\end{thm}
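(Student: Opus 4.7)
The plan is to establish membership in $\mathcal{NP}$ and then reduce from \textsc{Independent Set} (Definition \ref{ISDef}). For $\mathcal{NP}$-membership, a certificate is a pair $(i, L)$ with $i \in V$ and $L \subseteq N(i)$; one verifies in polynomial time that $L$ is independent in $G[N(i)]$, so that $\{i\} \cup L$ is an induced star centered at $i$, and that $|N(\{i\} \cup L)| \geq k$.

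For $\mathcal{NP}$-hardness, given an instance $(G = (V,E), k)$ of \textsc{Independent Set} with $n = |V|$, I build $G'$ as follows: (a) adjoin a new ``hub'' vertex $c$ adjacent to every $v \in V$; (b) attach $M$ pendant vertices (of degree one in $G'$) to $c$, forming a set $P_c$; (c) attach $M$ pendants to each $v \in V$, forming sets $P_v$; here $M \geq 2$ is any fixed constant. All edges of $G$ are retained among the $V$-vertices. Set the target to $K = n + M + (M-1)k$. The construction and $K$ are computable in polynomial time.

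The forward direction is short: given an independent set $I \subseteq V$ with $|I| \geq k$, centering a star at $c$ with leaf set $I$ yields an induced star whose open neighborhood is $(V \setminus I) \cup P_c \cup \bigcup_{v \in I} P_v$, of size $n + M + (M-1)|I| \geq K$. The backward direction needs a case analysis on the center of a given star $S$ with $|N(S)| \geq K$: if the center is $c$, the leaves split as $L_V \cup L_P$ with $L_V \subseteq V$ independent in $G$ and $L_P \subseteq P_c$, giving $|N(S)| = n + M + (M-1)|L_V| - |L_P|$ and hence $|L_V| \geq k$; if the center is $v \in V$ and $c$ is a leaf, then since $c$ dominates $V$ no other $V$-vertex can be a leaf, yielding $|N(S)| \leq 2M + n - 1 < K$ for $k \geq 2$; if the center is $v \in V$ with $c$ not a leaf, the $V$-leaves form an independent set $I' \subseteq N_G(v)$ with $|N(S)| \leq n + M + (M-1)|I'|$, forcing $|I'| \geq k$ and producing the required independent set in $G$; finally, centering at any pendant yields $|N(S)| \leq n + M$, which falls short of $K$.

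The main obstacle is ensuring that centering at a non-hub vertex $v \in V$ cannot reach the threshold with a witness smaller than an independent set of size $k$. Choosing $v$ as the center automatically brings the $M$ pendants $P_v$ into $N(S)$, a ``$+M$'' bonus that, absent further care, would let one cheat by using an independent set of size $k-1$ in $N_G(v)$. The attachment of $P_c$ to the hub is precisely what neutralises this asymmetry: it gives $c$ the same bonus, so the global maximum of $|N(S)|$ over all induced stars equals $n + M + (M-1)\alpha(G)$, which crosses $K$ if and only if $\alpha(G) \geq k$. The edge case $k \leq 1$ is handled separately and is trivial.
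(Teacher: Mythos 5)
Your proof is correct, and it follows the same overall strategy as the paper: membership in $\mathcal{NP}$ by certificate verification, plus a reduction from \textsc{Independent Set} built around a universal hub vertex adjacent to all of $V$, with pendant vertices attached both to the hub and to each original vertex to calibrate the threshold. The difference lies in the gadget sizes and in how the backward direction is argued. The paper attaches $n$ pendants to every vertex of $V$ and $n^2$ pendants to the hub $s$, setting $\ell = n^2 + kn$; the intent is that the $n^2$-sized pendant set makes any star meeting the threshold necessarily centered at $s$, after which its leaves in $V$ form the desired independent set. You instead use a constant number $M \geq 2$ of pendants everywhere, with target $K = n + M + (M-1)k$, and you do not force the center to be the hub: your case analysis shows that a star centered at an original vertex $v \in V$ that reaches the threshold already has $V$-leaves forming an independent set of size at least $k$, while centers at pendants or at $v$ with $c$ as a leaf provably fall short. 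This buys a smaller construction (linear rather than quadratic in $n$) and an exact characterization of the optimum, $n + M + (M-1)\alpha(G)$, and it makes the treatment of non-hub centers fully explicit, a point the paper's case (b) handles rather tersely; the paper's oversized gadget, in exchange, aims to let one dismiss all non-hub centers at once by a counting argument without examining the structure of such stars. Your separate handling of $k \leq 1$ is a harmless edge case and is indeed trivial.
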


\proof{Proof.}
First of all, we verify that the problem is in $\mathcal{NP}$. Given a set of nodes $S\subseteq V$, we can verify that $S$ forms an induced star (one center with degree of $|S|-1$ and no edges between leaves), and that $\left|N(S)\right| \geq k$ in polynomial time.

Now, consider an instance of \textsc{Independent Set} $<G, k>$. We construct an instance of \textsc{Star Degree Centrality} $<\hat{G}, \ell>$ as follows. The vertex set and edge set of $\hat{G}$ are defined as: 

\begin{align*}
	&V[\hat G]=\hat V = V\cup \left\{s \right\} \cup \left\{ \cup_{i=1}^n  \{\cup_{j=1}^n s^{(i)}_j\} \right\} \cup \left\{\cup_{i=1}^{n^2} s^{(s)}_i \right\} \\
	&E[\hat G]=\hat E = E\cup \{\cup_{i=1}^n (s, i)\} \cup \{ \cup_{i=1}^n \{\cup_{j=1}^n (i, s^{(i)}_j)\}\}\cup\left\{ \cup_{i=1}^{n^2} (s, s^{(s)}_i) \right\}
\end{align*}

The above imply that graph $\hat G$ contains all nodes and edges from $G$. It also includes a newly added node $s$, that is connected by an edge to every node in $V$. Finally, $\hat{G}$ includes a total of $n=|V|$ nodes adjacent to every node $i\in V$ ($s_j^{(i)}$, for $j=1, \dots, n$), and a total of $n^2$ nodes ($s_j^{(s)}$, for $j=1, \dots, n^2$) that are only adjacent to $s$. Overall, the new graph has $2n^2+n+1$ nodes and $2n^2+m+n$ edges. Furthermore, let $\ell=n^2+k\cdot n$. An indicative, small example of the reduction from \textsc{Independent Set} to \textsc{Star Degree Centrality} can be found in Figures \ref{reductionAbove} and \ref{reductionBelow}.

\begin{figure}
\centering
{\includegraphics{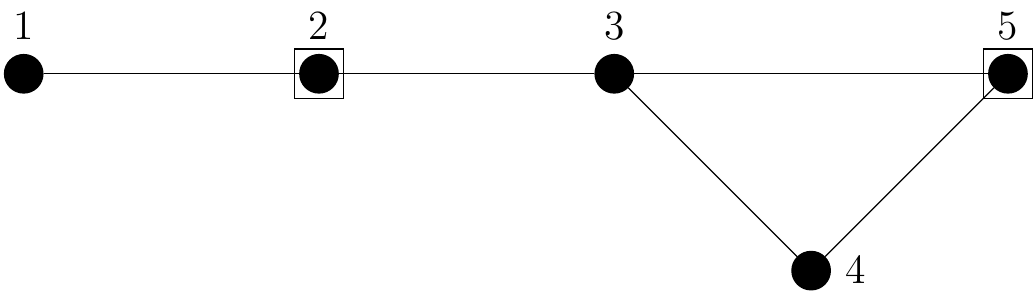}}
\caption{An example of the reduction for a graph $G$. This is the original graph $G(V,E)$, where nodes 2 and 5 form an \textsc{Independent Set} of size $k=2$ (one of the possible solutions). \label{reductionAbove}}
\end{figure}

\begin{figure}
\centering
{\includegraphics{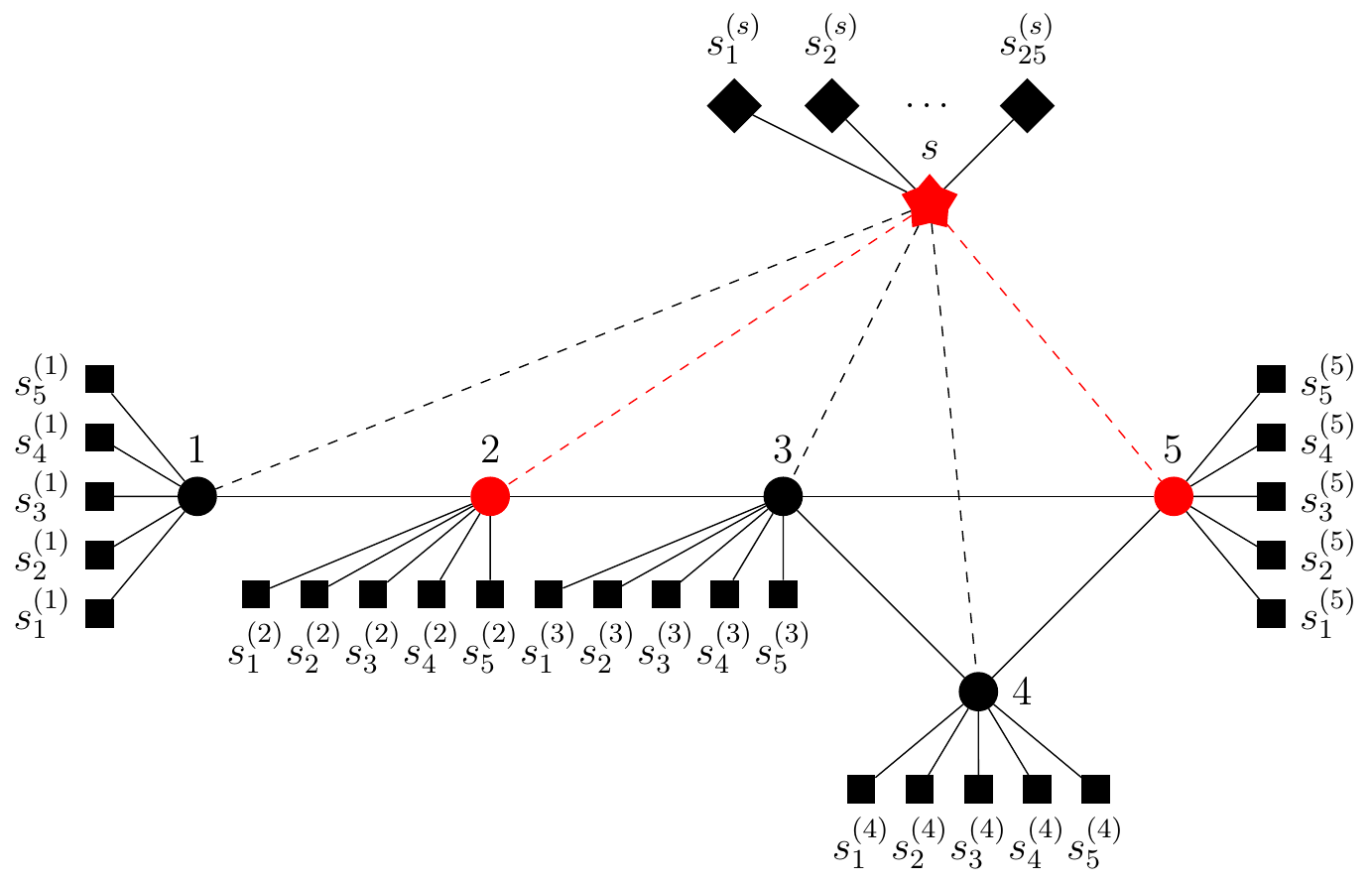}}
\caption{Here, we present an example of the gadget used to transform the instance of \textsc{Independent Set} $<G(V,E),k>$ for the graph of Figure \ref{reductionAbove} to an instance $<\hat{G}(\hat{V}, \hat{E}), \ell>$ of \textsc{Star Degree Centrality}. In the graph of Figure \ref{reductionAbove}, it is easy to see that nodes 2 and 5 form an \textsc{Independent Set} of size $k=2$, as they are not connected by an edge. In the graph below (Figure \ref{reductionBelow}, which presents $\hat{G}$), there exists an induced star $S\subseteq \hat{V}$ such that $|N(S)|\geq \ell=n^2+k\cdot n=25+10=35$. The star represents the newly added node, the squares are the $n$ nodes in $\hat G$ connecting to every node in $V$, while the diamonds the $n^2$ nodes connected to the star node. \label{reductionBelow}}
\end{figure}

First, let $S$ be an independent set of size $k$ in $G$. Then, consider the set of nodes $\hat S=S\cup \left\{ s\right\}$. We proceed to show that $\hat{S}$ forms an induced star, such that $|N(\hat{S})|\geq \ell$. This is true because, by construction, the nodes in $S$ are all adjacent to $s$; furthermore, no two nodes in $S$ are adjacent, as they form an independent set. Last, it is straightforward to see that $|N(\hat S)|\geq \ell= n^2+k\cdot n$. 

Now, assume that there exists no independent set of size $k$ in $G$. For a contradiction, we assume that there exists an induced star $\hat S\subseteq \hat V$ such that $|N(\hat S)|\geq \ell=k\cdot n+ n^2$. First of all, we note that $s\in \hat{S}$: if not, then there can be no star using nodes from $\hat V\setminus\left\{s\right\}$ with an open neighborhood of size at least equal to $n^2$. Further, there exist at least $k$ nodes from $V$ in the star: once more, if that is not the case, then $n^2\leq |N(\hat S)|< n^2+k\cdot n$. Last, observe that $\hat S$ is centered in $s$: assume for a contradiction that the star is instead centered at a node $i\in V$. Then, one of the two following cases has to hold: 

\begin{enumerate}[label=(\alph*)]
\item $\hat{S}$ contains $s$. This implies, by construction, that no other node $j\in V$ can be in the star, as $(s,j)\in \hat E$;
\item $\hat{S}$ contains at least $k$ nodes in $V$. This implies that $s$ cannot belong in $\hat S$, for the same reason as above. 
\end{enumerate}

In both cases, we observe that we reach a contradiction, hence $\hat{S}$ has to be an induced star centered at $s$. Since $\hat{S}$ forms an induced star, there exists no edge connecting any two nodes in $\hat{S}\setminus\left\{s\right\}$, rendering $\hat{S}\setminus\left\{s\right\}$ an independent set in $G$. Finally, $\hat{S}$ contains $k$ nodes in $V$, hence $|\hat{S}\setminus\left\{s\right\}|=k$, which implies that an independent set of size $k$ exists in $G$. This contradiction finishes the proof. 
\endproof

\subsection{Extensions}

It can also be shown that the star centrality function is submodular. 

\begin{thm}
The function $f(S)=\{|N(S)|: S \textit{ forms an induced star}\}$ is submodular. 
\end{thm}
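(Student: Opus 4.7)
The plan is to reduce the claim to a standard fact about coverage functions. The key identity is that $N[S] = S \cup N(S)$ is a disjoint union, so
\[
  f(S) = |N(S)| = |N[S]| - |S|.
\]
Working on the entire power set $2^V$, where submodularity will certainly specialize to the sub-family of induced stars, it therefore suffices to show that the closed-neighborhood-size function $g(S) = |N[S]|$ is submodular and then subtract off the modular term $|S|$.

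For the submodularity of $g$, I would use the representation $N[S] = \bigcup_{i \in S} N[i]$, which exhibits $g$ as the cardinality of a union of ground-set subsets indexed by the elements of $S$. This is the defining form of a coverage function, so $g$ is monotone and submodular; explicitly, for any $S \subseteq T \subseteq V$ and any $v \notin T$,
\[
  g(S \cup \{v\}) - g(S) \;=\; |N[v] \setminus N[S]| \;\geq\; |N[v] \setminus N[T]| \;=\; g(T \cup \{v\}) - g(T),
\]
since $N[S] \subseteq N[T]$.

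Subtracting the modular function $S \mapsto |S|$ preserves submodularity, so $f$ is submodular on $2^V$, and in particular on the family of induced stars. Thus, whenever $S \subseteq T$ are two induced stars and $v$ is a new leaf such that $T \cup \{v\}$, and hence $S \cup \{v\}$, remains an induced star (they must share the same center whenever $|S| \geq 2$, and $v$ is adjacent to that center but to none of its current leaves), the diminishing-returns inequality $f(S \cup \{v\}) - f(S) \geq f(T \cup \{v\}) - f(T)$ follows as a special case of the inequality on the full power set.

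The only delicate point I anticipate is the bookkeeping between $N(\cdot)$ and $N[\cdot]$: once one observes that $N(S) = N[S] \setminus S$ with the two pieces disjoint, the $-|S|$ correction is clean and the rest is a one-line consequence of the coverage-function fact. I do not foresee a further obstacle; the reduction to a coverage function is the standard way to handle this open-neighborhood-size objective, and it also sets up the performance guarantees one expects for the greedy heuristic introduced in Section~\ref{model}.
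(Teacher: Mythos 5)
Your proof is correct, and it reaches the paper's conclusion by a slightly more general route. The paper argues directly on the star family: for nested induced stars $S_1\subseteq S_2$ and a node $u$ that can be appended as a leaf, it writes the marginal gain as $-1+|N(u)|-|N(u)\cap N[S_i]|$ and concludes from $N[S_1]\subseteq N[S_2]$. You instead decompose $|N(S)|=|N[S]|-|S|$, recognize $|N[S]|=\bigl|\bigcup_{i\in S}N[i]\bigr|$ as a coverage function, and subtract the modular term, obtaining submodularity on all of $2^V$ and hence on the sub-family of stars; your key step $|N[v]\setminus N[S]|\ge |N[v]\setminus N[T]|$ is exactly the paper's comparison of $|N(u)\cap N[S_i]|$, and the two marginal formulas coincide on stars because $v\in N[S]$ whenever $S\cup\{v\}$ is a star, which is precisely where the paper's explicit $-1$ comes from. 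What your packaging buys: it needs no implicit assumption that the appended node is adjacent to the star, it proves the stronger power-set statement, and it links the objective to standard coverage-function machinery relevant to the greedy analysis; what the paper's direct computation buys is an explicit display of the $-1$ term, i.e., the source of the non-monotonicity discussed immediately after the theorem. One small caveat in your closing remark: if $S$ is a singleton consisting of a leaf of $T$, then $T\cup\{v\}$ being an induced star does not force $S\cup\{v\}$ to be one (the $v$ need not be adjacent to that leaf); this does not affect your proof, since your inequality holds for arbitrary subsets, but the ``hence'' should be weakened accordingly.
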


\proof{Proof.}
Let $S_1, S_2$ be two induced stars such that $S_1\subseteq S_2$. Also, consider a node $u\in V\setminus S_2$. Then, we have that: 

\begin{align*}
& f(S_1 \cup \left\{u\right\}) - f(S_1) = -1 + |N(u)|- |N(u)\cap N[S_1]| \\
& f(S_2 \cup \left\{u\right\}) - f(S_2) = -1 + |N(u)|- |N(u)\cap N[S_2]| 
\end{align*}

It is clear that $|N(u)\cap N[S_1] \leq |N(u)\cap N[S_2]|$, as $N[S_1]\subseteq N[S_2]$, and hence, $f(S_1 \cup \left\{u\right\}) - f(S_1) \geq f(S_2 \cup \left\{u\right\}) - f(S_2)$. 
\endproof

Unfortunately, though, the star centrality function is not monotone; consider a node with no neighbors other than to a designated center. Then, that particular node can be added as a leaf to the star, however it would only serve to decrease its open neighborhood size by 1. For better exposition, a counterexample is presented in Figure \ref{monotoneCounter}. This implies that we cannot easily use a simple greedy approach to approximate the optimal solution. We do though provide a different greedy mechanism study in a subsequent section.

\begin{figure}
\centering
{\includegraphics{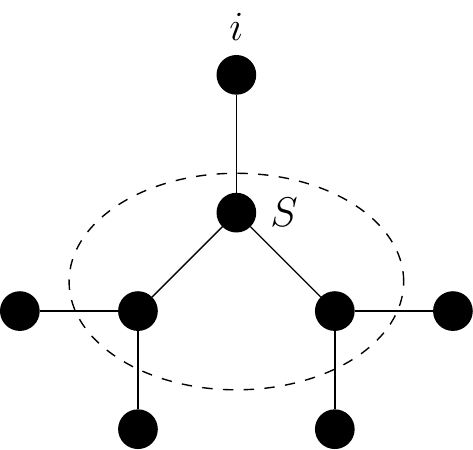}}
\caption{A counterexample of the monotonicity of the star centrality function. As can be easily seen, the open neighborhood size of the star $S$ is decreased by 1 when considering the star $S\cup\{i\}$. \label{monotoneCounter}}
\end{figure}

\section{Mathematical Formulation and Approximation Algorithms}
\label{model}

In this section, we provide a mathematical formulation for our problem, followed by two approximation algorithms. First, let us define the following decision variables: 

\begin{align*}
x_i = & \left\{\begin{tabular}{ll} 1, & if node $i\in V$ is the center of the star \\ 0, & otherwise. \end{tabular}\right. \\
y_i = & \left\{\begin{tabular}{ll} 1, & if node $i\in V$ \textit{ is in the star }\\ 0, & otherwise. \end{tabular}\right. \\
z_i = & \left\{\begin{tabular}{ll} 1, & if node $i\in V$ is adjacent to a node in the star \\ 0, & otherwise. \end{tabular}\right. \\
\end{align*}

\subsection{Mathematical Formulation}

The integer programming formulation for detecting the induced star of maximum degree centrality is presented in \eqref{form1}--\eqref{form6}. 

\begin{align}
\label{form1} 
\text{IP:}~\max &~ \sum\limits_{i\in V} z_i \\
\label{form2}
s.t. &~ y_i + z_i \leq 1, & \forall i\in V \\
\label{form3}
&~ z_i \leq \sum\limits_{j\in N(i)} y_j, & \forall i\in V \\
\label{form4}
&~ y_i \leq \sum\limits_{j\in N[i]} x_j, & \forall i\in V \\
\label{form5}
&~ y_i+y_j \leq 1+x_i+x_j, & \forall (i,j)\in E \\
\label{form6}
&~ \sum\limits_{i\in V} x_i = 1, \\
\label{form7}
&~ x_i, y_i, z_i \in\{0,1\}, & \forall i \in V.
\end{align}

Clearly, our objective is to maximize the size of the open neighborhood of the star, as shown in \eqref{form1}. Then, \eqref{form2} ensures that no node is allowed to be both in the star and in its open neighborhood. Constraint families \eqref{form3} and \eqref{form4} are similar in nature and enforce which nodes are adjacent to the star, and which nodes are adjacent to the center and, as such, can be considered for addition to the star. Moreover, no two leafs are allowed to be connected, as per constraint \eqref{form5}. Last, we are only looking for one star, enforced with \eqref{form6}, and all of our decision variables are binary. 

We can also consider the problem of detecting the star centrality of a given node $u\in V$, as shown in \eqref{form21}--\eqref{form27}. 

\begin{align}
\label{form21} 
\text{IP($u$):}~\max &~ \sum\limits_{i\in V} z_i \\
\label{form22}
s.t. &~ y_i + z_i \leq 1, & \forall i\in V \\
\label{form23}
&~ y_i \leq a_{iu}, & \forall i\in V\setminus \{u\} \\ 
\label{form25}
&~ z_i \leq \sum\limits_{j: (i,j)\in E} y_i, & \forall i\in V \\
\label{form26}
&~ y_i + y_j \leq 1, & \forall (i,j)\in E: i\neq u, j\neq u \\
\label{form24}
&~ y_u=1 \\
\label{form27}
&~ y_i, z_i \in\{0,1\}, & \forall i \in V.
\end{align}

The objective function, given at equation \eqref{form21}, as well as the constraint families in \eqref{form22}, \eqref{form25} and the variable restrictions in \eqref{form27} are identical to the previous model. However, note that we no longer need to consider a decision variable for the center of the star, as it is known to be node $u\in V$. Hence, we can add constraints \eqref{form23} that only consider the nodes that are adjacent to $u$ as candidates to be in the star, and modify constraint \eqref{form26} to only consider the connections that do not include the star center. As a reminder, $a_{ij}$ is the adjacency matrix entry that represents the connection between nodes $i$ and $j$. Last, constraint \eqref{form24} will force node $u$ (the center) to be part of the induced star. 

This last integer program, as shown in \eqref{form21}--\eqref{form27}, can be used to calculate the star centrality of each and every one of the nodes in the network. In our numerical experiments (presented in Section \ref{results}), the optimal objective function value of this integer program is the {\em star centrality} that is then compared to other, nodal centrality metrics.

\begin{comment}
\subsection{Valid inequalities}

In this subsection, we investigate two inequality families that are going to be used to speed up the solution process. We define the two inequalities as \emph{triangle} inequalities and \emph{center} inequalities. 

\begin{prop}[Triangle Inequalities]
For any triplet $(i,j,k):(i,j), (i,k), (j,k) \in E$, the inequality $y_i+y_j+y_k\leq 2$ is valid. 
\end{prop}

This is easily derived, as any two nodes that participate in the star as leafs cannot be connected. 

\begin{prop}[Center Inequalities]
For any two nodes $i, j\in V$, such that $(i,j)\notin E$, the inequality $y_i+y_j \leq \sum\limits_{\substack{k\in V: \\ (i,k)\in E, (j,k)\in E}} y_k + 1$ is valid.
\end{prop}

The inequality basically states that two nodes that are not connected by an edge can only both belong to the star if there exists some node $k$ connected with an edge to both nodes. We are now ready to provide a second formulation for the problem, that is equivalent to (IP), shown in \eqref{form1}--\eqref{form6}. 

\begin{align}
\max &~\sum\limits_{i\in V} z_i \\
s.t. &~ y_i+z_i \leq 1, & \forall i\in V \\
&~ z_i \leq \sum\limits_{j: (i,j)\in E} y_i, & \forall i\in V \\
&~ y_i+y_j+y_k\leq 2, & \forall (i,j), (i,k), (j,k)\in E \\
&~ y_i+y_j \leq \sum\limits_{\substack{k\in V: \\ (i,k)\in E, (j,k)\in E}} y_k + 1, & \forall (i,j)\notin E \\
&~  y_i, z_i \in\{0,1\}, & \forall i \in V.
\end{align}

\end{comment}

\subsection{Greedy algorithms}

As shown earlier, we cannot unfortunately claim monotonicity for the star centrality function. Hence, deriving an approximation ratio from simply applying a greedy algorithm scheme is not straightforward. However, we can still show that the greedy algorithm, presented in Algorithm \ref{SimpleGreedy} has an approximation guarantee of $O(\Delta)$, where $\Delta$ is the maximum degree in the network. First, let us introduce for simplicity a function $f_1(S, k)$ to capture the ``gain" of adding a node $k$ to a star $S$, assuming of course that $S\cup\{k\}$ remains an induced star. 

\begin{align*}
f_1(S, k)= |N(S\cup\{k\})| - |N(S)|.
\end{align*}

We note that for this function we have that $f_1(S,k)\geq-1$. This follows from the fact that for any node $k$ such that $S\cup\{k\}$ forms an induced star, we have that $k\in N(S)$: hence, in the worst case, adding $k$ to $S$ decreases its open neighborhood size by 1 (as $k$ is no longer adjacent to the star, but instead is now part of it). Figure \ref{monotoneCounter} shows this worst case behavior, as adding node $i$ to star $S$ only serves to decrease its open neighborhood size by 1. 

\begin{align*}
f_1(S, k)= |N(S\cup\{k\})| - |N(S)|.
\end{align*}

\begin{algorithm}[t]
    \SetKwInOut{Input}{Input}
    \SetKwInOut{Output}{Output}

    \underline{function SimpleGreedy} $(i)$\;
    \Input{A node $i\in V$}
    \Output{An induced star $S$ centered at $i$}
    $candidates\leftarrow N(i)$\;
    $S\leftarrow\{i\}$\;
    \While{$candidates\neq \emptyset$}{
    	\For{$k\in candidates$}{
    		\If{$f_1(S, k)<=0$}{
			$candidates\leftarrow candidates\setminus\{k\}$\;
			}
    	}
	\If{$candidates\neq \emptyset$}{
  		$j\leftarrow\arg\max\limits_{k}\{f_1(S,k): k\in candidates\}$\;
		$S\leftarrow S\cup\{j\}$\;
		$candidates\leftarrow candidates\setminus\{N[j]\}$\
	}
  }
  \Return $S$
	\caption{Simple Greedy. \label{SimpleGreedy}}
\end{algorithm}

\begin{thm}
	\label{thmSimpleGreedyRatio}
	Let $i\in V$, with a degree of $\delta$, be the node whose star centrality we are interested in finding. Then, the simple greedy algorithm has an an approximation ratio of $O(\delta)$.
\end{thm}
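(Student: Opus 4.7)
My approach is to bound OPT directly in terms of GREEDY by combining the submodularity of $f$ (established in the previous theorem) with two straightforward observations: a leaf-by-leaf decomposition of the optimum, and the fact that the greedy captures the best single-step gain already in its first iteration.

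Let $S^*=\{i,v_1,\dots,v_k\}$ denote an optimal induced star centered at $i$, let $\mathrm{OPT}=|N(S^*)|$, and set $S^{(j)}=\{i,v_1,\dots,v_j\}$. Writing
\begin{align*}
g_1 := \max_{v\in N(i)} f_1(\{i\},v),
\end{align*}
I would telescope along $S^{(0)},\dots,S^{(k)}$ and invoke submodularity at every step: because $\{i\}\subseteq S^{(j-1)}$, we have $f_1(S^{(j-1)},v_j)\le f_1(\{i\},v_j)\le g_1$, and therefore
\begin{align*}
\mathrm{OPT} = \delta + \sum_{j=1}^{k} f_1(S^{(j-1)},v_j) \le \delta + k\cdot g_1 \le \delta(1+g_1),
\end{align*}
where I used that every leaf of $S^*$ must lie in $N(i)$, so $k\le\delta$.

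For the greedy side, the singleton $\{i\}$ already yields $\mathrm{GREEDY}\ge|N(\{i\})|=\delta$, because Algorithm~\ref{SimpleGreedy} only commits to candidates of strictly positive gain. If $g_1\le 0$, then by the same submodularity argument every leaf added to any star centered at $i$ is unprofitable, so $\mathrm{OPT}=\delta$ and the ratio is $1$. Otherwise $g_1>0$, the first iteration of the algorithm selects the leaf attaining $g_1$, giving $\mathrm{GREEDY}\ge\delta+g_1$. Combining the two estimates yields
\begin{align*}
\frac{\mathrm{OPT}}{\mathrm{GREEDY}} \le \frac{\delta(1+g_1)}{\delta+g_1}\le \delta,
\end{align*}
the final inequality being equivalent to $\delta\ge 1$.

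The chief obstacle is the absence of monotonicity of $f$: the classical $(1-1/e)$ analysis of the greedy for submodular maximization is unavailable, as Figure~\ref{monotoneCounter} witnesses. The plan sidesteps this by exploiting the star-specific structure, namely that every leaf of a star centered at $i$ must lie in $N(i)$ (giving the crucial bound $k\le\delta$) and that both OPT and GREEDY share the common baseline $\delta$ contributed by the center alone. The only mild technicality is the edge case $g_1\le 0$, which by submodularity collapses OPT to $\delta$ so that the ratio estimate holds trivially.
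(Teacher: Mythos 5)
Your proof is correct, and it takes a genuinely different route from the paper's. The paper argues via a worst-case configuration: it asserts that the worst behavior occurs when the greedy's first pick $u$, contributing $\alpha=|N(u)\setminus N[i]|$ new nodes, is adjacent to every other neighbor of $i$, and then bounds $OPT\leq(\delta-1)\alpha$ and $z_{greedy}\geq\alpha+\delta-1$, yielding a ratio of $\delta-1$ (with Figure \ref{SimpleBound} exhibiting near-tightness). You instead obtain an instance-independent bound on $OPT$ by ordering the leaves of an optimal star, telescoping, and invoking the submodularity theorem with $S_1=\{i\}$ to cap every marginal gain by the best first-step gain $g_1$, giving $OPT\leq\delta(1+g_1)$; pairing this with the observation that the greedy always secures the common baseline $\delta$ plus $g_1$ (when $g_1>0$, since the algorithm discards non-positive-gain candidates) gives the ratio $\delta$. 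What your approach buys is rigor and generality: it does not rely on identifying the worst-case configuration and it handles the $g_1\leq 0$ degenerate case cleanly, at the cost of a marginally weaker constant ($\delta$ versus the paper's $\delta-1$) — both are $O(\delta)$, so the theorem as stated follows either way. The only cosmetic caveat is the trivial case $\delta=0$ (empty maximization defining $g_1$), which is vacuous since then $OPT=z_{greedy}=0$.
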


\begin{proof}
At each iteration of the while loop, the greedy algorithm looks at the candidate nodes (set $\{j\in N(i)\setminus S: (k,j)\notin E, \forall k\in S\}$), and selects to add the one that is adjacent to the maximum number of not already covered nodes. In the worst case, the greedy algorithm terminates after the first iteration, and that only happens when the greedily selected node $u\in N(i)$ which adds $\alpha=|N(u)\setminus N[i]|$ is connected to every other node in $N(i)$. Let $OPT$ be the optimal value and $z_{greedy}$ the value obtained by applying the simple greedy approach. Then, we have that

\begin{align}
\label{optBound}
OPT &\leq (\delta-1)\cdot (\alpha-1) + 1 \leq (\delta-1)\cdot \alpha \\
\label{greedyBound}
z_{greedy} &\geq \alpha+\delta-1 \geq \alpha
\end{align}

From \eqref{optBound} and \eqref{greedyBound}, we obtain the approximation guarantee as

\begin{align}
\frac{OPT}{z_{greedy}} \leq \frac{(\delta-1)\cdot\alpha}{\alpha}=\delta-1 = O(\delta). 
\end{align}
\hfill
\end{proof}

\begin{figure}
\centering
{\includegraphics{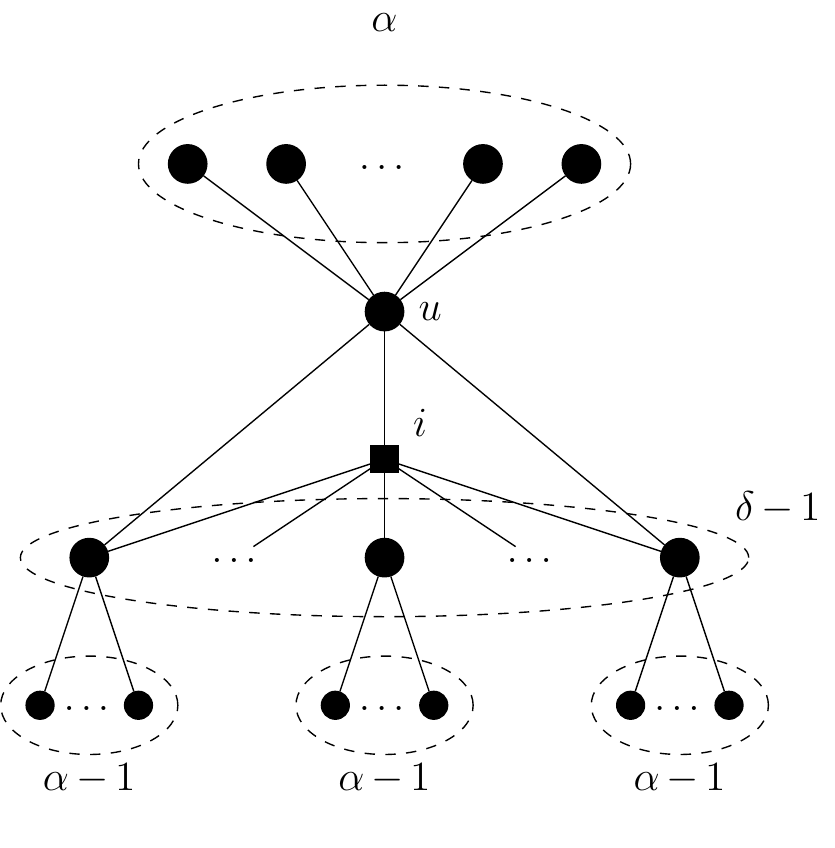}}
\caption{An example of the worst-case behavior guarantee of the Simple Greedy approach. In this case adding $u$ to the star centered at $i$ results in a star centrality of $\alpha+\delta-1$, while adding every other neighbor of $i$ to the star would result in $(\delta-1)\cdot (\alpha-1)+1$. \label{SimpleBound}}
\end{figure}

Figure \ref{SimpleBound} shows an example of the worst-case performance. Let us now propose a different greedy-based heuristic algorithm and show its approximation ratio. Let $S^i$ be again an induced star centered at $i$ and define function $f_2(S^i, k)$ as:

\begin{align*}
f_2(S^i, k)=\sum\limits_{j:(i,j)\in E, (k,j)\in E} |\left(N(S^i \cup\{j\}\right)\setminus N(S^i)|.
\end{align*}

This function captures the potential increase in the size of the open neighborhood that we would be losing since nodes $j$ and $k$ cannot belong to the star simultaneously. Note here that $f_2(S^i, k)=0$ implies either that node $k$ is connected to no other potential leaf of the star, or that all other candidates connected to $k$ add no uncovered nodes to the star. Now, consider the greedy approach shown in Algorithm \ref{RatioGreedy}. We show its approximation ratio in Theorem \ref{thmApprox}; to do that, we first provide two lemmata. 

\begin{algorithm}
    \SetKwInOut{Input}{Input}
    \SetKwInOut{Output}{Output}

    \underline{function RatioGreedy} $(i)$\;
    \Input{A node $i\in V$}
    \Output{An induced star $S$ centered at $i$}
    $S^i\leftarrow\{i\}$\;
    $candidates_1\leftarrow\{k\in N(i): f_2(S^i, k)=0\}$\;
    $candidates_2\leftarrow N(i)\setminus candidates_1$\;
    \While{$candidates_1\neq \emptyset$ or $candidates_2\neq \emptyset$}{
    	\eIf{$candidates_1\neq \emptyset$}{
		$j\leftarrow\arg\max\limits_{k}\{f_1(S^i,k): k\in candidates_1, f_1(S^i,k)>0\}$\;
		$S^i\leftarrow S^i\cup\{j\}$\;
		$candidates_1\leftarrow candidates_1\setminus\{j\}$\;
			
	}
	{
		$j\leftarrow\arg\max\limits_{k}\{\frac{f_1(S^i,k)}{f_2(S^i,k)}: k\in candidates_2, f_1(S^i,k)>0\}$\;
		$S^i\leftarrow S^i\cup\{j\}$\;
		$candidates_2\leftarrow candidates_2\setminus N[j]$\;
		
	}
	\For{$k\in candidates_2$}{
			\If{$f_2(S^i, k)=0$}{
				$candidates_2\leftarrow candidates_2\setminus\{k\}$\;
				$candidates_1\leftarrow candidates_1\cup \{k\}$\;
			}
		}  	
  }
  	\Return $S$
	\caption{Ratio-based Greedy. \label{RatioGreedy}}
\end{algorithm}

\begin{lem}
Let $i\in V$, with a degree of $\delta$, be the node whose star centrality we are interested in finding. Further, assume that for all nodes $k\in N(i)$, we have that $f_2(S^i, k)=0$, that is there exists no connection between any two of them. Then, greedily selecting the node with maximum $f_1(S^i, k)$ has an approximation ratio of $O(\ln \delta)$.
\end{lem}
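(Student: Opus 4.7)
The plan is to recast the problem as a coverage-style submodular maximization and leverage the classical greedy analysis. Under the hypothesis $f_2(S^i, k) = 0$ for every $k \in N(i)$, no edges exist among the neighbors of $i$, so every subset $T \subseteq N(i)$ yields a valid induced star $\{i\} \cup T$. Defining the external-neighbor set $B_k := N(k) \setminus N[i]$, the objective decomposes as
\[
f(\{i\} \cup T) \;=\; \delta \;-\; |T| \;+\; \Bigl|\bigcup_{k \in T} B_k\Bigr|,
\]
so the marginal $f_1$ of adding a candidate leaf $k$ is precisely the number of newly covered external neighbors minus one, and the greedy selects, at each step, the leaf maximizing this quantity.

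I would then apply the standard greedy analysis for maximum coverage. Since every candidate leaf lies in $N(i)$, the optimal leaf set $T^*$ satisfies $|T^*|\le \delta$. Using submodularity of the coverage term $h(T) := |\bigcup_{k \in T} B_k|$ (inherited from the submodularity of $f$ shown earlier), one obtains at every positive-marginal iterate that some candidate in $T^*$ not yet chosen has marginal $h$-gain at least $(h(T^*) - h(T_G))/|T^*|$, where $T_G$ denotes the greedy's current leaf set. This fuels the classical geometric contraction, and a charging argument analogous to the $H(\delta) = O(\ln \delta)$ analysis of greedy set cover absorbs the one-per-leaf cost into a logarithmic overhead, yielding the stated $O(\ln \delta)$ ratio.

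The main technical obstacle is that $f$, although submodular, is not monotone, so the textbook $(1-1/e)$ or $H(\delta)$ guarantees do not transfer out of the box. My plan to circumvent this is to restrict attention to the positive-marginal steps of the greedy (during which $f$ is strictly increasing) and to analyze the monotone coverage term $h$ separately from the linear cost $|T|$; applying the classical set-cover argument to $h$ and bounding the cost term by $|T^*| \le \delta$ bundles the two contributions into the claimed logarithmic bound.
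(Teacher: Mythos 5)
Your plan is sound and, once the step you leave as an assertion is made explicit, it does prove the lemma --- but it takes a genuinely different route from the paper. The paper passes to the complementary \emph{minimization} problem: it reads the scenario as a set-cover instance whose universe is $N^2(i)$ and whose sets are $C_j=\{j\}\cup (N(j)\cap N^2(i))$ for $j\in N(i)$, quotes the classical logarithmic guarantee of greedy set cover for the number of leaves selected, and then translates back through the identities $OPT=|N^2(i)|-OPT_{SC}$ and $z_{greedy}=|N^2(i)|-z_{SC}$. You instead stay on the maximization side, decomposing the objective as $\delta-|T|+h(T)$ with $h(T)=\bigl|\bigcup_{k\in T}B_k\bigr|$, running the submodular max-coverage argument on the monotone part $h$, and treating the $-|T|$ cost and the non-monotonicity by restricting attention to positive-marginal steps. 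What each buys: the paper's reduction is short because it leans on the textbook theorem, but it has to identify the star greedy with set-cover greedy (including full coverage of $N^2(i)$) and carry a multiplicative guarantee through complementation; your primal analysis avoids both delicate points because the greedy's stopping rule and the objective live in the same world. The one step you only assert --- that a charging argument ``absorbs the one-per-leaf cost into a logarithmic overhead'' --- should be written out, and in fact it closes more simply than you suggest: at termination every unchosen $k\in N(i)$ has $h$-marginal at most $1$, so submodularity gives $h(T^*)\le h(T_G)+|T^*|$ and hence $f(\{i\}\cup T^*)\le \delta+h(T_G)$, while positive marginals give $z_{greedy}\ge \delta$ and $|T_G|\le z_{greedy}-\delta$, so $OPT\le \delta+h(T_G)\le 2\,z_{greedy}-\delta$. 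Thus your route actually yields a constant-factor bound in this special case, which is stronger than the stated $O(\ln\delta)$ and makes the geometric-contraction machinery unnecessary.
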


\begin{proof}

It can be seen that the above setup results in greedily solving a set cover problem with $\delta$ sets. The universe of elements to be covered is all nodes reachable within 1 or 2 hops from $i$, $N^2(i)$. Each set consists of the neighbors of $i$ and their neighbors which belong to $N^2(i)$, that is $C_j=\{j, N(j)\cap N^2(i)\}, \forall j\in N(i)$. Since applying the greedy algorithm results in an $O(\ln n)$ approximation for the set cover and we have at most $\delta$ candidate nodes/sets, all of which can be selected at any point, as there exist no connections between them, the greedy algorithm would result in an  $O(\ln \delta)$ approximation ratio, as far as the number of nodes added to the star is concerned. Let $OPT_{SC}$ represent the optimal solution to the set cover problem above and $z_{SC}$ the solution using the greedy algorithm. We then have that:

\begin{align}
	\label{RatioOPT}
	OPT &= |N^2(i)|-OPT_{SC} \\
	\label{RatioGREEDY1}
	z_{greedy} &= |N^2(i)| - z_{SC} \geq |N^2(i)|-\ln\delta \cdot OPT_{SC}
\end{align}

Combining \eqref{RatioOPT} and \eqref{RatioGREEDY1}, we obtain that: 

\begin{align}
	\label{approxRatioBased}
	\frac{OPT}{z_{greedy}} &\geq \frac{|N^2(i)|-OPT_{SC}}{|N^2(i)|-\ln\delta \cdot OPT_{SC}} \geq \frac{1}{\ln\delta} \implies z_{greedy}\leq \ln\delta\cdot OPT.
\end{align}

The last inequality proves the Lemma. \hfill 
\end{proof}

\begin{lem}
Let $i\in V$, with a degree of $\delta$, be the node whose star centrality we are interested in finding. Further, assume that for all nodes $k\in N(i)$, we have that $f_2(S^i, k)>0$, that is each node is connected to at least one other in $N(i)$. Then, greedily selecting the node with maximum $\frac{f_1(S^i, k)}{f_2(S^i, k)}$ has an approximation ratio of $O(\sqrt \delta)$.
\end{lem}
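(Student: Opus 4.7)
The plan is to recast the problem combinatorially and then use a per-iteration charging argument based on the ratio $f_1/f_2$. Every star $S$ centered at $i$ corresponds to an independent set $L\subseteq N(i)$ in the conflict graph $H:=G[N(i)]$, and the objective decomposes as $|N(S)|=|\bigcup_{k\in L}A_k|+\delta-|L|$ with $A_k:=N(k)\setminus N[i]$. In this language $f_1(S^i,k)$ is the marginal coverage of adding $k$ (minus one, because $k$ leaves the open neighborhood on becoming a leaf), while $f_2(S^i,k)=\sum_{j\in N_H(k)}|A_j\setminus N(S^i)|$ is the total marginal coverage blocked by picking $k$, where $N_H(k)=N(k)\cap N(i)$. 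Thus the greedy rule ``maximize $f_1/f_2$'' is a bang-per-buck rule on a weighted coverage problem with an independence-system constraint.

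I would then analyze a single iteration. Let $u$ be the greedy's choice, with maximum ratio $\rho^{\star}:=f_1(S^i,u)/f_2(S^i,u)$, and let $L^{\star}$ denote the leaves of an optimal star centered at $i$. Directly from the definition of $f_2$,
\begin{align*}
\sum_{v\in L^{\star}\cap N_H(u)} f_1(S^i,v)\;\leq\;f_2(S^i,u).
\end{align*}
I would then split on the size of $\rho^{\star}$ relative to $1/\sqrt{\delta}$. When $\rho^{\star}\ge 1/\sqrt{\delta}$, the inequality above gives $\sum_{v\in L^{\star}\cap N_H(u)}f_1(S^i,v)\le \sqrt{\delta}\cdot f_1(S^i,u)$, so the OPT-leaves conflicting with $u$ are charged to $u$ at multiplicative cost $\sqrt{\delta}$. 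When $\rho^{\star}<1/\sqrt{\delta}$, every candidate $v$ satisfies $f_1(S^i,v)<f_2(S^i,v)/\sqrt{\delta}$; combining this with the double-counting identity $\sum_{v\in L^{\star}}f_2(S^i,v)=\sum_{j}|A_j|\cdot|L^{\star}\cap N_H(j)|\le\delta\cdot\sum_{j}|A_j|$ (using $|L^{\star}|\le\delta$ and the independence of $L^{\star}$ to bound $|L^{\star}\cap N_H(j)|$) controls the total OPT marginal coverage within an $O(\sqrt{\delta})$ factor of what the greedy already collects.

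Finally I would iterate. After $u$ is appended, the residual instance retains the same structure: the conflict graph shrinks to $G[N(i)\setminus N_H[u]]$, and each $A_k$ is replaced by $A_k\setminus N(S^i\cup\{u\})$. Each leaf of $L^{\star}$ is either eventually picked by the greedy or blocked by exactly one greedy choice, so the per-iteration charges are disjoint; summing them yields $OPT\le O(\sqrt{\delta})\cdot z_{greedy}$. The $-1$ offsets in $f_1$ contribute at most $|L^{\star}|\le\delta$ in total and are absorbed into the $\delta-|L|$ term of the objective.

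The main obstacle will be the telescoping step. The ratios $f_1/f_2$ evolve as leaves are added to the star: both numerator and denominator shrink, and nodes can even migrate from $\mathit{candidates}_2$ to $\mathit{candidates}_1$ once their residual $f_2$ drops to zero. The argument must therefore verify that the two-regime case split remains valid in the residual instance and that the double-counting bound on $\sum_{v\in L^{\star}}f_2(S^i,v)$ does not degrade as OPT-leaves are consumed by earlier iterations. A secondary subtlety is balancing the coverage term $|\bigcup_{k\in L}A_k|$ against the $\delta-|L|$ term of the objective so that OPT cannot gain an advantage by having either a very small or very large $|L^{\star}|$ that the charging scheme does not capture; I expect this to be resolved by noting that the greedy's own output already retains a base value of order $\delta-|G|$ from nodes never added to the star, keeping the ratio within $O(\sqrt{\delta})$ in both extremes.
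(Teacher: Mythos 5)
Your reduction to a weighted, conflict-constrained coverage problem is a reasonable reframing, and the per-iteration charge in the regime $f_1(S^i,u)/f_2(S^i,u)\ge 1/\sqrt{\delta}$ is sound, but the two steps that would actually deliver the $O(\sqrt{\delta})$ bound are missing. First, in the low-ratio regime your double-counting inequality $\sum_{v\in L^{\star}}f_2(S^i,v)\le \delta\sum_j|A_j|$ bounds OPT's charge by the total residual coverage of \emph{all} candidates, a quantity with no a priori relation to what the greedy collects in that iteration (which is only $f_1(S^i,u)$); moreover the bound $|L^{\star}\cap N_H(j)|\le\delta$ is trivial and does not actually use the independence of $L^{\star}$. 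As written, the sentence claiming this ``controls the total OPT marginal coverage within an $O(\sqrt{\delta})$ factor of what the greedy already collects'' is an assertion, not an argument, and it is precisely the inequality the lemma needs. Second, the disjointness claim driving the telescoping --- that every leaf of $L^{\star}$ is either picked by the greedy or blocked by exactly one greedy choice --- fails: an optimal leaf can have its residual gain $f_1$ driven to zero by coverage without ever being blocked, or it can migrate into \emph{candidates}$_1$, where selection switches to the max-$f_1$ rule (the regime of the other lemma, carrying only an $O(\ln\delta)$ guarantee), so the per-iteration charges are neither well-defined nor disjoint as stated. You flag both issues yourself as ``obstacles,'' but they are exactly the content of the proof, so the proposal remains a plan rather than a proof.

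For contrast, the paper's argument avoids iteration entirely: it takes the worst case to be the run in which the first ratio-based pick $u$ conflicts with every other candidate, so the algorithm stops after one addition, and then uses only the selection inequality $a_u\ge a_v$ (with $a_u=\beta_u/\sum_{k\ne u}\beta_k$ and $a_v\le\beta_v/\beta_u$) to deduce $\beta_v\le\beta_u/\sqrt{\delta-1}$ for every other candidate $v$; hence OPT is at most $(\delta-1)\beta_u/\sqrt{\delta-1}+1$ against the greedy's $\beta_u+\delta-1$, giving $O(\sqrt{\delta})$. If you want to push your multi-iteration charging scheme through, the missing piece is an analogue of that selection inequality in the low-ratio regime, together with an accounting for optimal leaves whose marginals are consumed by coverage rather than by conflicts.
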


\begin{proof}
Similarly to the case in Theorem \ref{thmSimpleGreedyRatio}, the worst case behavior is observed when the algorithm terminates after adding only one node in the star. This can happen when the selected node is indeed adjacent to all other nodes in $N(i)$. Let $\beta_j$ be the nodes adjacent to $j$ that are not already in $S$ or covered by $S$. Furthermore, let node $u$ be connected to all other candidate nodes. We then have that:

$$a_u=\frac{f_1(S, u)}{f_2(S,u)} = \frac{\beta_u}{\sum\limits_{k\in N(i), k\neq u} \beta_k},$$

\noindent while, for the remaining nodes, $j\neq u$, we would have that: 

$$a_j \leq \frac{\beta_j}{\beta_u}.$$

In the worst case, the remaining nodes can all be part of the same star (i.e., there exist no connections between them). Hence, to select node $u$ using the ratio-based greedy approach we must have $a_u \geq a_j$, for all $j$, and assuming $v$ is the nodes with maximum ratio when excluding $u$, we have that $a_u\geq a_v$. This implies: 

\begin{align}
\nonumber
	a_u \geq a_v &\implies \frac{\beta_u}{\sum\limits_{k\in N(i), k\neq u} \beta_k} \geq \frac{\beta_v}{\beta_u} \implies \frac{\beta_u}{(\delta-1)\cdot \beta_v} \geq \frac{\beta_v}{\beta_u}  \\
	&\implies \beta_u^2 \geq (\delta-1)\cdot \beta_v^2 \implies \beta_v \leq \frac{\beta_u}{\sqrt{\delta-1}}.
\end{align}

Hence, in the worst case, the greedy algorithm results in a solution of $\beta_u+\delta-1$, while the optimal solution can be as big as $(\delta-1)\cdot \frac{\beta_u}{\sqrt{\delta-1}} + 1$. We finally get: 

\begin{align}
\frac{OPT}{z_{greedy}} \leq \frac{(\delta-1)\cdot \frac{\beta_u}{\sqrt{\delta-1}} + 1}{\beta_u+\delta-1} \leq \frac{\sqrt{\delta-1}\cdot\beta_u}{\beta_u} = O(\sqrt{\delta}). 
\end{align}
\hfill 
\end{proof}

\begin{thm}
\label{thmApprox}
Let $i\in V$, with a degree of $\delta$, be the node whose star centrality we are interested in finding. Then, the ratio-based greedy algorithm has an approximation ratio of $O(\sqrt\delta)$. 
\end{thm}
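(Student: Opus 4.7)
The plan is to combine the two preceding lemmata and show that the mixed behavior of Algorithm \ref{RatioGreedy} is no worse than the worse of the two ``pure'' cases, which is the $O(\sqrt{\delta})$ bound.

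First I would partition the nodes that the algorithm ever inserts into $S^i$ into two groups according to which branch of the if--else selected them: group $A$ contains those picked from $candidates_1$ (using $f_1$) and group $B$ those picked from $candidates_2$ (using the ratio $f_1/f_2$). A key structural observation is that nodes in $A$ and $B$ never interfere with each other's selection rule at the moment they are chosen: a node enters $candidates_1$ only after all of its neighbors among the remaining candidates have already been removed (by being absorbed into $N[j]$ during an earlier $candidates_2$ selection), so the $f_2$-value that would disqualify it has genuinely dropped to $0$. Consequently, when it is eventually chosen by $f_1$, the set cover argument of the first lemma applies to the residual ``independent'' candidates, and when a node is chosen from $candidates_2$, the ratio argument of the second lemma applies to the still-interfering candidates.

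Next I would bound the optimum by splitting it along the same partition. Let $OPT$ denote the optimal star centered at $i$ and decompose its leaves into $OPT_A$ and $OPT_B$ according to whether, in the greedy's execution, each leaf would have been handled in the $candidates_1$ or $candidates_2$ phase. Writing $OPT=|N(OPT_A\cup\{i\}\cup OPT_B)|$ and using subadditivity of $|N(\cdot)|$, one obtains $OPT\leq OPT^A + OPT^B$, where $OPT^A$ and $OPT^B$ are the open-neighborhood sizes achievable by the best induced stars drawn from each pool in isolation. By the first lemma, $OPT^A\leq\ln\delta\cdot z_{greedy}^A$, and by the second, $OPT^B\leq c\sqrt{\delta-1}\cdot z_{greedy}^B$ for the constant hidden in the $O(\sqrt\delta)$ bound. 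Since the algorithm's total value is $z_{greedy}\geq z_{greedy}^A+z_{greedy}^B-1$ (the $-1$ accounts for over-counting the center's own neighborhood), we conclude
\begin{align*}
\frac{OPT}{z_{greedy}} \;\leq\; \frac{\ln\delta\cdot z_{greedy}^A + c\sqrt{\delta-1}\cdot z_{greedy}^B}{z_{greedy}^A+z_{greedy}^B-1} \;=\; O(\sqrt{\delta}),
\end{align*}
since $\ln\delta = O(\sqrt{\delta})$.

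The main obstacle I anticipate is the clean justification of the decomposition $OPT\leq OPT^A+OPT^B$: the ``$A$ vs.\ $B$'' labeling is defined by the greedy execution, not by any property intrinsic to the optimal star, so I would need to argue that each leaf of $OPT$ can be legitimately assigned to one of the two pools at the moment the greedy could have considered it, and that doing so only inflates the right-hand side. A safer alternative, if the decomposition proves fragile, is to fall back on a purely worst-case argument in the spirit of Theorem \ref{thmSimpleGreedyRatio}: reduce to the case where the algorithm halts after a single insertion, observe that this insertion is governed by one of the two rules, and directly inherit the corresponding lemma's bound. Either way the dominating term is $\sqrt{\delta}$, which yields the claimed $O(\sqrt{\delta})$ approximation.
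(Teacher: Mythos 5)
Your proposal follows essentially the same route as the paper's proof: split the greedy's selections into the $f_1$-phase and the ratio-phase, bound the optimum by the sum of the two phase-wise optima, apply the $O(\ln\delta)$ and $O(\sqrt\delta)$ lemmata to the respective parts, and absorb $\ln\delta$ into $O(\sqrt\delta)$. The decomposition $OPT\leq OPT^A+OPT^B$ that you flag as the delicate step is exactly the inequality the paper asserts without justification (its $OPT\leq OPT_1+OPT_2$), so your concern is legitimate but your argument is no less rigorous than the published one.
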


\begin{proof}
The algorithm is divided into two phases: in the first phase, the node with the maximum ratio is selected, while in the latter one, we choose the node with the maximum number of uncovered neighbors. 

Let $OPT_1$ and $OPT_2$ represent the optimal solutions obtained from each phase. Then, $OPT\leq OPT_1+OPT_2$. Similarly, let $z_1$ and $z_2$ be the solutions obtained from each phase of the greedy algorithm; it is easy to see that $z_{greedy}=z_1+z_2$. From the previous lemmata, we have that:

\begin{align}
OPT_1&\leq O(\sqrt{\delta})\cdot z_1 \\
OPT_2&\leq O(\ln{\delta})\cdot z_2.
\end{align}

Combining, we get that 

\begin{align}
\nonumber\frac{OPT}{z_{greedy}} &\leq \frac{OPT_1+OPT_2}{z_{greedy}} \leq \frac{O(\sqrt{\delta})\cdot z_1 + O(\ln{\delta})\cdot z_2}{z_1+z_2} \leq \\  &\leq \frac{O(\sqrt{\delta})\cdot (z_1+z_2)}{z_1+z_2} = O(\sqrt{\delta}).
\end{align}

\hfill

\end{proof}

\section{Computational results}
\label{results}

In this section, we present our experimental setup, the data used, and analyze and interpret the results obtained. Our goal is to portray how star centrality behaves and performs when put to the test against other popular centrality metrics in PPIN analysis. 

\subsection{Experimental setup}

All numerical experiments were performed on a quad-core Intel i7 at 2.8 GHz with 16 GB of RAM. The codes were written in Python and C++ and, where needed, the Gurobi 6.50 solver \citep{gurobi} was used to solve the optimization problems. Data on protein interactions for different organisms was obtained by STRING v. 10.0 \citep{szklarczyk2014string}. More specifically, we used the datasets of \emph{Saccharomyces cerevisiae} (yeast), \emph{Helicobacter pylori}, and \emph{Staphylococcus aureus} (presented in this section), and \emph{Salmonella enterica CT18},  \emph{Caenorhabditis elegans} (presented in the Appendix). Essentiality for proteins was found using the databases for the above organisms as curated in DEG 10 \citep{luo2013deg}.

We performed two experiments. In the first one, which is presented in subsection \ref{analysis1}, the PPINs were created as follows. For each protein in the database, a node was created and was connected to all other proteins-nodes that they shared an interaction. Then, all interactions-edges with an interaction score that was below a threshold were removed. Seeing as the maximum interaction score was 1000, the threshold scores selected for presentation in this study were 600 (60\% interaction score), 700 (70\% interaction score), and 800 (80\% interaction score). In this fashion, we were able to create three networks per organism where all known centrality metrics can be captured given the computational power. The networks were further broken down into their connected components with each component being independently analyzed, without loss of generality. 

For the second experiment, discussed in subsection \ref{analysis2}, twenty different networks were created for three of the previous organisms. Each network was generated by randomly adding every protein-protein interaction present in the datasets with a probability equal to the interaction score divided by the maximum interaction present (1000). As an example, a protein-protein interaction with a score of 550 in the database would appear in the generated network with a probability of 0.55. The goal of this second experiment is to measure how many times a protein appears among the top ranked (per a specific metric) in the generated networks. 

All nodal metrics of centrality (\emph{degree}, \emph{closeness}, \emph{betweenness}, \emph{eigenvector}) were computed with a Python implementation, using NetworkX 1.9 \citep{hagberg-2008-exploring}. On the other hand, \emph{star centrality} calculations were performed on the same networks with a C++ implementation. For all networks an exact solution was found for every node; however, we also obtained an approximate solution using the greedy techniques proposed (as described in the previous section).

Finally, in subsection \ref{greedyAnalysis}, we contrast the performance of the approximation algorithms to the exact solution, both as far as time and solution quality are concerned. For this analysis we employ the same networks that were used for the first experiment.

\subsection{Analysis of top ranked proteins per metric}
\label{analysis1}

After obtaining all the metrics for the PPIN in consideration, we calculated the ratio of essential proteins found in the top $k$ and the bottom $k$ proteins (as ranked by each centrality metric). The bounds for each analysis are shown in Table \ref{bounds} and were based on the total number of essential proteins present in the PPIN for each organism. As an example, for \emph{Helicobacter pylori}, the number of essential proteins found in the PPIN was 435, and hence the top 500 proteins were investigated. Observe that the closer a metric gets to 100\%, the more accurately it detects essential proteins. 

\begin{table}
\tiny
\caption{Details of the PPINs and the bounds selected for each organism analysis. \label{bounds}}
{\begin{tabular}{l|c|rrr|rrrr}
 		&  & \multicolumn{3}{c|}{$|E|$} & \multicolumn{1}{c}{Essential} &  &  \\
Organism		&$|V|$	    & 	600 & 700 & 800 	&  \multicolumn{1}{c}{Proteins} & \multicolumn{1}{c}{Top $k$} & \multicolumn{1}{c}{Bottom $k$}  \\
\hline
\emph{Saccharomyces cerevisiae} & 6,418 &  179,317  & 137,304 & 99,705
& 1,221 & 1,000 & 5,00 \\
\emph{Helicobacter pylori} &  1,570 &  17,792    & 12,822 & 7,859
& 431 & 500 & 500 \\
\emph{Staphylococcus aureus} & 2,853 & 16,857   & 11,996 & 8,530
& 314 & 400 & 400 \\
\emph{Salmonella enterica CT18} & 4,529 & 40,165    & 27,649 &18,547
& 543 & 500 & 500 \\
\emph{Caenorhabditis elegans} & 15,830   & 322,294   & 202,834 & 129,250
& 492 & 500 & 500 \\
%\emph{Homo sapiens} & 19,247 & 398,845   & 321,152 &
%& 1,435 & 1,500 & 1,000
\end{tabular}}
\end{table}

For each organism then, we provide three Figures: one representing the performance over the top $k$ proteins, a second one over the bottom $k$ proteins, along with a Receiver Operating Characteristic curve (ROC curve). Note that for the first representation, the higher the ratio is then the better that metric is said to perform. The opposite is true for the second representation as a metric is said to perform better if the ratio is smaller. Last, for the third representation, the higher the area under the curve (AUC), the better the metric is said to perform. In this section, we only provide the figures corresponding to the \emph{Saccharomyces cerevisiae}, \emph{Helicobacter pylori}, and \emph{Staphylococcus aureus} organisms and the first of the three thresholds selected (60\%): the other two thresholds (70\% and 80\%) and all remaining organisms are given in the Appendix. 

First, let us consider Figures \ref{SCerevisiaeTop} and \ref{SCerevisiaeBottom} that show our results for \emph{Saccharomyces cerevisiae}: the star centrality metric is outperforming every other considered nodal centrality metric with a final performance of having $50.1\%$ of all essential proteins within the top 1000. Note that the maximum that could be achieved here would be 81.9\%, making the effective detection rate equal to 61.17\%. On the contrary, the other centrality metrics are almost indistinguishable and achieve a final performance of $22.11\%$, $22.03\%$, $22.52\%$, and $23.01\%$ for degree, closeness, betweenness, and eigenvector centrality, respectively. In the bottom 500 proteins, star centrality is still performing better, albeit less so than earlier, achieving a final score of $9.17\%$, as compared to the final scores of $10.4\%$, $10.24\%$, $9.91\%$, and $9.91\%$ for the other centrality metrics. We see a similar behavior in the ROC curve, shown in Figure \ref{ROC1} (left), where the area under the curve for star centrality is 0.766 (compared to 0.672 for degree, 0.548 for betweenness, 0.669 for closeness, and 0.682 for eigenvector).

\begin{figure}
	\begin{minipage}{0.45\textwidth}
  	\tiny
	{\includegraphics{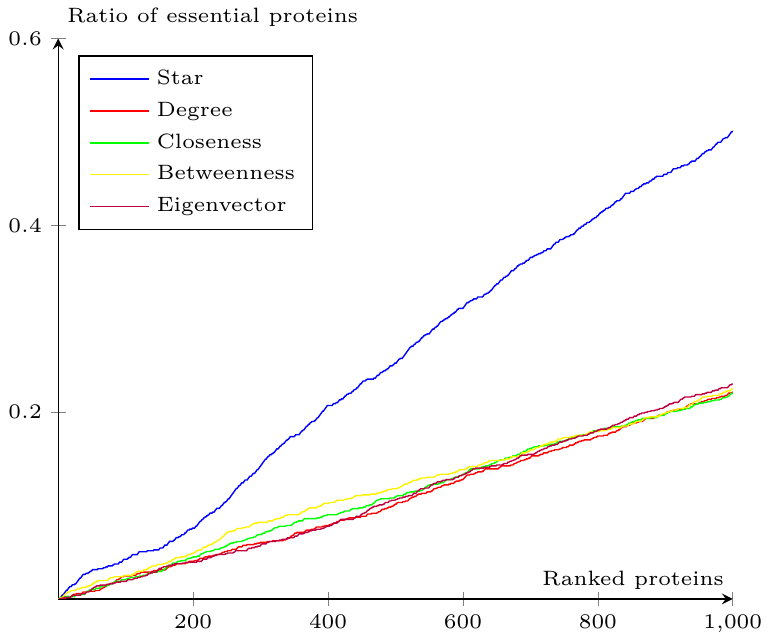}}
	\caption{The ratio of essential proteins detected in the ranked \emph{top k} proteins according to each metric for the \emph{Saccharomyces cerevisiae} organism (yeast) when a threshold of 60\% was used. \label{SCerevisiaeTop}}
  \end{minipage} ~~~%
  \begin{minipage}{0.45\textwidth}
\tiny
	{\includegraphics{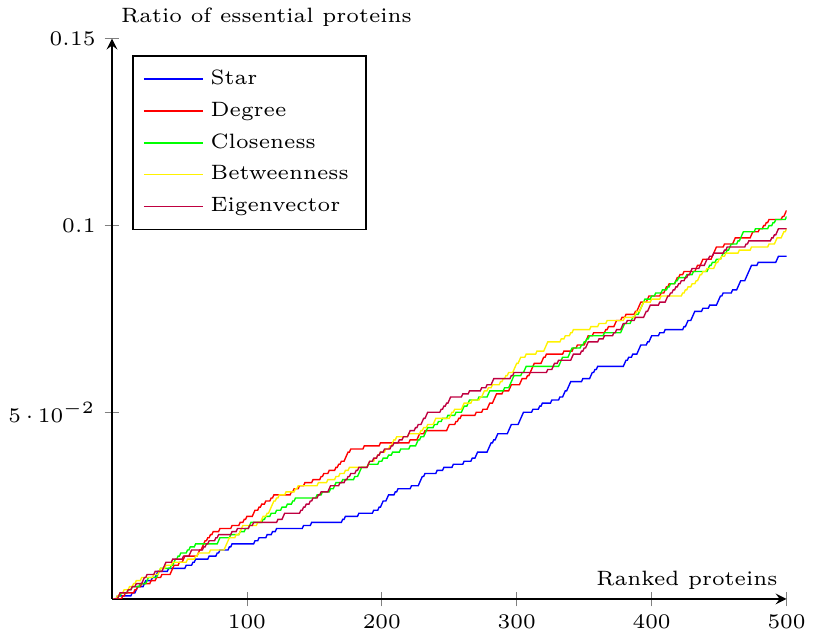}}
	\caption{The ratio of essential proteins detected in the ranked \emph{bottom k} proteins according to each metric for the \emph{Saccharomyces cerevisiae} organism (yeast) when a threshold of 60\% was used. \label{SCerevisiaeBottom}}
  \end{minipage}
\end{figure}

In the case of the \emph{Helicobacter pylori} organism, shown in Figures \ref{HPylorisTop} and \ref{HPylorisBottom}, and Figure \ref{ROC1} (center), the situation is similar. Star centrality achieves a final score of detecting $55.65\%$ within the top 500 proteins, as opposed to $38.6\%$ for degree centrality, $47.63\%$ for closeness centrality, $34.09\%$ for betweenness centrality, and $40.63\%$ for eigenvector centrality. Considering the performance over the least well ranked proteins, it is easier to see that star centrality is best at not ranking highly non-essential proteins, achieving a final score of $19.49\%$, while the scores for the other centrality metrics are significantly higher at $37.82\%$, $32.51\%$, $41.31\%$, and $32.51\%$. The area under the ROC curve ends up being 0.753, outperforming every other centrality metric in this study.

\begin{figure}
\centering
	\begin{minipage}{0.45\textwidth}
  	\tiny
	{\includegraphics{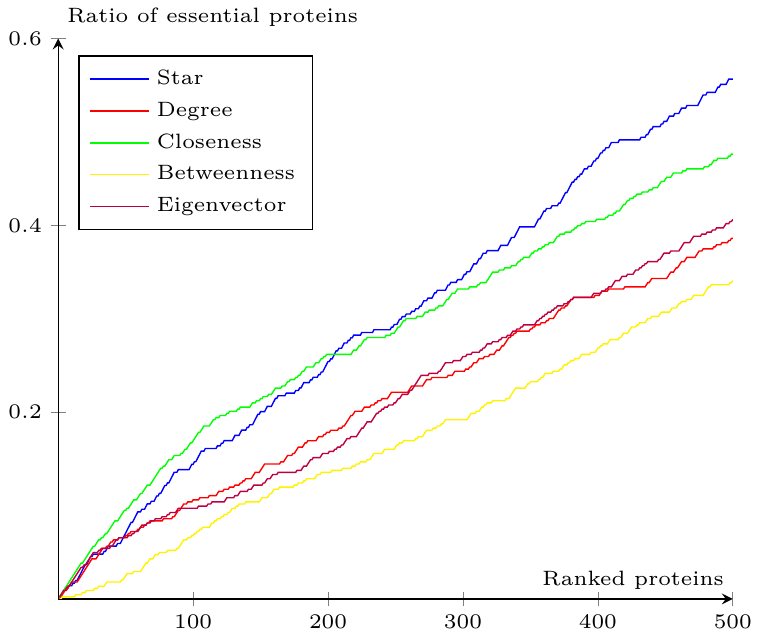}}
	\caption{The ratio of essential proteins detected in the ranked \emph{top k} proteins according to each metric for the \emph{Helicobacter pylori} organism when a threshold of 60\% was used. \label{HPylorisTop}}
  \end{minipage} ~~~%
  \begin{minipage}{0.45\textwidth}
\tiny
	{\includegraphics{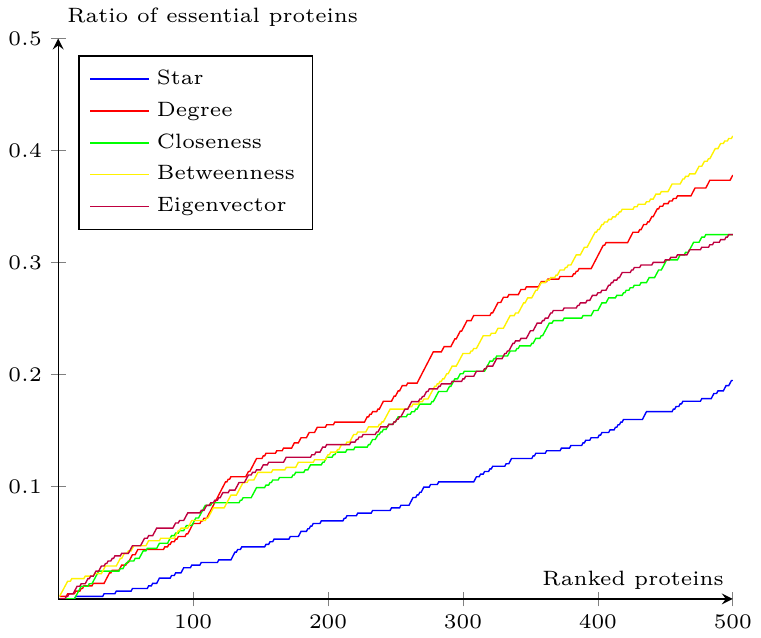}}
	\caption{The ratio of essential proteins detected in the ranked \emph{bottom k} proteins according to each metric for the \emph{Helicobacter pylori} organism when a threshold of 60\% was used. \label{HPylorisBottom}}
  \end{minipage}
\end{figure}

Continuing with the results in the \emph{Staphylococcus aureus} organism, presented in Figures \ref{StaphTop}, \ref{StaphBottom}, and \ref{ROC1} (right), the same pattern is again seen. Star centrality consistently outperforms the other nodal metrics, and its accuracy is much higher at any given step in the analysis. Overall, the final star centrality score is $65.61\%$, which easily outperforms the final scores of the other centrality metrics, 40.21\%, 38.14\%, 39.18\%, and 34.02\%, respectively. Similarly, when considering the bottom 400 proteins, we obtain a final score of 7.96\% for star centrality, as compared to the very high 38.14\%, 45.36\%, 37.11\%, and 29.90\% for the remaining centrality metrics. The area under the curve is as big as 0.867 with the eigenvector centrality behaving well with an area of 0.788. 

\begin{figure}
\centering
	\begin{minipage}{0.45\textwidth}
  	\tiny
	{\includegraphics{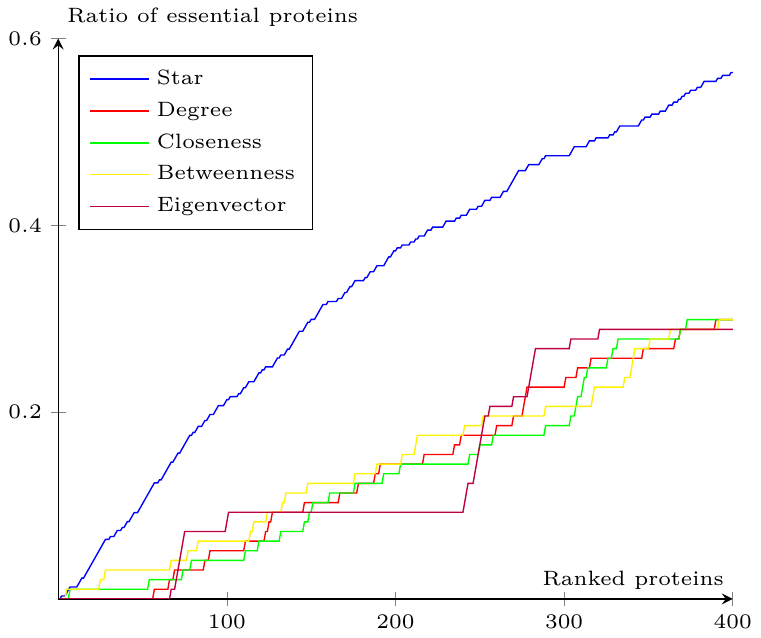}}
	\caption{The ratio of essential proteins detected in the ranked \emph{top k} proteins according to each metric for the \emph{Staphylococcus aureus} organism when a threshold of 60\% was used. \label{StaphTop}}
  \end{minipage} ~~~%
  \begin{minipage}{0.45\textwidth}
\tiny
	{\includegraphics{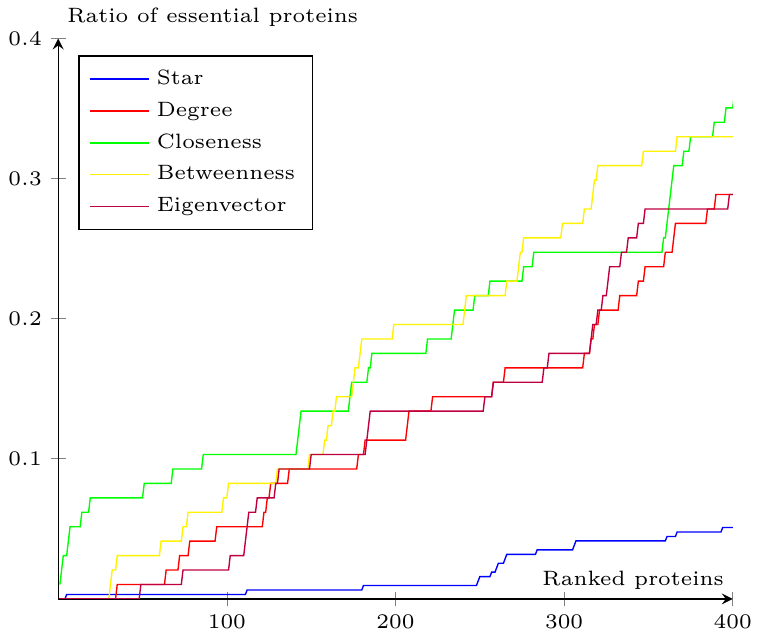}}
	\caption{The ratio of essential proteins detected in the ranked \emph{bottom k} proteins according to each metric for the \emph{Staphylococcus aureus} organism when a threshold of 60\% was used. \label{StaphBottom}}
  \end{minipage}
\end{figure}

%\input{./Figures/Staphylococcus.tex}

%%%%%%%%%%%%%%%
%%% ROC CURVES %%%%
%%%%%%%%%%%%%%%

\begin{figure}
	\tiny
	\centering
	{\includegraphics[width=0.33\textwidth]{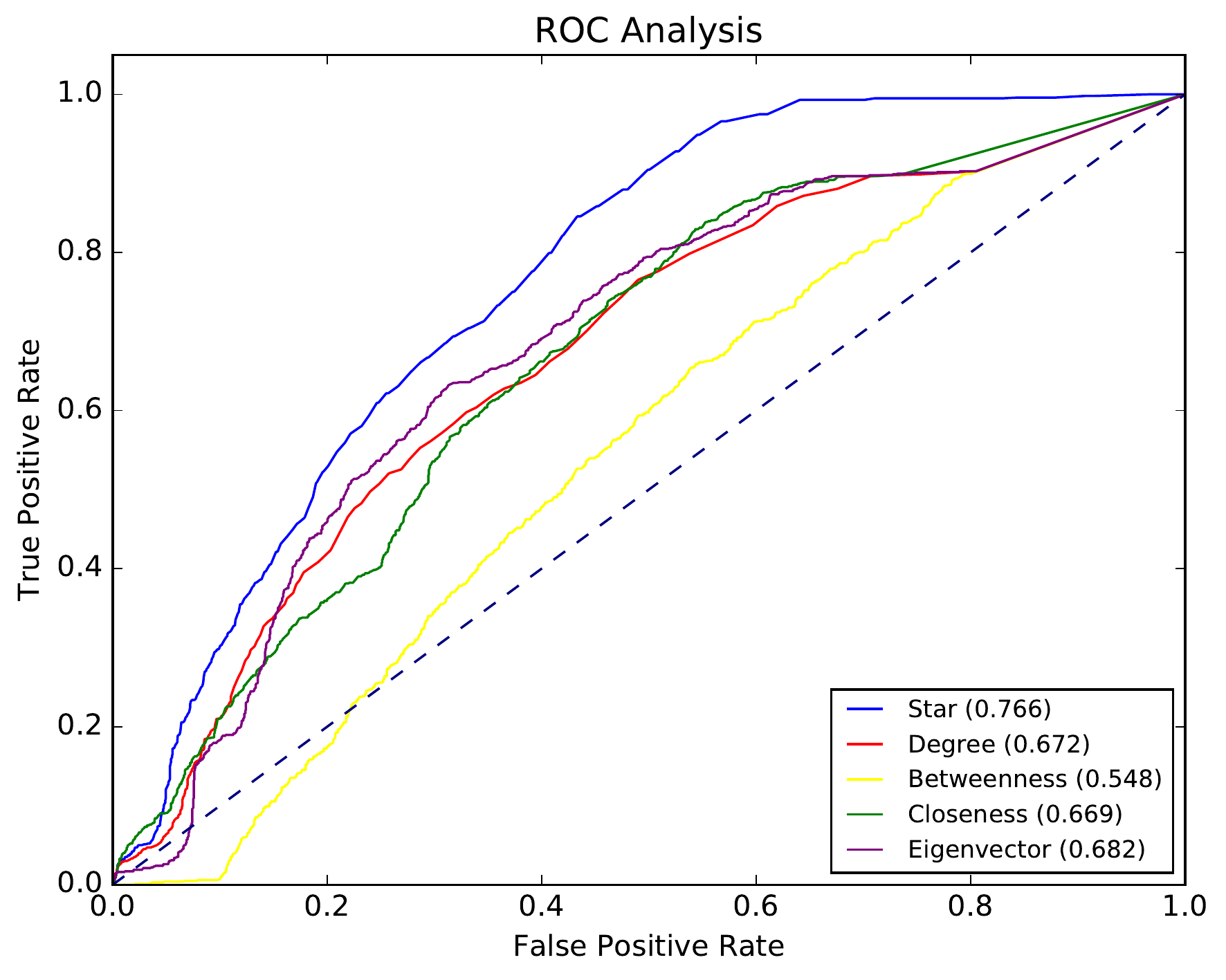}  \includegraphics[width=0.33\textwidth]{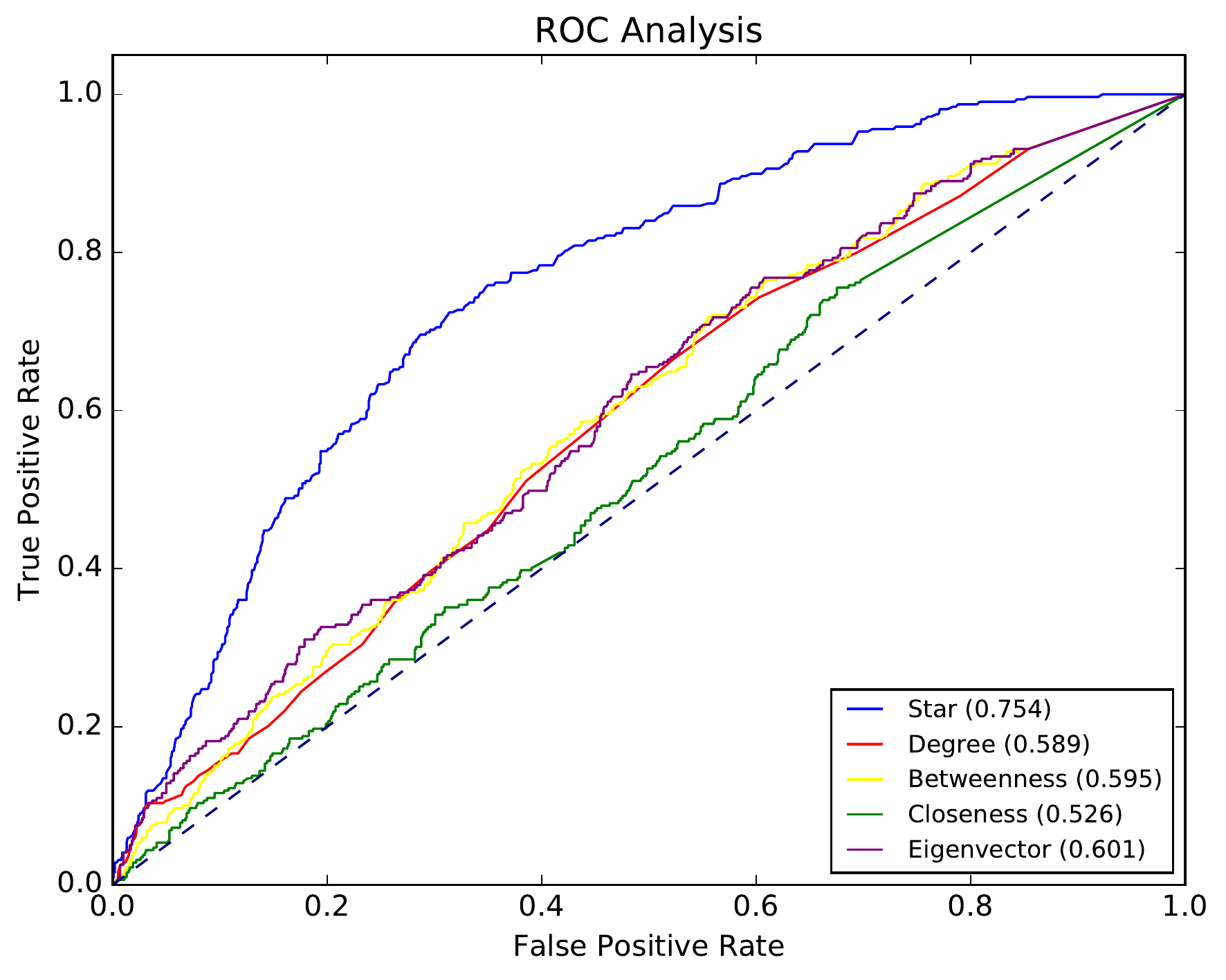}  \includegraphics[width=0.33\textwidth]{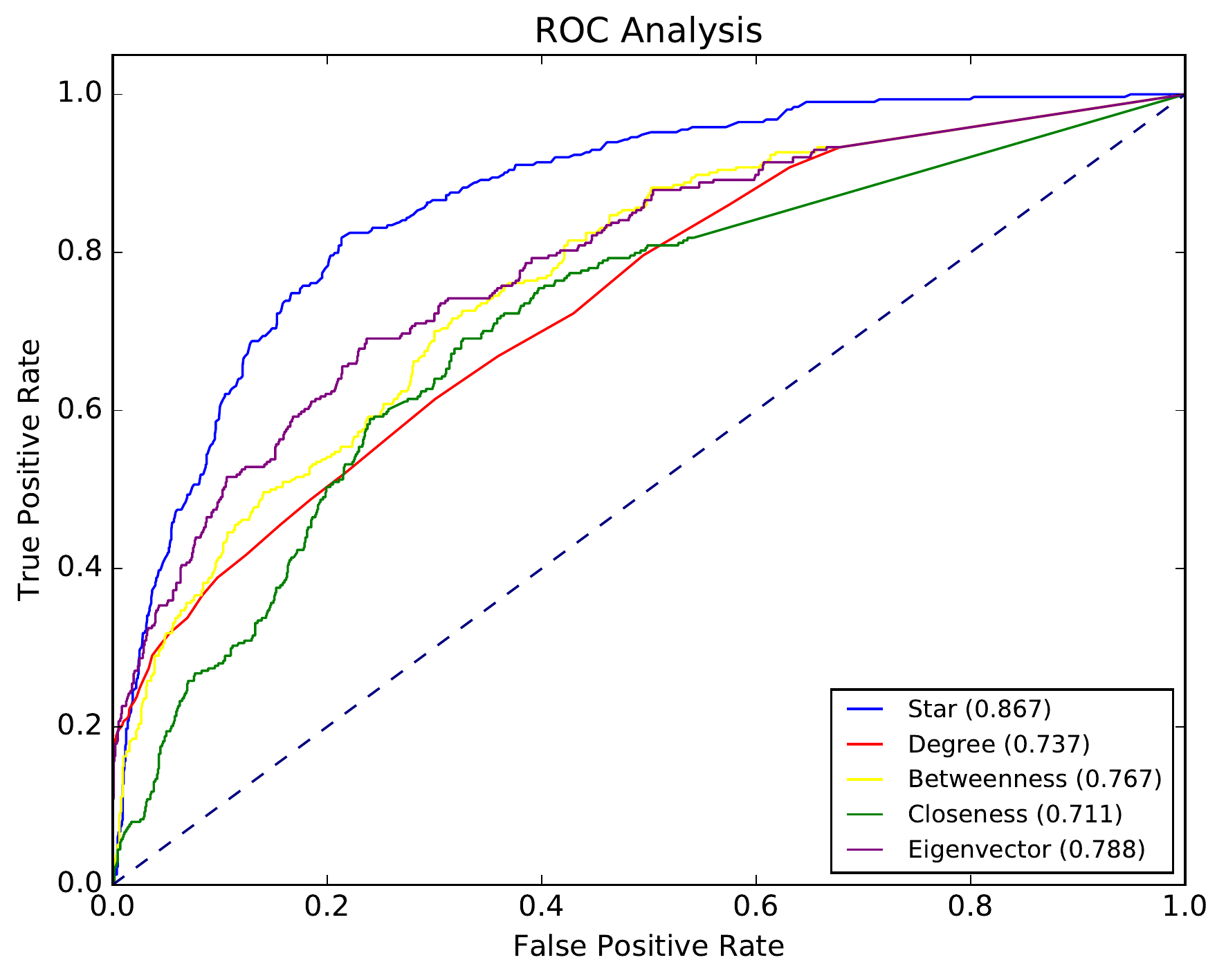}}
	\caption{The Receiver Operating Characteristic curves for each metric for the \emph{Saccharomyces cerevisiae} (yeast), the {\em Helicobacter pylori}, and the {\em Staphylococcus aureus} organisms. All three ROC curves are obtained for a threshold of 60\%. \label{ROC1}}
\end{figure}

%%%%%%%%%%%%%%%
%%% ROC CURVES %%%%
%%%%%%%%%%%%%%%

\subsection{Sensitivity analysis per metric}
\label{analysis2}

In the second experiment we focus on three of the organisms studied earlier, namely the {\em Saccharomyces cerevisiae}, {\em Helicobacter pylori}, and {\em Staphylococcus aureus} proteomic instances. After generating networks using the threshold as the probability of edge existence, all nodal centrality metrics, along with star centrality, were calculated. Then, the rank of each protein for each metric at every network was calculated in order to find its {\em mean} ranking and its standard deviation. For example, let us assume that only 10 instances were randomly generated and a protein was ranked first in 5 of them, second in 3 of the instances, fourth in 1 of the instances, and fifth in the last one. Such a protein would have an average ranking of $(5\cdot 1+3\cdot 2+1\cdot 4+1\cdot 5)/10=20/10=2$. This enables us to calculate a {\em coefficient of variation} (CV) for each of the proteins in each of the random instances, which can be used to quantify the variability in each of the metrics. 

Finally, to show that star centrality is stable under this random edge existence, we created a box-and-whisker plot (box plot) for each of the three organisms. The results are summarized in Figure \ref{boxplots}. 

\begin{figure}
	\tiny
	\centering
	{\includegraphics[width=0.33\textwidth]{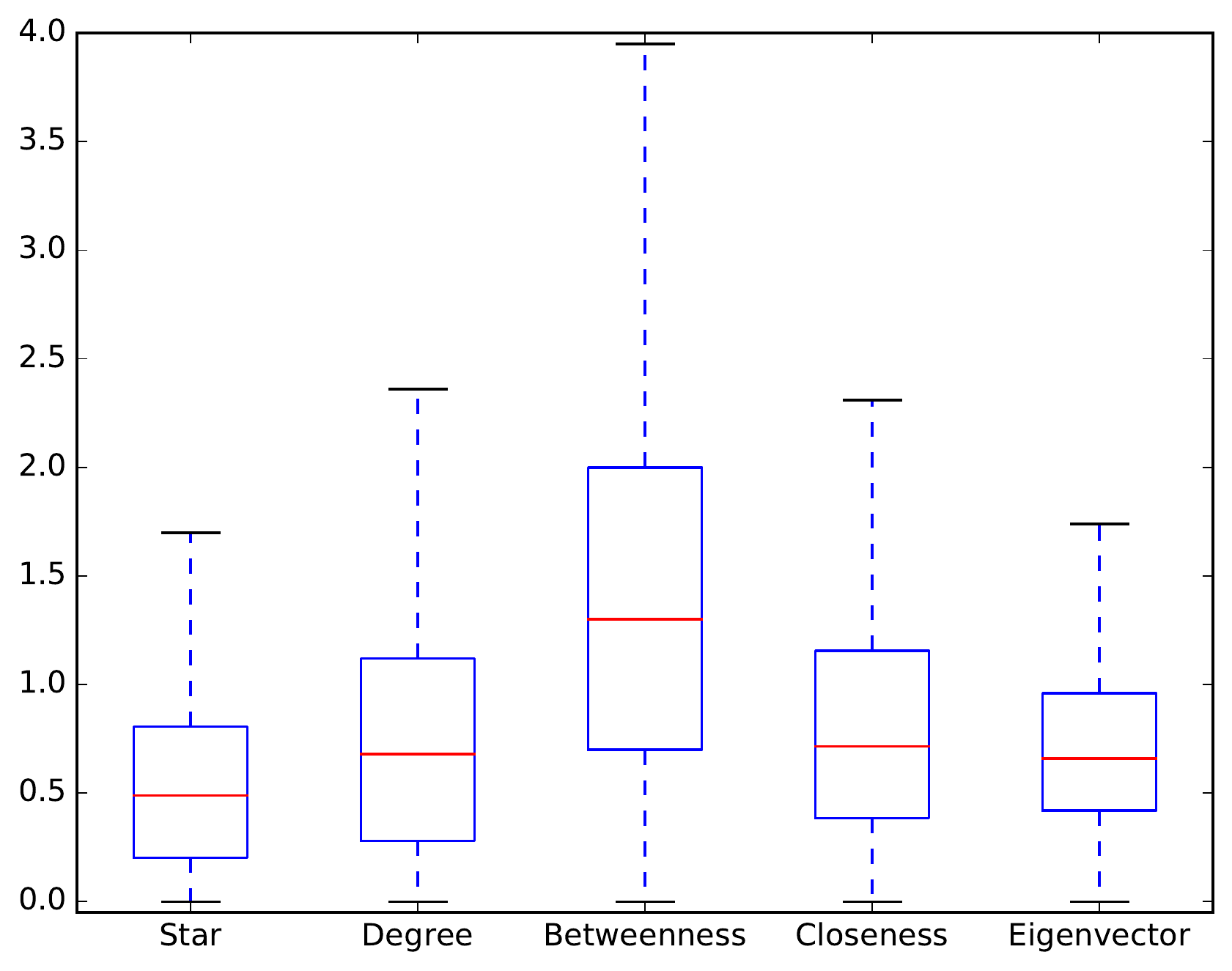}  \includegraphics[width=0.33\textwidth]{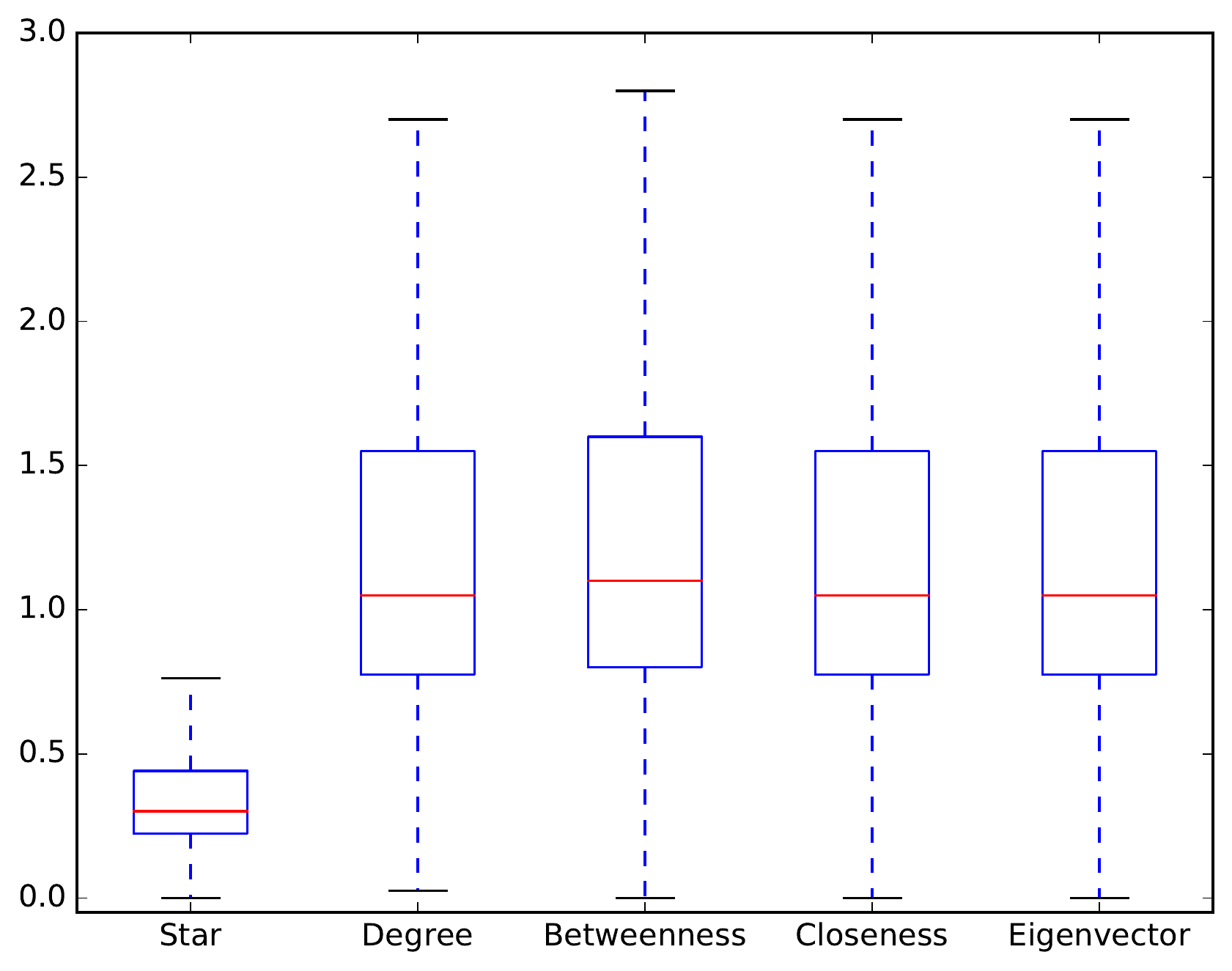}  \includegraphics[width=0.33\textwidth]{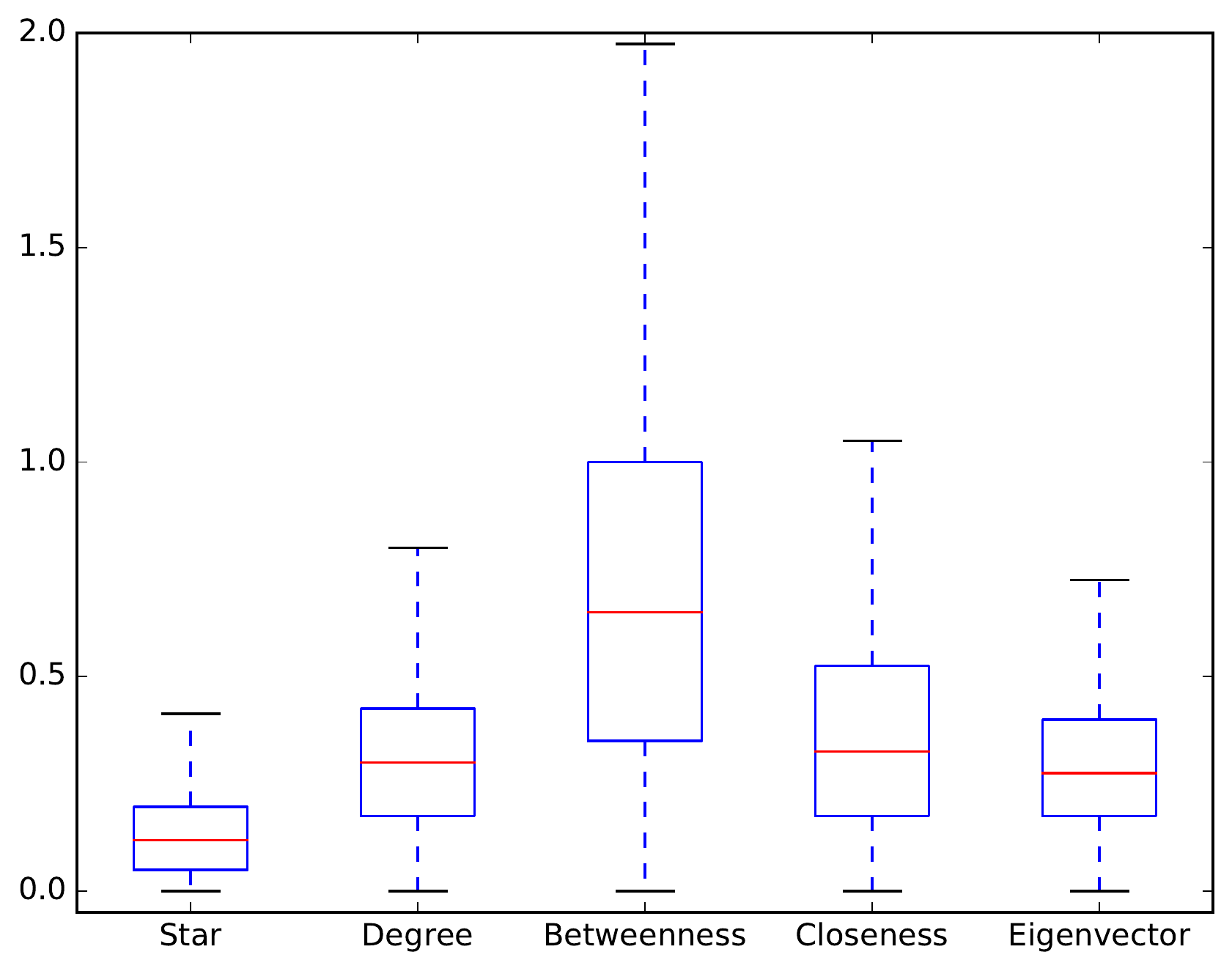}}
	\caption{The box-and-whisker plots for each of the \emph{Saccharomyces cerevisiae} (yeast),  {\em Helicobacter pylori}, and {\em Staphylococcus aureus} organisms, when randomly generating 20 instances of each. We observe that star centrality consistently showcases a lower coefficient of variation, whereas betweenness centrality is the most volatile. In the case of {\em Helicobacter pylori}, all nodal centrality metrics were shown to be similarly unstable in the case of random edge additions/deletions. In the other two organisms, degree centrality was also stable when compared to the rest of the metrics. Let it also be noted that the three plots are not having the same axis limits, and hence are not used to draw any conclusions or comparisons for the proteomes themselves. \label{boxplots}}
\end{figure}

\subsection{Greedy Algorithm Analysis}

\label{greedyAnalysis}

In this subsection, we compare the performance of the two approximation algorithms in practice, using the same PPINs as in subsection \ref{analysis1}. The results are summarized in Tables \ref{tab_approximation} and \ref{tab_times}. We make the following observations. First, Algorithm \ref{RatioGreedy} provides a better solution for every protein in every PPIN when compared to Algorithm \ref{SimpleGreedy}. On average though, as can be seen in Table \ref{tab_approximation}, both algorithms perform similarly well, finding the optimal solution in the majority of proteins. 

\begin{table}
\centering
\tiny
\caption{Approximation ratio analysis for both Algorithms \ref{SimpleGreedy} and \ref{RatioGreedy} for different PPINs. The last columns show the ratio of optimal solutions found. \label{tab_approximation}}
{\begin{tabular}{l|cc|cc|cc}
	 & \multicolumn{2}{c|}{Average Approximation} & \multicolumn{2}{c|}{Minimum Approximation} & \multicolumn{2}{c}{Optimal Found} \\ 
	 \cline{2-7}
	Organism & Simple & Ratio-based & Simple & Ratio-based & Simple & Ratio-based \\
	\hline
	\emph{Saccharomyces cerevisiae} & 0.87 & 0.88 & 0.02 & 0.64 & 0.76 & 0.78 \\
\emph{Helicobacter pylori} &  0.92 & 0.95 & 0.43 & 0.54 & 0.61 & 0.65   \\
\emph{Staphylococcus aureus} &  0.93 & 0.98 & 0.20 & 0.57 & 0.66 & 0.75 \\
\emph{Salmonella enterica CT18} &0.97 & 0.97 &	0.24 & 0.69 & 0.63 & 0.66 \\
\emph{Caenorhabditis elegans}  & 0.95 & 0.99 & 0.051 & 0.60 & 0.78 & 0.79 
\end{tabular}}
\end{table}

More specifically, we note that in all organisms, \emph{Ratio-based Greedy} always found a solution that was at least half as good as the optimal. On the other hand, we note that there are occasions where the \emph{Simple Greedy} fails to get a high quality solution and behaves close to its approximation guarantee. However, we can also observe that both approximation algorithms are able to find solutions that are very close to the optimal. In all organisms the solution obtained by either algorithm was on average as good as 96.3\% of the optimal solution. This means that, even though in some cases the exact optimal is not found, the optimality gap is very small. 

\begin{table}[h]
\centering
\tiny
\caption{Average and maximum computational times (in seconds) observed for the approximation algorithms and the Gurobi solver for different PPINs. \label{tab_times}}
{\begin{tabular}{l|ccc|rrr}
	 & \multicolumn{3}{c|}{Average Time} & \multicolumn{3}{c}{Maximum Time}  \\ 
	 \cline{2-7}
	Organism & Simple & Ratio-based & Solver & Simple & Ratio-based & Solver \\
	\hline
	\emph{Saccharomyces cerevisiae} & 0.05  & 0.11   & 0.15 & 15.33  & 102.25 & 361.22 \\
	\emph{Helicobacter pylori} & 0.03  & 0.05  & 0.05  & 0.50  & 1.88 & 3.71   \\
	\emph{Staphylococcus aureus} & 0.03  & 0.05  & 0.07  & 1.29  &2.45 & 3.02   \\
	\emph{Salmonella enterica CT18}  & 0.04  & 0.16  & 0.56  & 3.62  & 9.14 & 21.93  \\
	\emph{Caenorhabditis elegans}   & 0.09 & 0.34   & 1.13 & 18.15 & 189.32  & 1865.10
\end{tabular}}
\end{table}

As far as our time study, shown in Table \ref{tab_times}, is concerned, the main result is that, as expected, \emph{Simple Greedy} outperforms both the more refined \emph{Ratio-based Greedy} and the Gurobi solver. This performance extends to both the average and the worst-case behavior of the three approaches.

\section{Conclusions}
\label{conclusions}

In this work, we propose a new centrality metric, called \emph{star centrality}, which aims to consider the connections of the ``best" induced star centered at a node $i$. The problem was shown to be $\mathcal{NP}$-hard, however two approximation algorithms that perform efficiently, both as far as execution time and solution quality are concerned, were devised and implemented. The metric was then compared to traditional nodal centrality metrics in real-life protein-protein interaction networks, outperforming them in all instances; often significantly. 

The implications from our work are two-fold. From a biological aspect, this metric provides researchers with a new and improved scoring scheme for ranking proteins and their interactions based on not only the proteins themselves, but also after considering their interacting partners. While our study is focusing on a specific type of clusters (induced stars), understanding how the new score works can prove valuable for developing other, group-based scoring/ranking schemes. Another important aspect of our contribution is that we were able to show that by considering groups of proteins we mitigate known problems with current large-scale proteome databases, improving the quality and robustness of the obtained scores. 

We finally observe that the proposed metric does indeed take care of the three caveats mentioned earlier. First, this extension does not favor proteins that participate in a large number of interactions; instead it merely favors proteins that are located in ``strategic", as far as the network topology is concerned, locations in the proteome. Secondly, if an error exists and an interaction is missing (or present, when it should not be), the effect it has in the metric is alleviated as a set of proteins is considered, instead of singleton proteins. Lastly, proteins with low co-expression that however serve to connect otherwise disconnected protein complexes will have a higher star centrality metric, helping in their identification, contrary to other centrality metrics in use for PPINs.

% Acknowledgments here
\section*{Acknowledgements}%
Part of this work was performed when Chrysafis Vogiatzis and Mustafa Can Camur were with the Department of Industrial and Manufacturing Engineering at North Dakota State University. Chrysafis Vogiatzis would like to acknowledge support by the National Science Foundation, grant ND EPSCoR NSF 1355466. %The authors would also like to thank the anonymous reviewers and the associate editor for their insightful comments and suggestions that led us to significantly improving this article from its first iteration. 
% Enter the text of acknowledgments here
% Leave this (end of acknowledgment)

% Appendix here
% Options are (1) APPENDIX (with or without general title) or 
%             (2) APPENDICES (if it has more than one unrelated sections)
% Outcomment the appropriate case if necessary
%
 \section*{APPENDIX}
 \label{appendixA}
 
In this appendix, we present the results for the {\em Helicobacter pylori} and {\em Caenorhabditis elegans} for the 60\% threshold (the experiment in Section \ref{analysis1}), as well as all other thresholds used (70\% and 80\%). All results are presented in the same format: first, the Figures showing the accuracy of prediction in the top and bottom ranked proteins are given, followed by their receiver operation characteristic curves. 
 
We first discuss the results obtained on \emph{Salmonella enterica subspecies CT 18} organism (with a threshold of 60\%) where the star centrality metric performs almost twice as well than any other centrality metric, with a final score of 42.09\%, as can be seen in Figure \ref{SalmonellaTop}. As a comparison, the score that is closest is the one of degree centrality (22.84\%), while closeness, betweenness, and eigenvector centrality are at 15.65\%, 20.63\%, and 17.5\%, respectively. When considering the bottom 500 proteins in Figure \ref{SalmonellaBottom}, once more star centrality with a score of 9.18\% misclassifies less essential proteins than the other centrality metrics at 17.68\%, 16.43\%, 18.78\%, and 19.71\%. Last, Figure \ref{ROC2} (left) reveals that star centrality (with an area under the curve of 0.858), eigenvector centrality (area under the curve equal to 0.788), and betweenness centrality (area under the curve of 0.759) all perform well. 

As mentioned in the introduction, the \emph{C. Elegans} organism is of particular interest as it shares common or homologue proteome to humans. Interestingly, for both organisms, star centrality and closeness centrality perform similarly. First, let us focus on Figures \ref{CElegansTop} and \ref{CElegansBottom}. As can be seen, star centrality barely outperforms closeness centrality (behaving similarly) with a final score of 47.29\% compared to 41.58\%. The other three metrics are far behind with scores of 32.19\%, 30.98\%, and 19.10\% for degree, betweenness, and eigenvector centrality. As far as the bottom 500 ranked proteins are concerned, the corresponding scores are low and we note that a similarly low score is observed for the human proteome too. The scores are 3.38\%, 5.40\%, 4.35\%, 4.89\%, and 5.28\% for the centrality metrics in the order presented in the Figure legends. For the ROC curve, shown in Figure \ref{ROC2} (right), the area under the curve is 0.854 when considering star centrality, with closeness centrality a close second (with an area of 0.784).

\begin{figure}
\centering
	\begin{minipage}{0.45\textwidth}
  	\tiny
	{\includegraphics{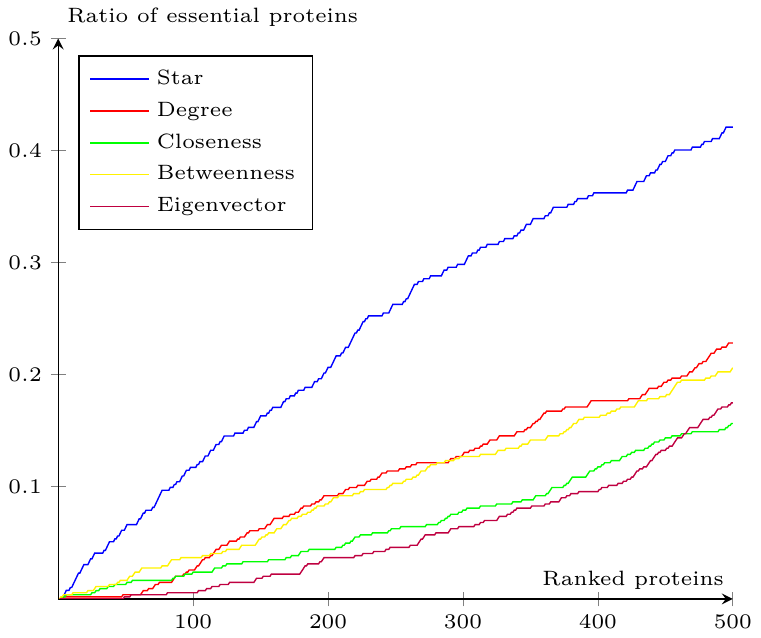}}
	\caption{The ratio of essential proteins detected in the ranked \emph{top k} proteins according to each metric for the \emph{Salmonella enterica CT18} organism when a threshold of 60\% was used. \label{SalmonellaTop}}
  \end{minipage} ~~~%
  \begin{minipage}{0.45\textwidth}
\tiny
	{\includegraphics{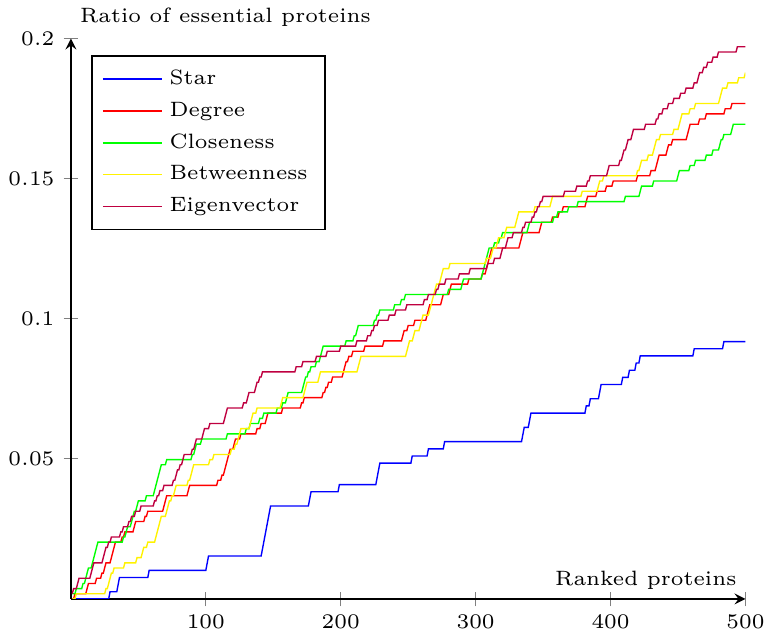}}
	\caption{The ratio of essential proteins detected in the ranked \emph{bottom k} proteins according to each metric for the \emph{Salmonella enterica CT18} organism when a threshold of 60\% was used. \label{SalmonellaBottom}}
  \end{minipage}
\end{figure}

\begin{figure}
\centering
	\begin{minipage}{0.45\textwidth}
  	\tiny
	{\includegraphics{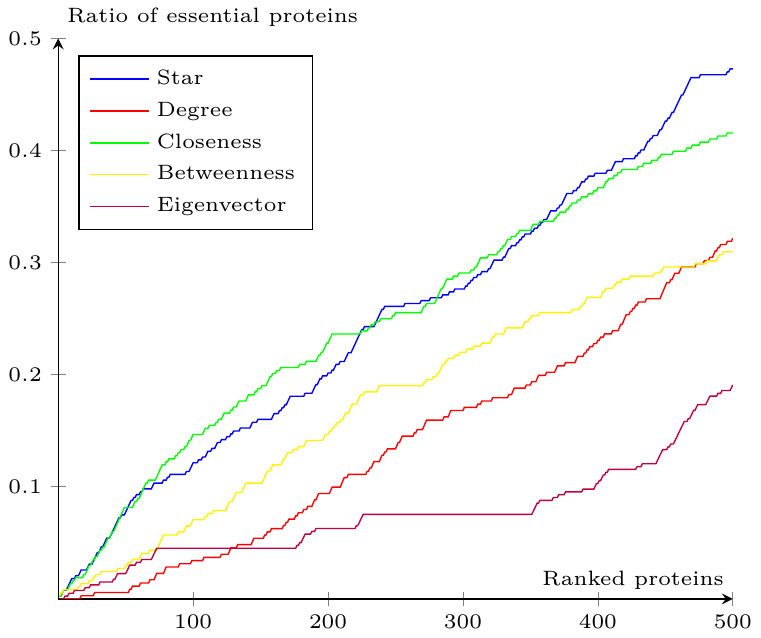}}
	\caption{The ratio of essential proteins detected in the ranked \emph{top k} proteins according to each metric for the \emph{C. Elegans} organism when a threshold of 60\% was used. \label{CElegansTop}}
  \end{minipage} ~~~%
  \begin{minipage}{0.45\textwidth}
\tiny
	{\includegraphics{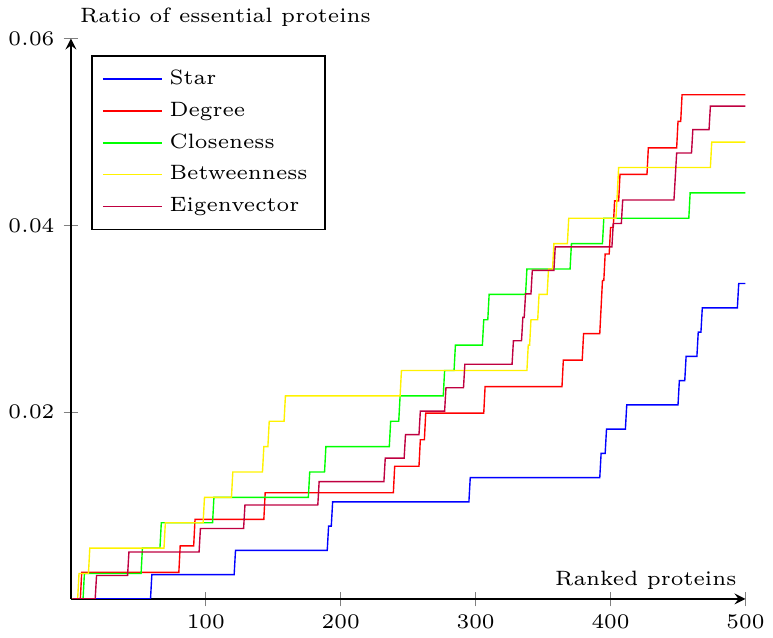}}
	\caption{The ratio of essential proteins detected in the ranked \emph{bottom k} proteins according to each metric for the \emph{C. Elegans} organism when a threshold of 60\% was used. \label{CElegansBottom}}
  \end{minipage}
\end{figure}

\begin{figure}
	\tiny
	\centering
	{\includegraphics[width=0.33\textwidth]{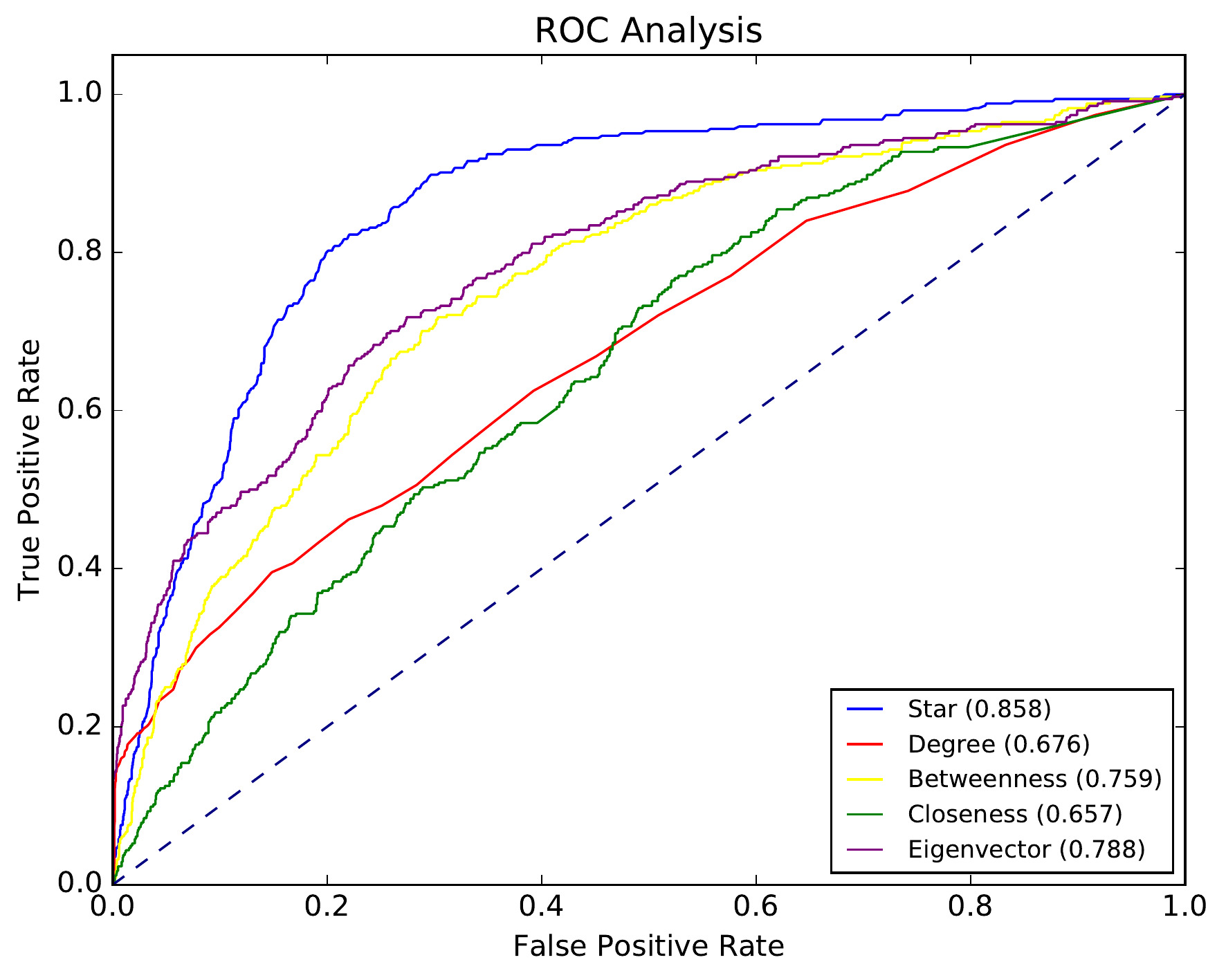}  \includegraphics[width=0.33\textwidth]{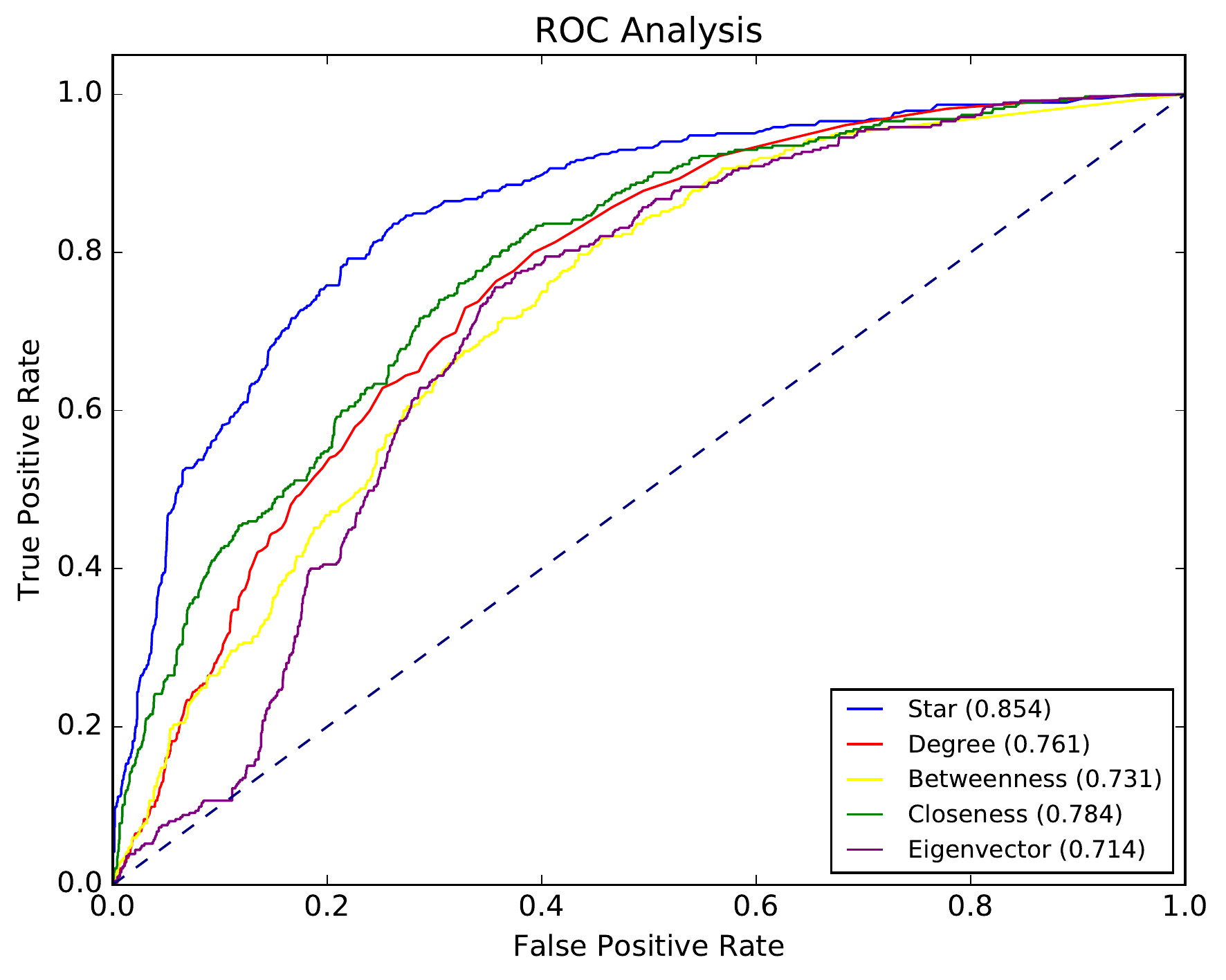}}
	\caption{The Receiver Operating Characteristic curves for each metric for the \emph{Salmonella enterica CT18} (yeast) and the {\em Caenorhabditis elegans} organisms. Both ROC curves are obtained for a threshold of 60\%. \label{ROC2}}
\end{figure}

The remainder of the Appendix presents all other experimental configurations between all five organisms and the remaining thresholds. The results are shown in Figures \ref{SCerevisiaeTop_700}-\ref{CElegansBottom_800} (for the top and bottom $k$ proteins analysis), followed by the receiver operation characteristic curves in Figures \ref{ROC7001}-\ref{ROC8002}.

 \begin{figure}
	\begin{minipage}{0.45\textwidth}
  	\tiny
	{\includegraphics{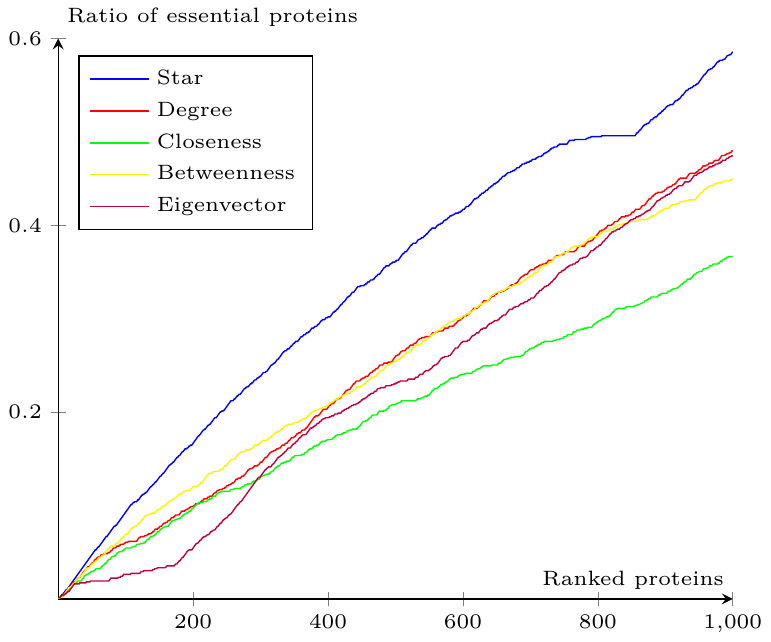}}
	\caption{The ratio of essential proteins detected in the ranked \emph{top k} proteins according to each metric for the \emph{Saccharomyces cerevisiae} organism (yeast) when a threshold of 70\% was used. \label{SCerevisiaeTop_700}}
  \end{minipage} ~~~%
  \begin{minipage}{0.45\textwidth}
\tiny
	{\includegraphics{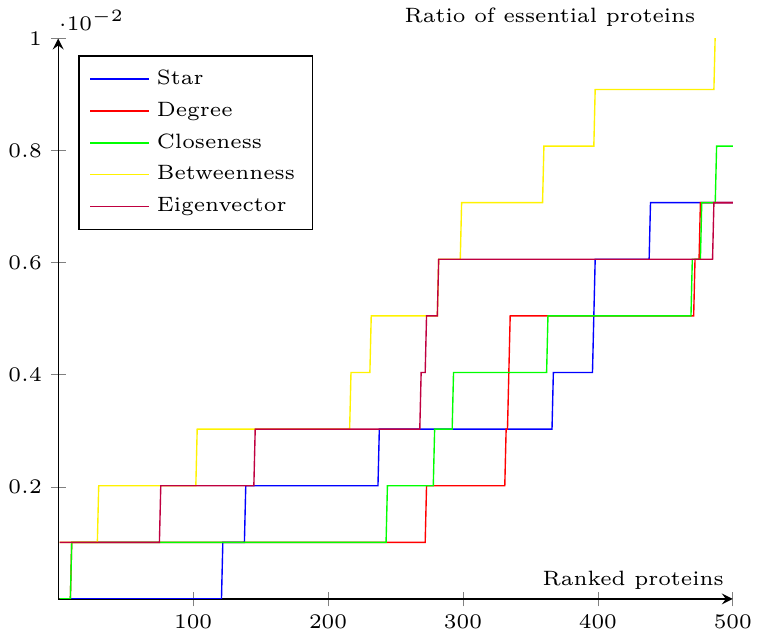}}
	\caption{The ratio of essential proteins detected in the ranked \emph{bottom k} proteins according to each metric for the \emph{Saccharomyces cerevisiae} organism (yeast) when a threshold of 70\% was used. \label{SCerevisiaeBottom_700}}
  \end{minipage}
\end{figure}

\begin{figure}
	\begin{minipage}{0.45\textwidth}
  	\tiny
	{\includegraphics{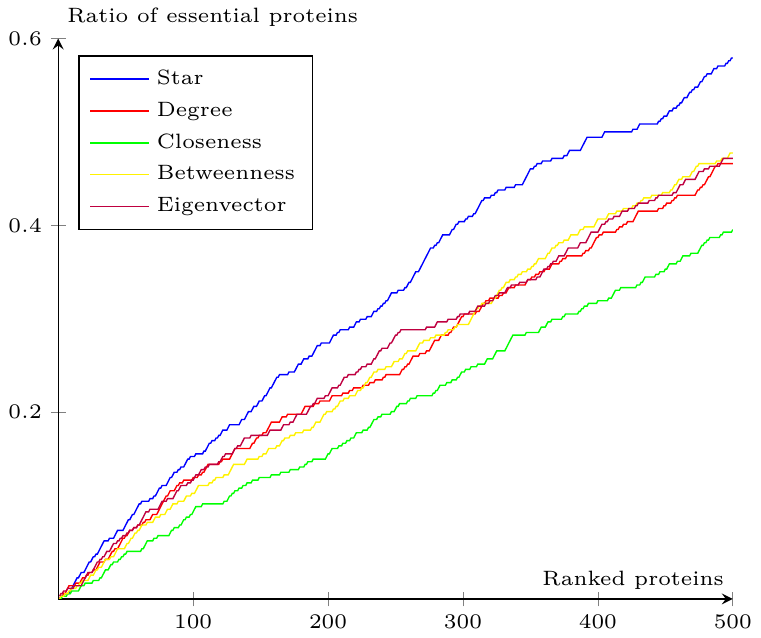}}
	\caption{The ratio of essential proteins detected in the ranked \emph{top k} proteins according to each metric for the \emph{Helicobacter pylori} organism when a threshold of 70\% was used. \label{HPylorisTop_700}}
  \end{minipage} ~~~%
  \begin{minipage}{0.45\textwidth}
\tiny
	{\includegraphics{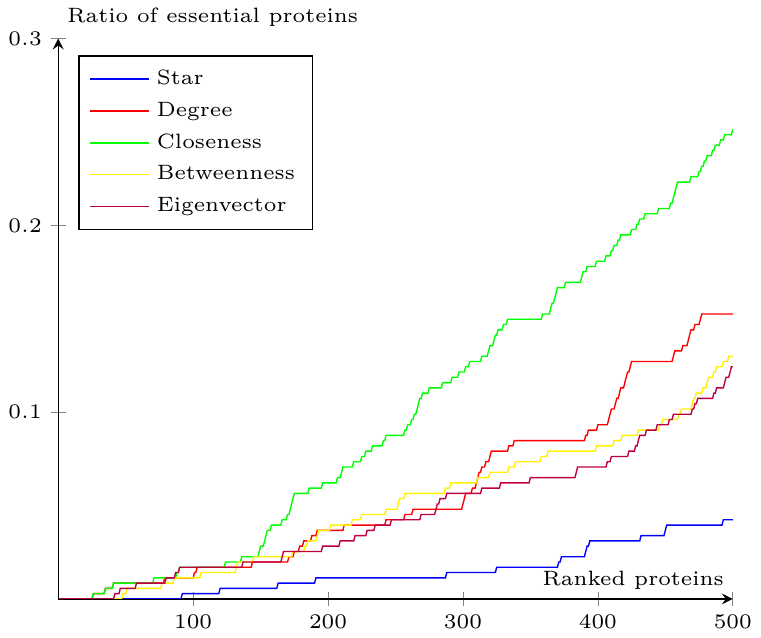}}
	\caption{The ratio of essential proteins detected in the ranked \emph{bottom k} proteins according to each metric for the \emph{Helicobacter pylori} organism when a threshold of 70\% was used. \label{HPylorisBottom_700}}
  \end{minipage}
\end{figure}

\begin{figure}
	\begin{minipage}{0.45\textwidth}
  	\tiny
	{\includegraphics{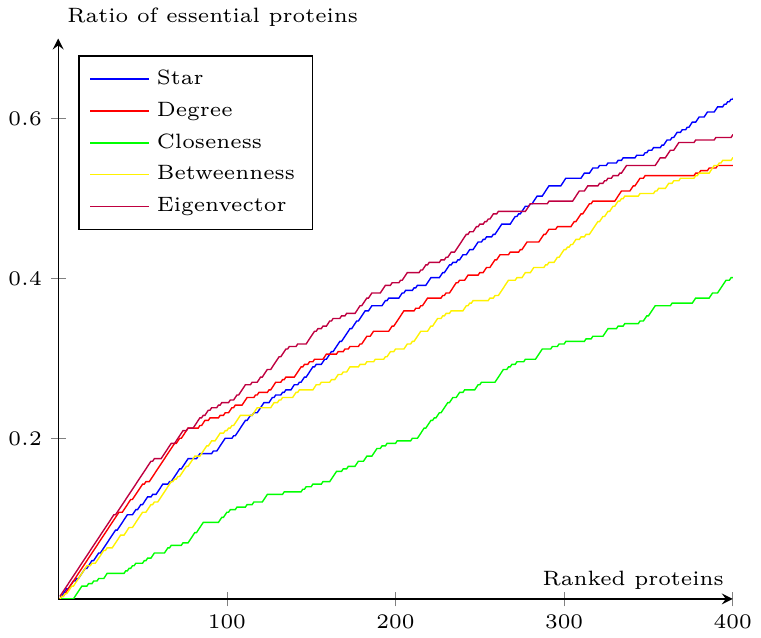}}
	\caption{The ratio of essential proteins detected in the ranked \emph{top k} proteins according to each metric for the \emph{Staphylococcus aureus} organism when a threshold of 70\% was used. \label{StaphTop_700}}
  \end{minipage} ~~~%
  \begin{minipage}{0.45\textwidth}
\tiny
	{\includegraphics{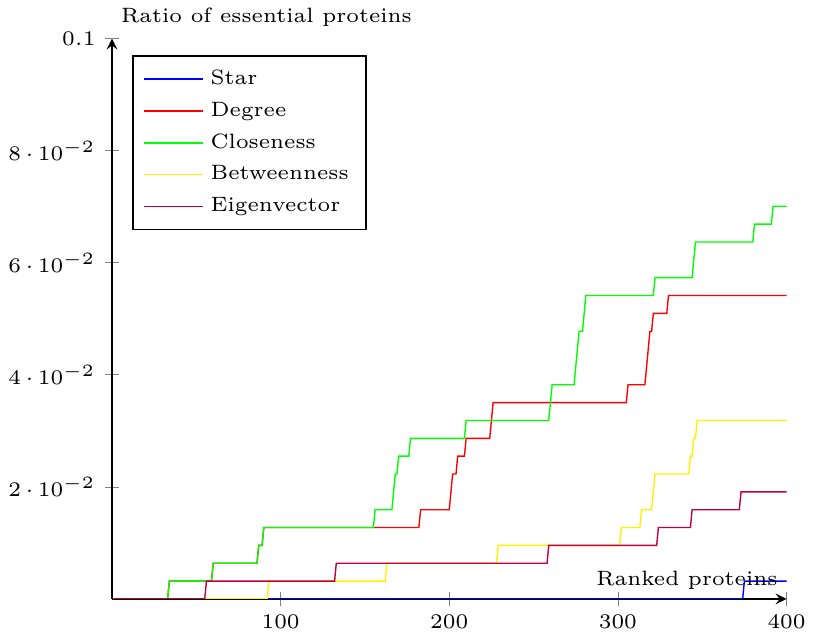}}
	\caption{The ratio of essential proteins detected in the ranked \emph{bottom k} proteins according to each metric for the \emph{Staphylococcus aureus} organism when a threshold of 70\% was used. \label{StaphBottom_700}}
  \end{minipage}
\end{figure}

\begin{figure}
	\begin{minipage}{0.45\textwidth}
  	\tiny
	{\includegraphics{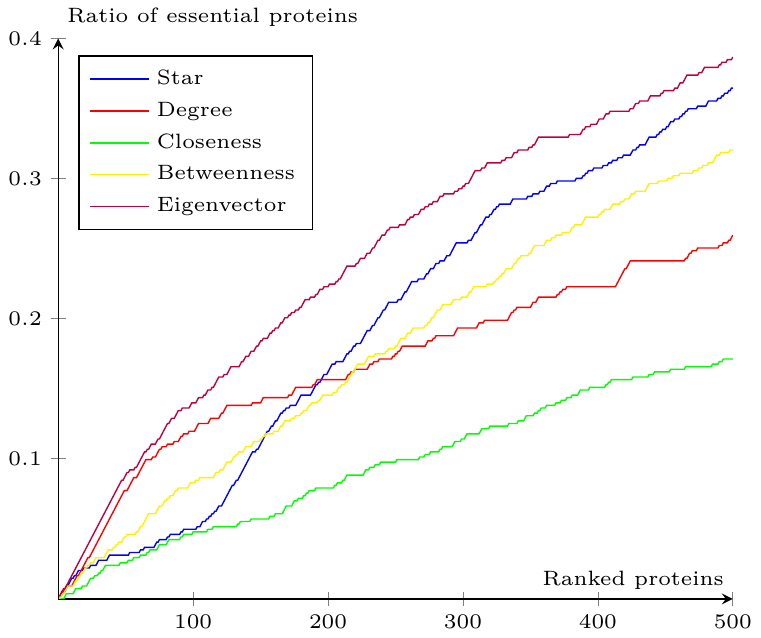}}
	\caption{The ratio of essential proteins detected in the ranked \emph{top k} proteins according to each metric for the \emph{Salmonella enterica CT17} organism when a threshold of 70\% was used. \label{SalmonellaTop_700}}
  \end{minipage} ~~~%
  \begin{minipage}{0.45\textwidth}
\tiny
	{\includegraphics{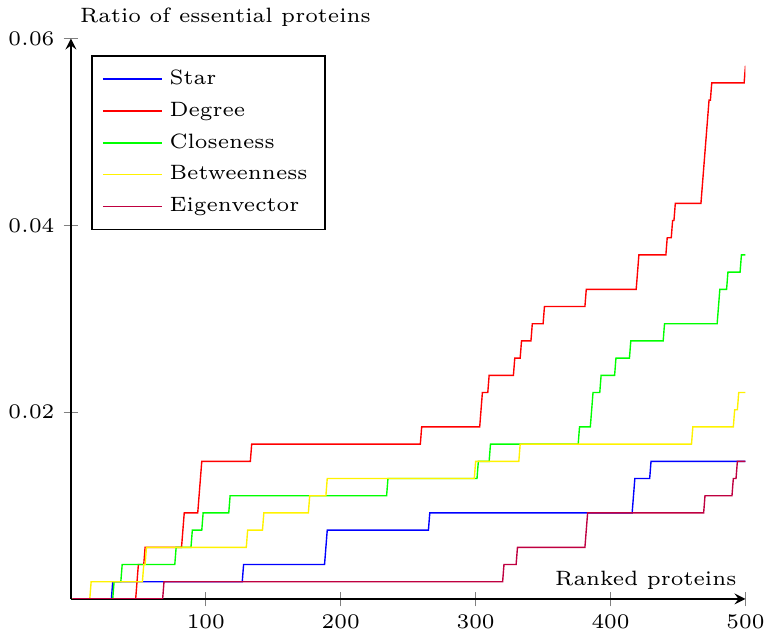}}
	\caption{The ratio of essential proteins detected in the ranked \emph{bottom k} proteins according to each metric for the \emph{Salmonella enterica CT17} organism when a threshold of 70\% was used. \label{SalmonellaBottom_700}}
  \end{minipage}
\end{figure}

\begin{figure}
	\begin{minipage}{0.45\textwidth}
  	\tiny
	{\includegraphics{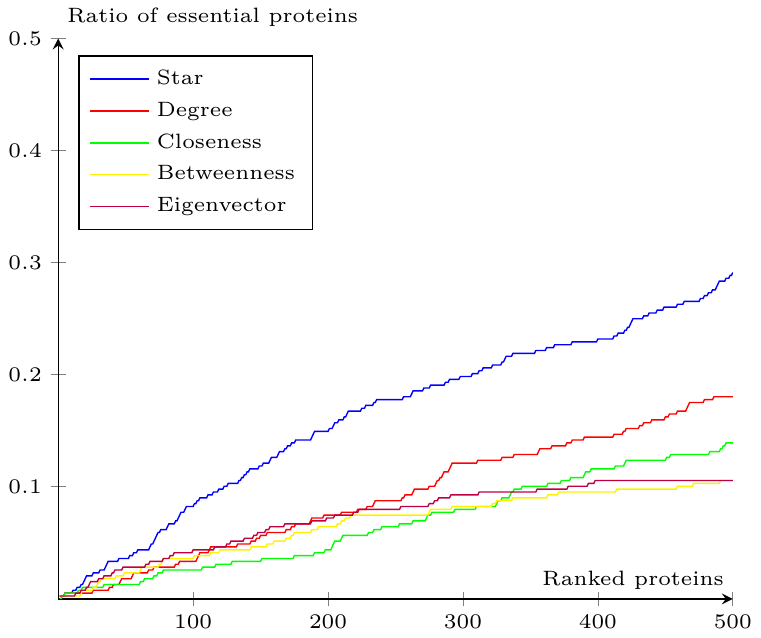}}
	\caption{The ratio of essential proteins detected in the ranked \emph{top k} proteins according to each metric for the \emph{C. Elegans} organism when a threshold of 70\% was used. \label{CElegansTop_700}}
  \end{minipage} ~~~%
  \begin{minipage}{0.45\textwidth}
\tiny
	{\includegraphics{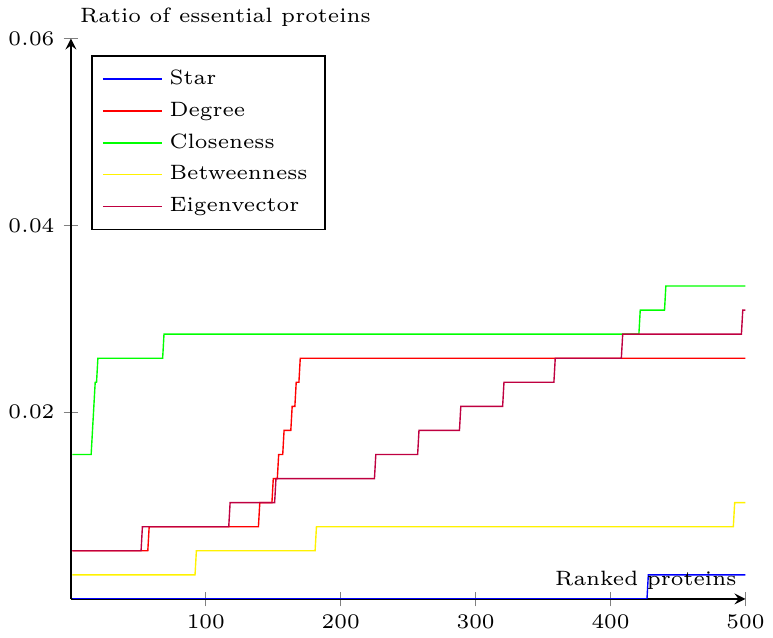}}
	\caption{The ratio of essential proteins detected in the ranked \emph{bottom k} proteins according to each metric for the \emph{C. Elegans} organism when a threshold of 70\% was used. \label{CElegansBottom_700}}
	{}
  \end{minipage}
\end{figure}

%\begin{figure}
%	\begin{minipage}{0.45\textwidth}
  %	\tiny
	%\FIGURE
%	{\includegraphics{HomoSapiensTop_700.pdf}}
%	{The ratio of essential proteins detected in the ranked \emph{top k} proteins according to each metric for the \emph{Homo sapiens} organism (human proteome) when a threshold of 70\% was used. \label{HomoSapiensTop_700}}
%	{}
%  \end{minipage} ~~~%
%  \begin{minipage}{0.45\textwidth}
%\tiny
%\FIGURE
%	{\includegraphics{HomoSapiensBottom_700.pdf}}
%	{The ratio of essential proteins detected in the ranked \emph{bottom k} proteins according to each metric for the \emph{Homo sapiens} organism (human proteome) when a threshold of 70\% was used. \label{HomoSapiensBottom_700}}
%	{}
%  \end{minipage}
%\end{figure}

 \begin{figure}
	\begin{minipage}{0.45\textwidth}
  	\tiny
	{\includegraphics{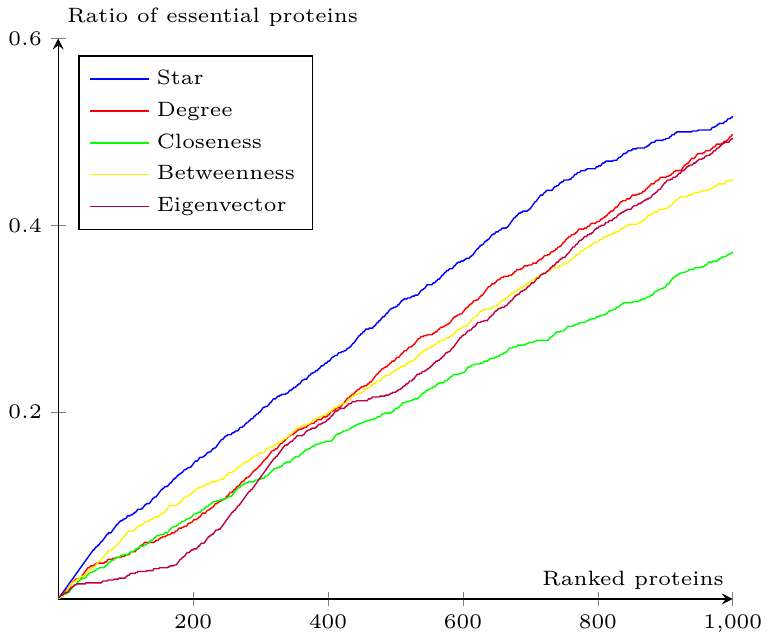}}
	\caption{The ratio of essential proteins detected in the ranked \emph{top k} proteins according to each metric for the \emph{Saccharomyces cerevisiae} organism (yeast) when a threshold of 80\% was used. \label{SCerevisiaeTop_800}}
  \end{minipage} ~~~%
  \begin{minipage}{0.45\textwidth}
\tiny
	{\includegraphics{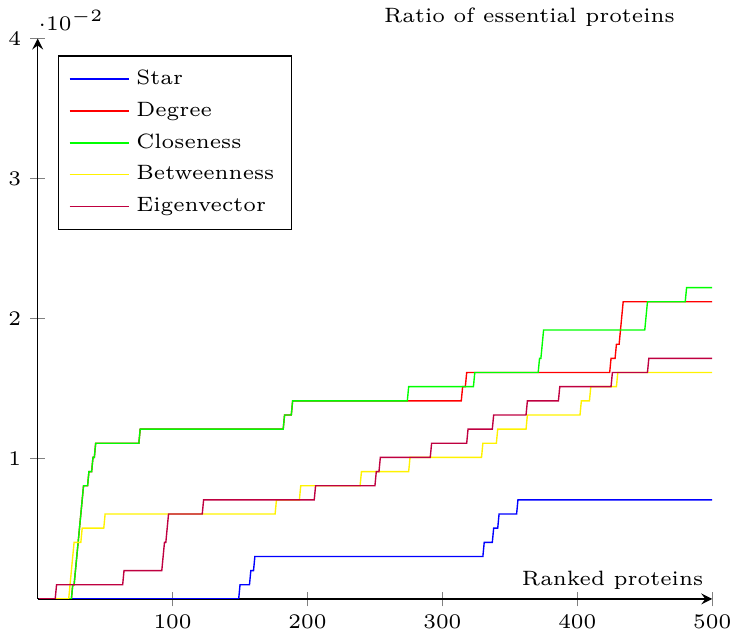}}
	\caption{The ratio of essential proteins detected in the ranked \emph{bottom k} proteins according to each metric for the \emph{Saccharomyces cerevisiae} organism (yeast) when a threshold of 80\% was used. \label{SCerevisiaeBottom_800}}
  \end{minipage}
\end{figure}

\begin{figure}
	\begin{minipage}{0.45\textwidth}
  	\tiny
	{\includegraphics{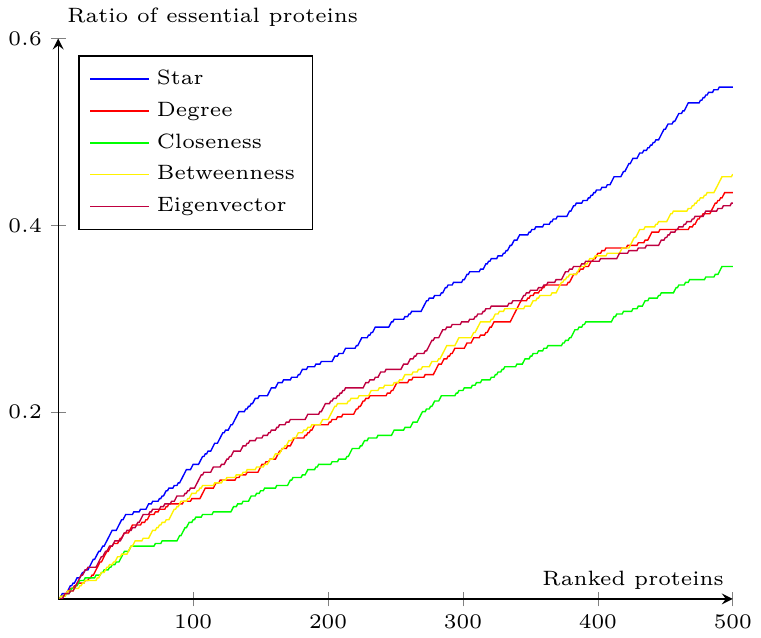}}
	\caption{The ratio of essential proteins detected in the ranked \emph{top k} proteins according to each metric for the \emph{Helicobacter pylori} organism when a threshold of 80\% was used. \label{HPylorisTop_800}}
  \end{minipage} ~~~%
  \begin{minipage}{0.45\textwidth}
\tiny
	{\includegraphics{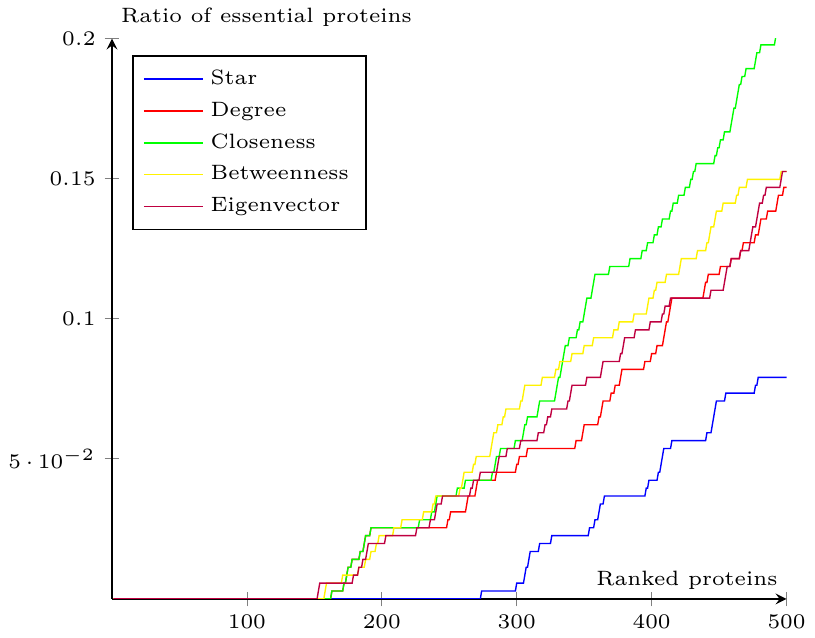}}
	\caption{The ratio of essential proteins detected in the ranked \emph{bottom k} proteins according to each metric for the \emph{Helicobacter pylori} organism when a threshold of 80\% was used. \label{HPylorisBottom_800}}
  \end{minipage}
\end{figure}

\begin{figure}
	\begin{minipage}{0.45\textwidth}
  	\tiny
	{\includegraphics{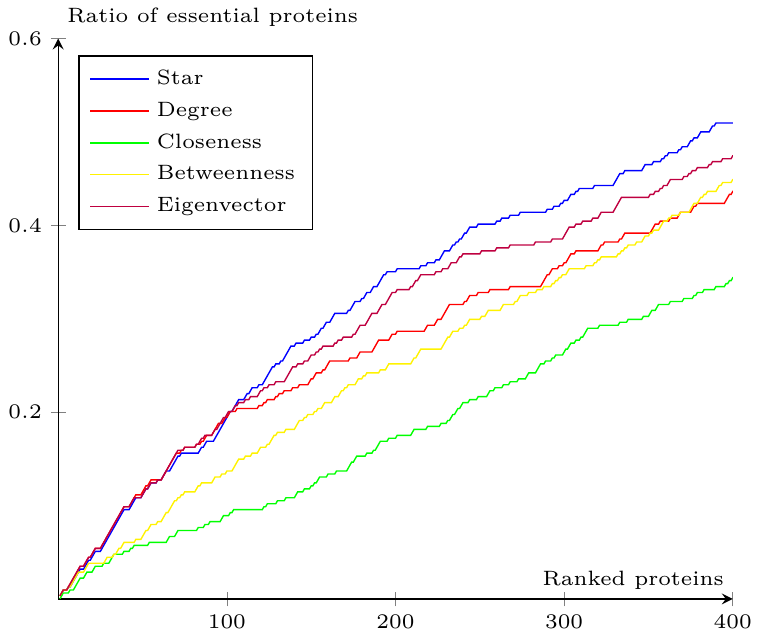}}
	\caption{The ratio of essential proteins detected in the ranked \emph{top k} proteins according to each metric for the \emph{Staphylococcus aureus} organism when a threshold of 80\% was used. \label{StaphTop_800}}
  \end{minipage} ~~~%
  \begin{minipage}{0.45\textwidth}
\tiny
	{\includegraphics{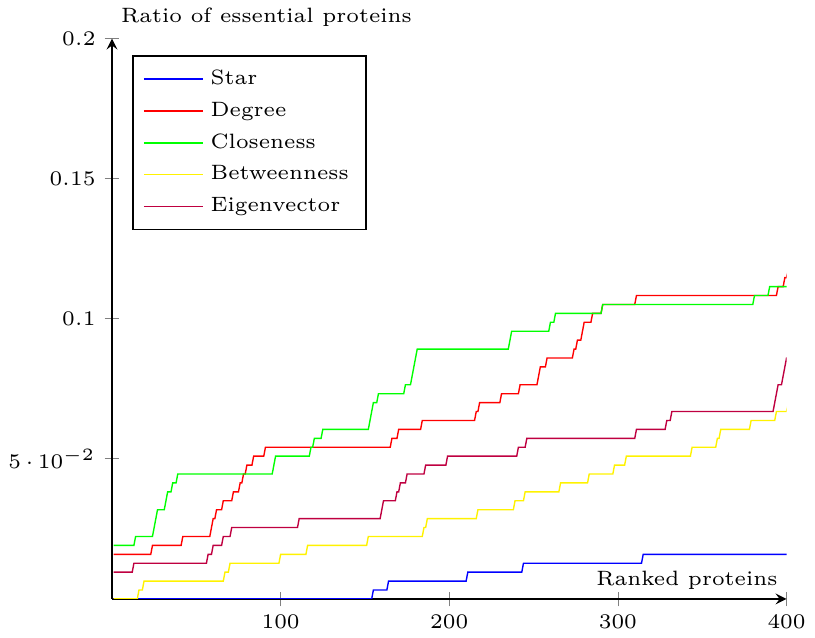}}
	\caption{The ratio of essential proteins detected in the ranked \emph{bottom k} proteins according to each metric for the \emph{Staphylococcus aureus} organism when a threshold of 80\% was used. \label{StaphBottom_800}}
  \end{minipage}
\end{figure}

\begin{figure}
	\begin{minipage}{0.45\textwidth}
  	\tiny
	{\includegraphics{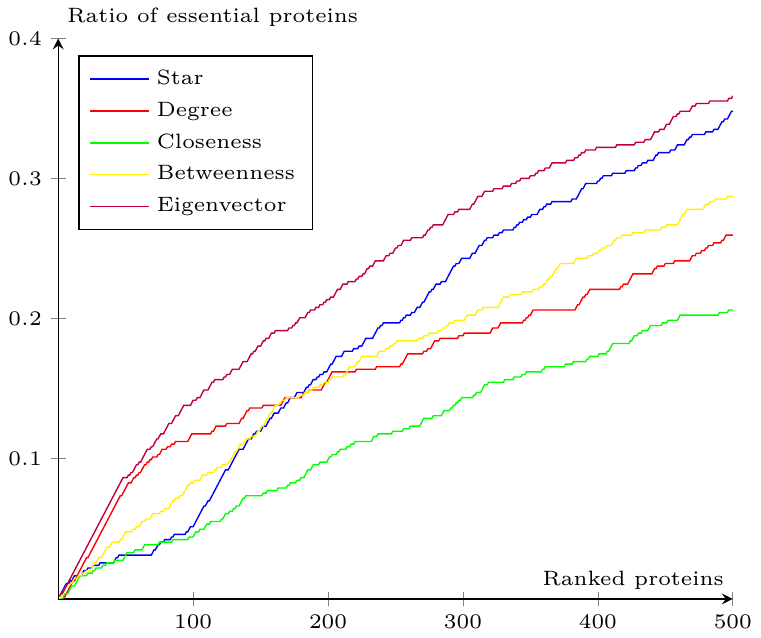}}
	\caption{The ratio of essential proteins detected in the ranked \emph{top k} proteins according to each metric for the \emph{Salmonella enterica CT18} organism when a threshold of 80\% was used. \label{SalmonellaTop_800}}
  \end{minipage} ~~~%
  \begin{minipage}{0.45\textwidth}
\tiny
	{\includegraphics{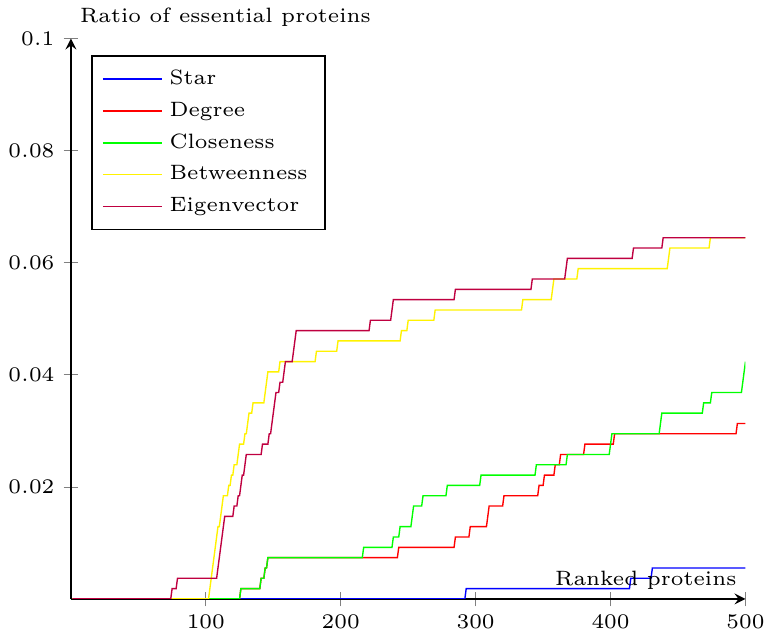}}
	\caption{The ratio of essential proteins detected in the ranked \emph{bottom k} proteins according to each metric for the \emph{Salmonella enterica CT18} organism when a threshold of 80\% was used. \label{SalmonellaBottom_800}}
  \end{minipage}
\end{figure}

\begin{figure}
	\begin{minipage}{0.45\textwidth}
  	\tiny
	{\includegraphics{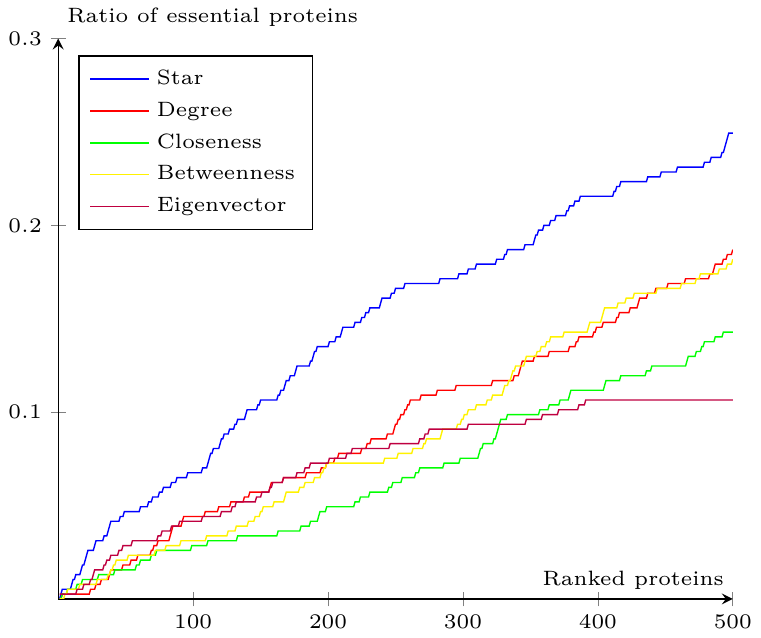}}
	\caption{The ratio of essential proteins detected in the ranked \emph{top k} proteins according to each metric for the \emph{C. Elegans} organism when a threshold of 80\% was used. \label{CElegansTop_800}}
  \end{minipage} ~~~%
  \begin{minipage}{0.45\textwidth}
\tiny
	{\includegraphics{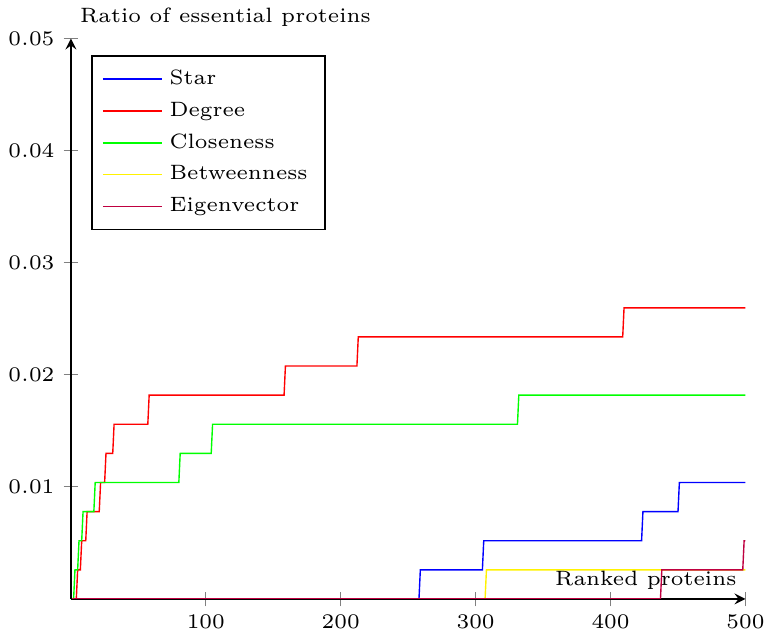}}
	\caption{The ratio of essential proteins detected in the ranked \emph{bottom k} proteins according to each metric for the \emph{C. Elegans} organism when a threshold of 80\% was used. \label{CElegansBottom_800}}
  \end{minipage}
\end{figure}

%\begin{figure}
%	\begin{minipage}{0.45\textwidth}
%  	\tiny
%	\FIGURE
%	{\includegraphics{HomoSapiensTop_800.pdf}}
%	{The ratio of essential proteins detected in the ranked \emph{top k} proteins according to each metric %for the \emph{Homo Sapiens} organism (human proteome) when a threshold of 80\% was used. %\label{HomoSapiensTop_800}}
%	{}
%  \end{minipage} ~~~%
%  \begin{minipage}{0.45\textwidth}
%\tiny
%\FIGURE
%	{\includegraphics{HomoSapiensBottom_800.pdf}}
%	{The ratio of essential proteins detected in the ranked \emph{bottom k} proteins according to each %metric for the \emph{Homo sapiens} organism (human proteome) when a threshold of 80\% was used. %\label{HomoSapiensBottom_800}}
%	{}
%  \end{minipage}
%\end{figure}

 %%%%%%%%%%%%%%%
%%% ROC CURVES %%%%
%%%%%%%%%%%%%%%

\begin{figure}
	\tiny
	\centering
	{\includegraphics[width=0.33\textwidth]{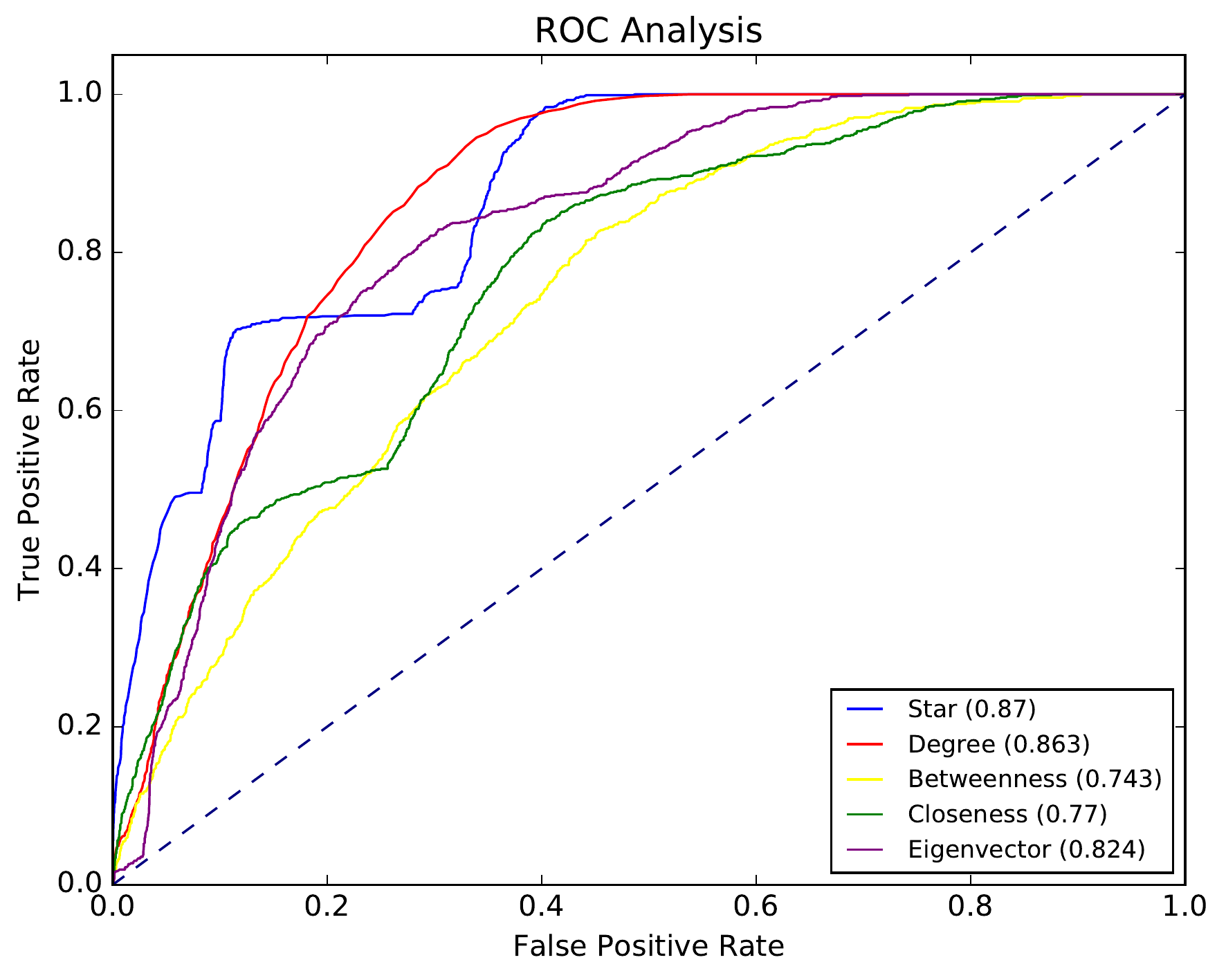}  \includegraphics[width=0.33\textwidth]{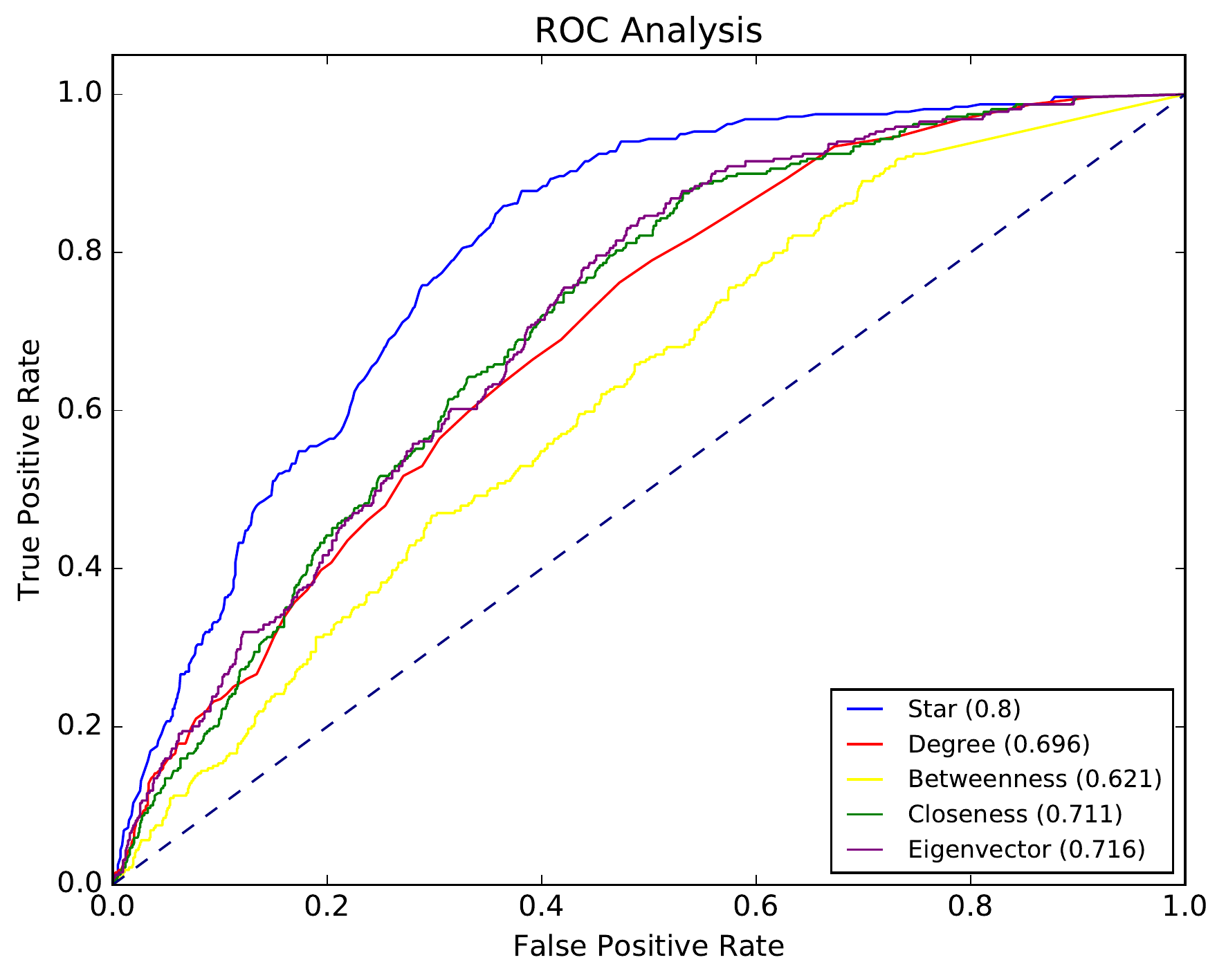}  \includegraphics[width=0.33\textwidth]{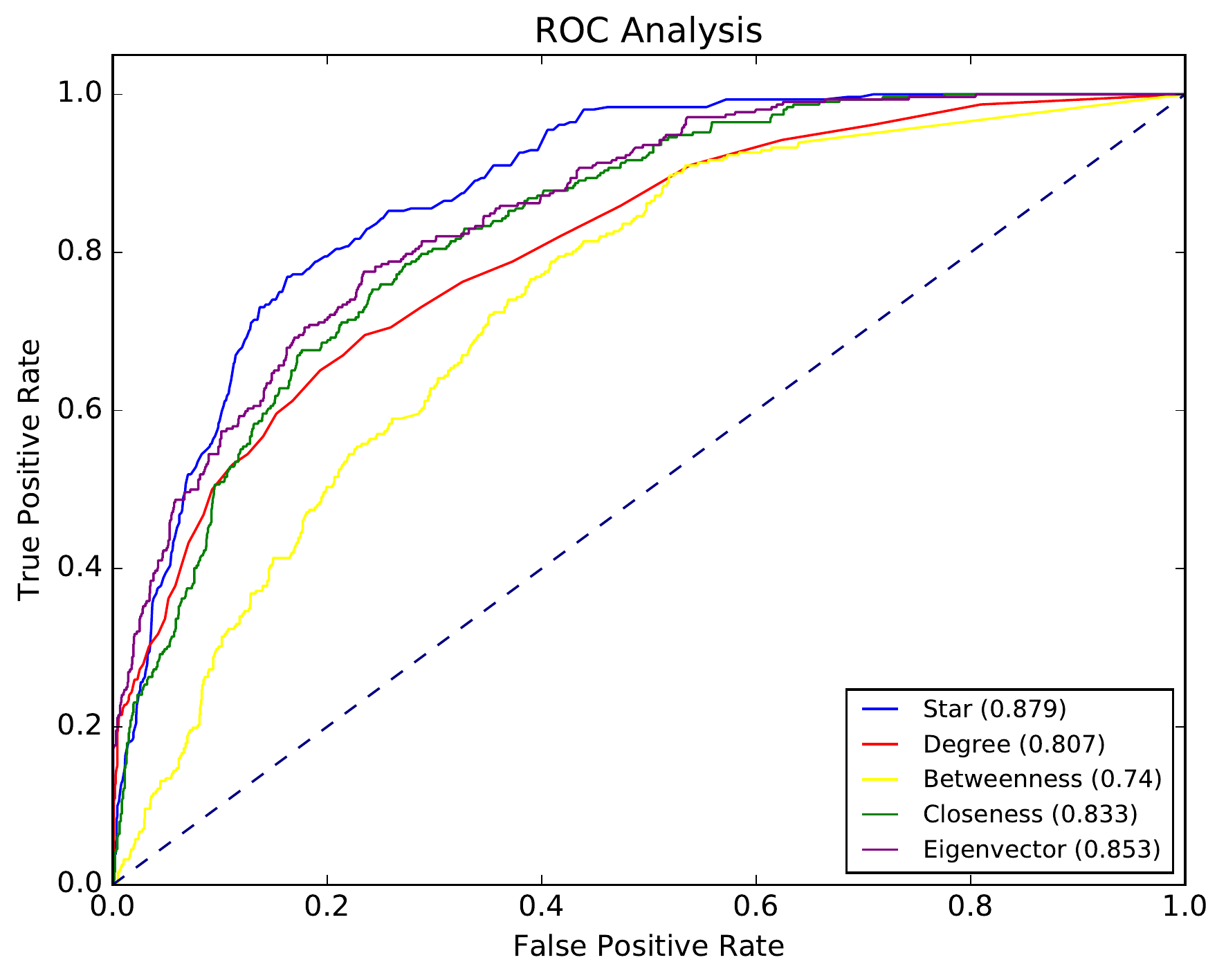}}
	\caption{The Receiver Operating Characteristic curves for each metric for the \emph{Saccharomyces cerevisiae} (yeast), the {\em Helicobacter pylori}, and the {\em Staphylococcus aureus} organisms with a threshold of 70\%. \label{ROC7001}}
\end{figure}

\begin{figure}
	\tiny
	\centering
	{\includegraphics[width=0.33\textwidth]{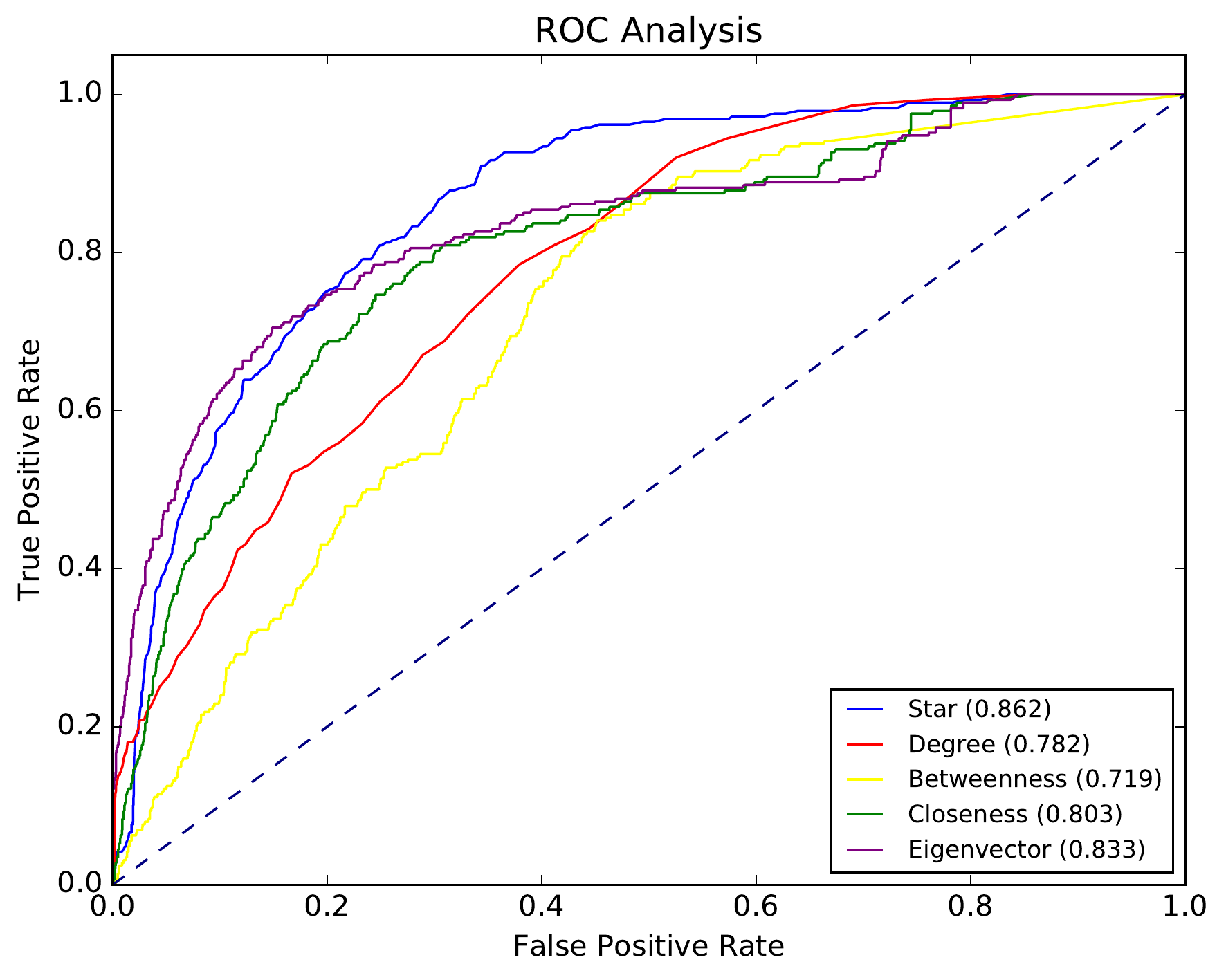}  \includegraphics[width=0.33\textwidth]{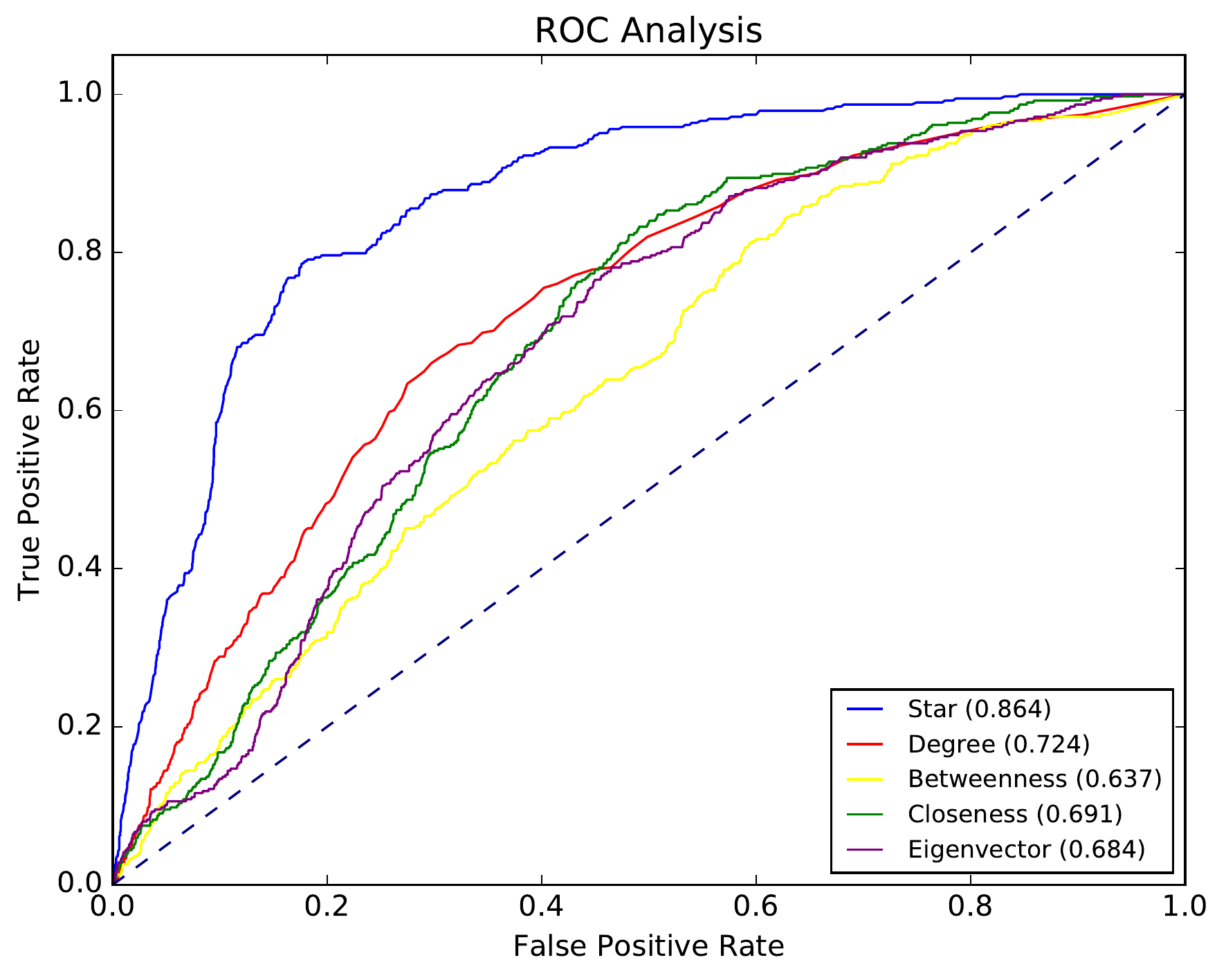}  %\includegraphics[width=0.33\textwidth]{ROCHomoSapiens_700.pdf}}
	}
	\caption{The Receiver Operating Characteristic curves for each metric for the \emph{Salmonella enterica CT18} (yeast) and the {\em Caenorhabditis elegans} organisms with a threshold of 70\%. \label{ROC7002}}
\end{figure}

\begin{figure}
	\tiny
	\centering
	{\includegraphics[width=0.33\textwidth]{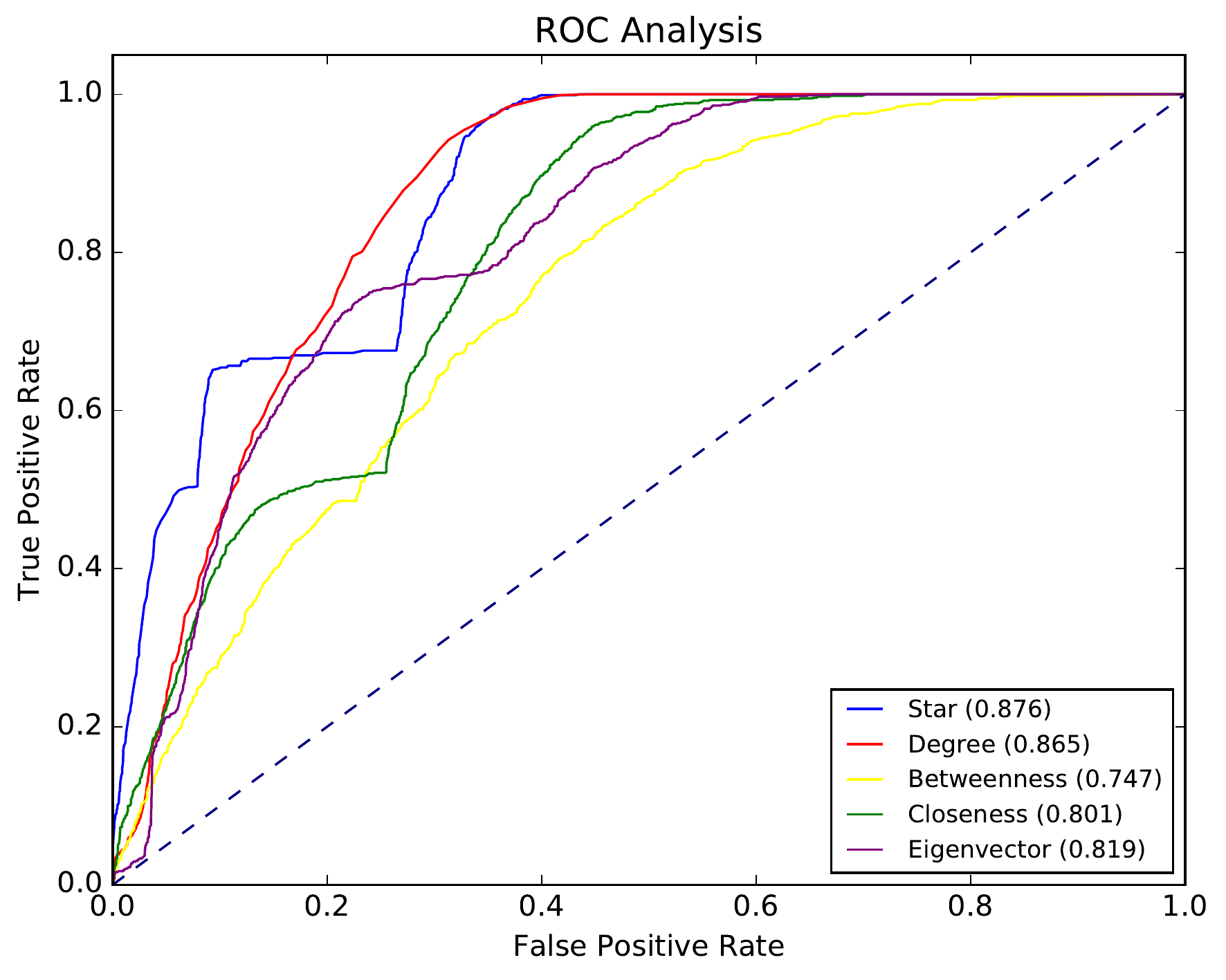}  \includegraphics[width=0.33\textwidth]{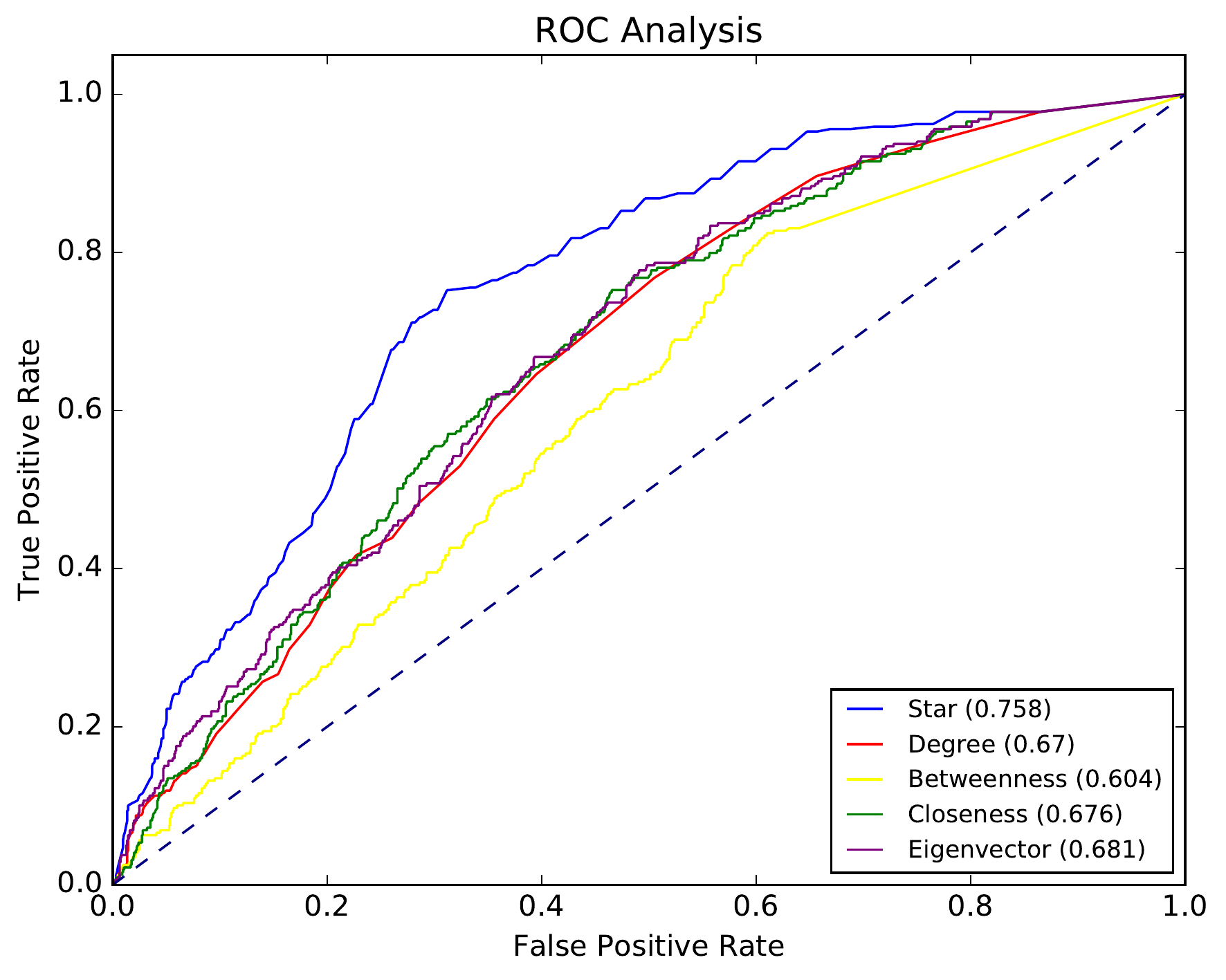}  \includegraphics[width=0.33\textwidth]{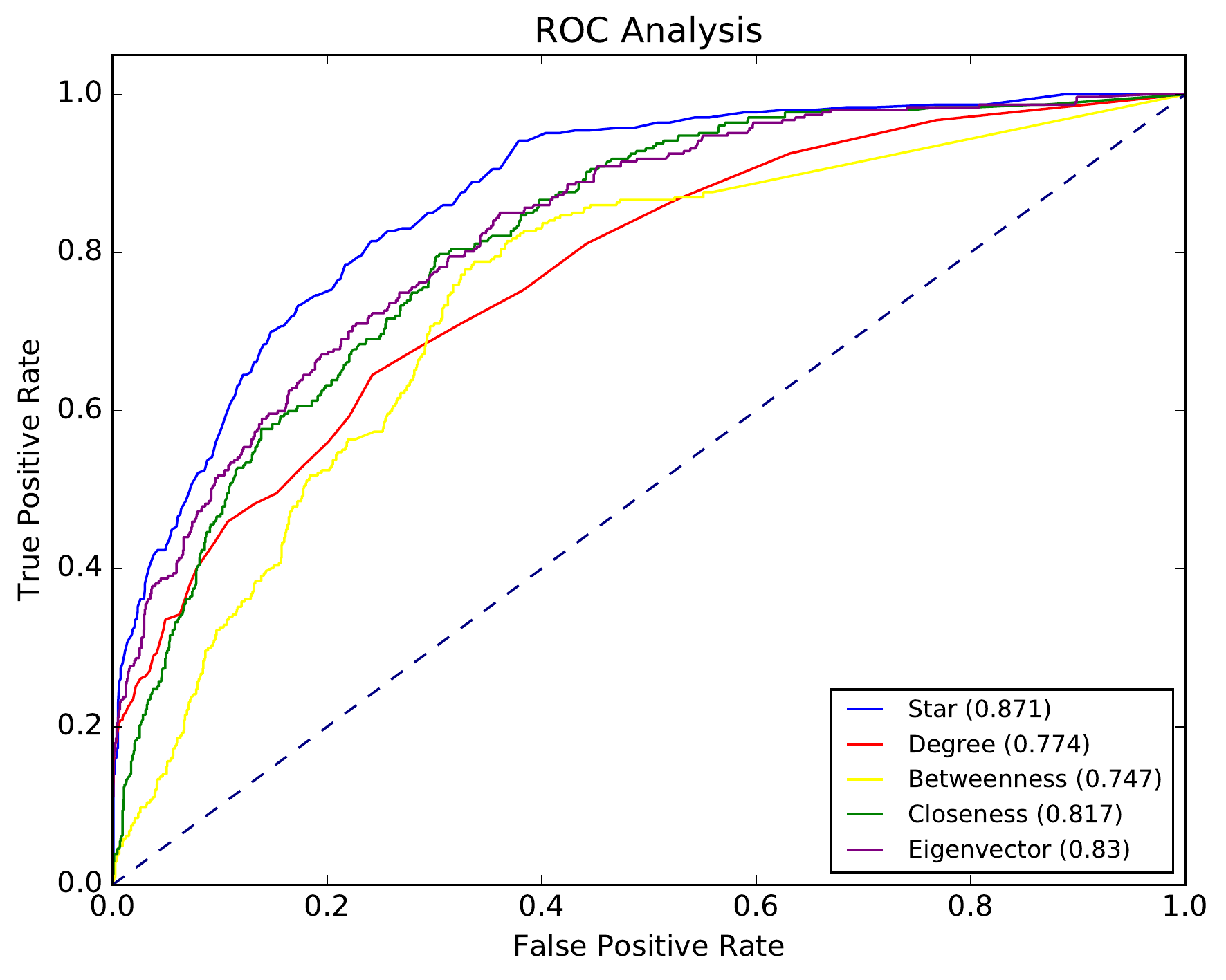}}
	\caption{The Receiver Operating Characteristic curves for each metric for the \emph{Saccharomyces cerevisiae} (yeast), the {\em Helicobacter pylori}, and the {\em Staphylococcus aureus} organisms with a threshold of 80\%. \label{ROC8001}}
\end{figure}

\begin{figure}
	\tiny
	\centering
	{\includegraphics[width=0.33\textwidth]{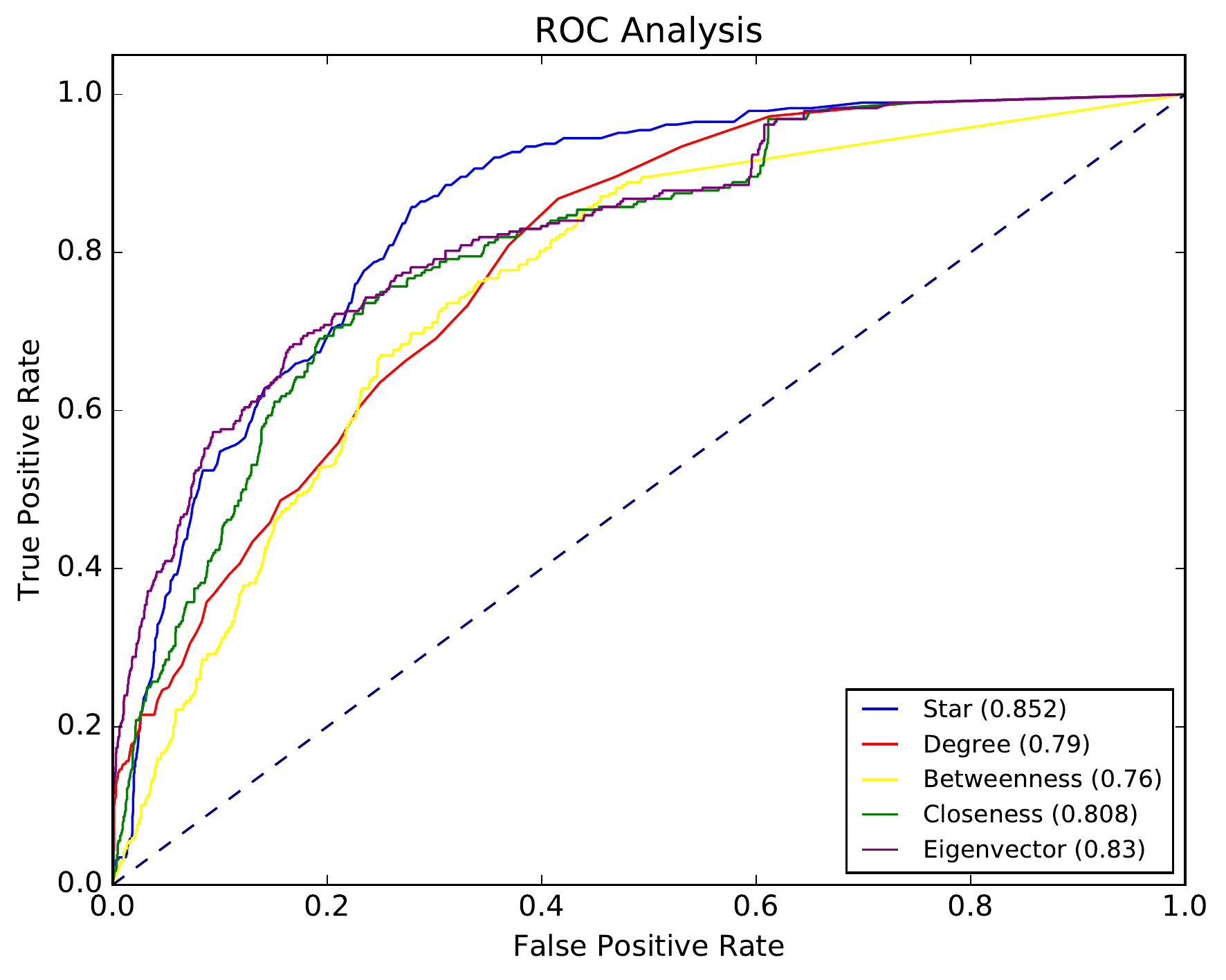}  \includegraphics[width=0.33\textwidth]{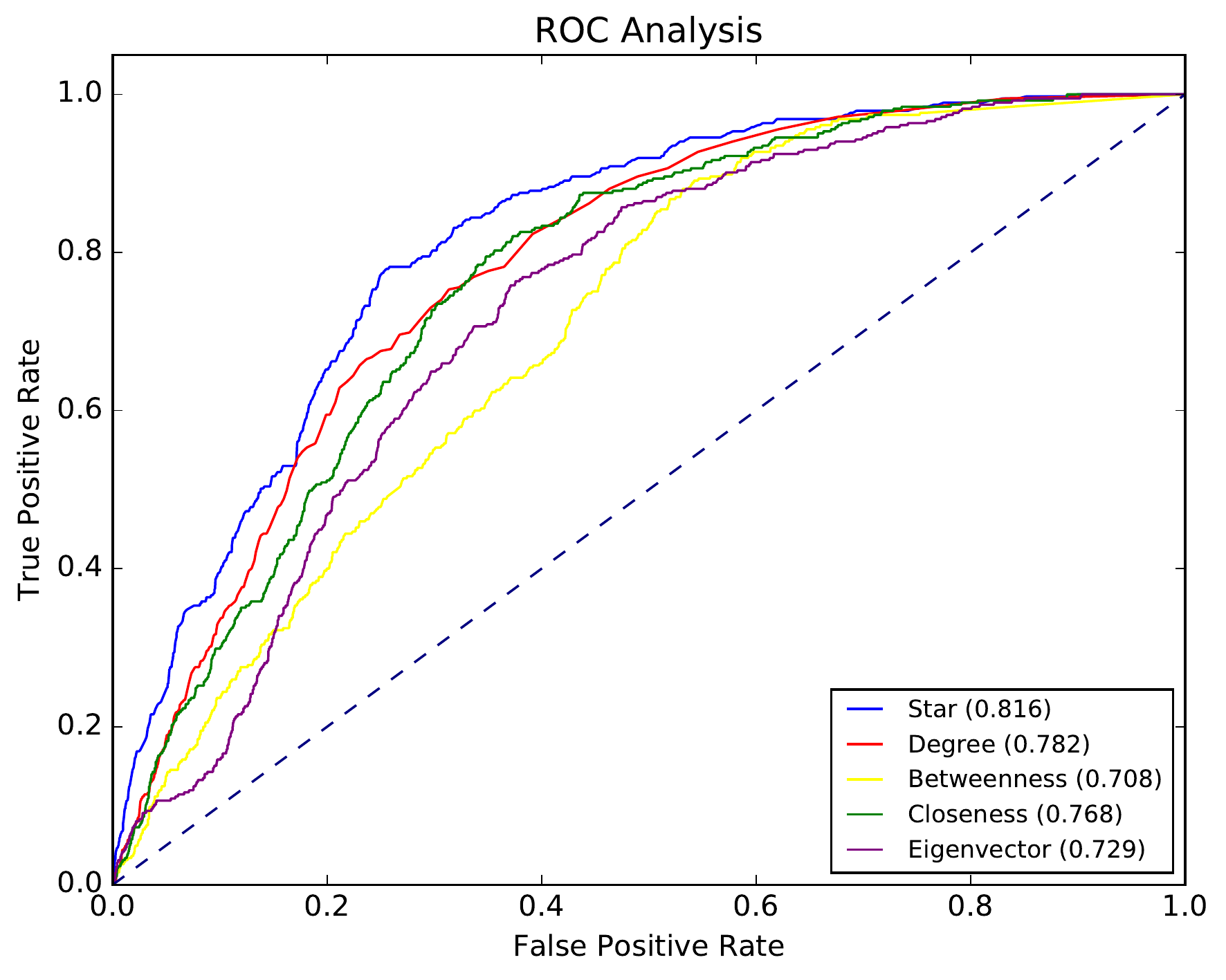}  %\includegraphics[width=0.33\textwidth]{ROCHomoSapiens_800.pdf}}
	}
	\caption{The Receiver Operating Characteristic curves for each metric for the \emph{Salmonella enterica CT18} (yeast) and the {\em Caenorhabditis elegans} organisms with a threshold of 80\%. \label{ROC8002}}
\end{figure}

%%%%%%%%%%%%%%%
%%% ROC CURVES %%%%
%%%%%%%%%%%%%%%
 
%
%   or 
%
% \begin{APPENDICES}
% \section{<Title of Section A>}
% \section{<Title of Section B>}
% etc
% \end{APPENDICES}

% References here (outcomment the appropriate case) 

% CASE 1: BiBTeX used to constantly update the references 
%   (while the paper is being written).
%\bibliographystyle{plainnat} % outcomment this and next line in Case 1
\bibliography{bibliography} % if more than one, comma separated

% CASE 2: BiBTeX used to generate mypaper.bbl (to be further fine tuned)
%\input{mypaper.bbl} % outcomment this line in Case 2

\end{document}